\setlist[enumerate]{leftmargin=.5in}
\setlist[itemize]{leftmargin=.5in}
\crefname{hypothesis}{Hypothesis}{Hypotheses}
\crefname{subsection}{Section}{Sections}
\title{Nonbacktracking Spectral Clustering of Nonuniform Hypergraphs}
\author{
    Philip S. Chodrow\thanks{Department of Computer Science, Middlebury College, Middlebury, VT (\href{mailto:pchodrow@middlebury.edu}{pchodrow@middlebury.edu})}\and 
    Nicole Eikmeier\thanks{Department of Computer Science, Grinnell College, Grinnell, IA}\and
    Jamie Haddock\thanks{Department of Mathematics, Harvey Mudd College, Claremont, CA\funding{JH is partially supported by NSF DMS \#2211318}}
    }
\DeclareMathOperator{\diag}{diag}
\newcommand*{\addFileDependency}[1]{
  \typeout{(#1)}
  \@addtofilelist{#1}
  \IfFileExists{#1}{}{\typeout{No file #1.}}
}
\newcommand*{\myexternaldocument}[1]{%
    \externaldocument{#1}%
    \addFileDependency{#1.tex}%
    \addFileDependency{#1.aux}%
}
\newcommand{\mA}{\mathbf{A}}
\newcommand{\bA}{\mathbf{A}}
\newcommand{\mD}{\mathbf{D}}
\newcommand{\bD}{\mathbf{D}}
\newcommand{\mB}{\mathbf{B}}
\newcommand{\bB}{\mathbf{B}}
\newcommand{\vc}{\mathbf{c}}
\newcommand{\mI}{\mathbf{I}}
\newcommand{\mM}{\mathbf{M}}
\newcommand{\bM}{\mathbf{M}}
\newcommand{\bL}{\mathbf{L}}
\newcommand{\bu}{\mathbf{u}}
\newcommand{\mE}{\mathbf{E}}
\newcommand{\mW}{\mathbf{W}}
\newcommand{\bW}{\mathbf{W}}
\newcommand{\mS}{\mathbf{S}}
\newcommand{\bS}{\mathbf{S}}
\newcommand{\mT}{\mathbf{T}}
\newcommand{\mP}{\mathbf{P}}
\newcommand{\bT}{\mathbf{T}}
\newcommand{\bJ}{\mathbf{J}}
\newcommand{\mG}{\mathbf{G}}
\newcommand{\mF}{\mathbf{F}}
\newcommand{\mC}{\mathbf{C}}
\newcommand{\mJ}{\mathbf{J}}
\newcommand{\mX}{\mathbf{X}}
\newcommand{\mY}{\mathbf{Y}}
\newcommand{\mZ}{\mathbf{Z}}
\newcommand{\mL}{\mathbf{L}}
\newcommand{\mK}{\mathbf{K}}
\newcommand{\mH}{\mathbf{H}}
\newcommand{\vmu}{\boldsymbol{\mu}}
\newcommand{\cZ}{\mathcal{Z}}
\newcommand{\cE}{\mathcal{E}}
\newcommand{\cN}{\mathcal{N}}
\newcommand{\cG}{\mathcal{G}}
\newcommand{\cP}{\mathcal{P}}
\newcommand{\cH}{\mathcal{H}}
\newcommand{\cM}{\mathcal{M}}
\newcommand{\cR}{\mathcal{R}}
\newcommand{\vz}{\mathbf{z}}
\newcommand{\vx}{\mathbf{x}}
\newcommand{\vy}{\mathbf{y}}
\newcommand{\vv}{\mathbf{v}}
\newcommand{\vu}{\mathbf{u}}
\newcommand{\ve}{\mathbf{e}}
\newcommand{\vzero}{\mathbf{0}} 
\newcommand{\abs}[1]{\lvert #1 \rvert}
\newcommand{\norm}[1]{\lVert #1 \rVert}
\newcommand{\vsigma}{\boldsymbol{\sigma}}
\newcommand{\vtheta}{\boldsymbol{\theta}}
\newcommand{\vq}{\mathbf{q}}
\newcommand{\bracket}[1]{\langle #1 \rangle}
\newcommand{\q}[1]{q^{(#1)}}
\newcommand{\qhat}[1]{\hat{q}^{(#1)}}
\newcommand{\ck}[2]{%
    \ifstrempty{#2}{c_{#1}}{c_{#1}^{(#2)}}%
    }
\newcommand{\gk}[2]{%
    \ifstrempty{#2}{g_{#1}}{g_{#1}^{(#2)}}%
    }
\newcommand{\mk}[2]{%
    \ifstrempty{#2}{m_{#1}}{m_{#1}^{(#2)}}%
    }
\newcommand{\ckin}{
    c_k^{\mathrm{in}}
}
\newcommand{\ckout}{
    c_k^{\mathrm{out}}
}
\newcommand{\gkin}{
    g_k^{\mathrm{in}}
}
\newcommand{\gkout}{
    g_k^{\mathrm{out}}
}
\newcommand{\E}{\mathbb{E}}
\newcommand{\R}{\mathbb{R}}
\newcommand{\vepsilon}{\boldsymbol{\epsilon}}
\newcommand{\mPi}{\boldsymbol{\Pi}}
\newcommand{\HG}{\cH}
\newcommand{\G}{\cG}
\newcommand{\nodes}{\cN}
\newcommand{\edges}{\cE}
\newcommand{\tuples}{\cR}
\newcommand{\alphabet}{\cZ}
\newcommand{\labelvec}{\vz}
\newcommand{\mess}[3]{\mu_{#1#2}^{(#3)}}
\newcommand{\barmess}[3]{\bar{\mu}_{#1#2}^{(#3)}}
\newcommand{\varmess}[3]{\nu_{#2#1}^{(#3)}}
\newcommand{\pmess}[3]{\epsilon_{#1#2}^{(#3)}}
\newcommand{\vmess}{\vmu}
\newcommand{\vpmess}{\vepsilon}
\newcommand{\messcoords}[4]{#1_{#2#3}^{(#4)}}
\newcommand{\maps}[2]{%
    \ifstrempty{#2}{L(#1)}{L(#1,#2)}%
}
\newcommand{\vectorspace}[1]{V(#1)}
\newcommand{\kmax}{\bar{k}}
\newcommand{\Kset}{K}
\newcommand{\pointed}[1]{\vec{#1}}
\newcommand{\indicator}[1]{\mathbbm{1}[#1]}
\newcounter{jh}
\definecolor{pc_comment}{RGB}{120,205,205}
\newcounter{pc}
\definecolor{nee_comment}{RGB}{144,238,144}
\newcounter{nee}
\newcommand{\datahighschool}{%
    \texttt{contact-high-school}%
    }
\newcommand{\dataprimaryschool}{\texttt{contact-primary-school}}
\newcommand{\datasenate}{\texttt{senate-bills}}
\newcommand{\oursimplealg}{NBHSC}
\newcommand{\oursimplealgorithm}{nonbacktracking hypergraph spectral clustering}
\newcommand{\OurSimpleAlgorithm}{Nonbacktracking Hypergraph Spectral Clustering}
\newcommand{\ouralg}{BPHSC}
\newcommand{\ouralgorithm}{belief-propagation hypergraph spectral clustering}
\newcommand{\OurAlgorithm}{Belief-Propagation Hypergraph Spectral Clustering}
\newcommand{\graphsimplealgorithm}{nonbacktracking spectral clustering}
\newcommand{\graphsimplealg}{NBSC}
\newcommand{\graphalg}{BPPGSC}
\newcommand{\graphalgorithm}{belief-propagation projected graph spectral clustering}
\newcommand{\eqdef}{\coloneqq}
\DeclarePairedDelimiter{\paren}{(}{)}
\DeclarePairedDelimiter{\sqbracket}{[}{]}
\DeclarePairedDelimiter{\curlybrace}{\{}{\}}
\Crefname{conj}{Conjecture}{Conjectures}
\newcommand{\revision}[1]{#1}
\newcommand{\rev}[1]{\revision{#1}} 
\begin{document}
   
\maketitle 

\begin{abstract}
    Spectral methods offer a tractable, global framework for clustering in graphs via eigenvector computations on graph matrices. 
    Hypergraph data, in which entities interact on edges of arbitrary size, poses challenges for matrix representations and therefore for spectral clustering. 
    We study spectral clustering for nonuniform hypergraphs based on the hypergraph nonbacktracking operator. 
    After reviewing the definition of this operator and its basic properties, we prove a theorem of Ihara-Bass type \revision{which allows eigenpair computations to take place on a smaller matrix, often enabling faster computation.} 
    We then propose an alternating algorithm for inference in a hypergraph stochastic blockmodel via linearized belief-propagation \revision{which involves a spectral clustering step again using nonbacktracking operators. 
    We provide} proofs \revision{related to this algorithm} that both formalize and extend several previous results. 
    We pose several conjectures about the limits of spectral methods and detectability in hypergraph stochastic blockmodels \revision{in general}\revision{, supporting these with in-expectation analysis of the eigeinpairs of our operators}. 
    We perform experiments in real and synthetic data that \revision{demonstrate} the benefits of hypergraph methods over graph-based ones when interactions of different sizes carry different information about cluster structure. 
\end{abstract}

  \begin{keywords}
    hypergraphs, eigenvalues, community detection, nonbacktracking matrix, phase transitions, \revision{detectability thresholds}
  \end{keywords}
  
  %

  \begin{AMS}
    05C50, 
    05C65, 
    15A18, 
    62H30, 
    62R07, 
    91D30 
  \end{AMS}

\section{Introduction} \label{sec:intro}
Graphs provide a classical representation for systems with pairwise relationships: components are modeled by nodes, and relationships are modeled by edges. 
In many systems however, relationships simultaneously involve more than two components. 
Examples include three scholars writing a paper together, or three genes interacting to influence phenotypic traits. 
While certain graph techniques can be used in such cases, there is often value in keeping multiway interactions intact~\cite{AktasCritical,bojanek2020cyclic}. 
In such cases, \emph{polyadic} or \emph{higher-order} representations are useful, and analysis based on such representations can lead to qualitatively and qualitatively different conclusions~\cite{chodrowConfigurationModelsRandom2020}. 
\revision{Polyadic representations include hypergraphs,  simplicial complexes, and various generalizations. 
There has been a wealth of } recent work using \revision{such structures} for modeling complex systems~\cite{bickWhatAreHigherorder2021,battistonNetworksPairwiseInteractions2020,battiston2021physics}.

In this paper, we focus on the community detection problem for hypergraphs. 
The community detection problem asks us to partition the nodes of a hypergraph into useful or insightful subsets, which are often called \emph{communities}, \emph{clusters}, or simply \emph{groups}. 
Many algorithms exist for dyadic graphs~\cite{porter2009communities}. 
There is also a growing body of algorithms for hypergraph community detection; \citet{chodrowGenerativeHypergraphClustering2021a} give a brief survey of extant approaches and applications.
    
In the context of \revision{dyadic} graphs, spectral methods are a well-studied family of clustering algorithms. 
A spectral method for clustering a graph $\G$ proceeds by computing a distinguished set of eigenvectors of some matrix $\mM$ associated to $\G$. 
A \revision{Euclidean} clustering algorithm may then be used to obtain clusters from the embedding space defined by these eigenvectors. 
There are \revision{many} possibilities for the choice of matrix $\mM$. 
\revision{These include} the graph adjacency matrix~\cite{nadakuditiGraphSpectraDetectability2012}, various Laplacian matrices~\cite{shi2000normalized,von2007tutorial}, the modularity matrix~\cite{newman2006modularity}, and the nonbacktracking matrix~\cite{krzakalaSpectralRedemptionClustering2013}. 
\revision{While} greedy methods such as the famous \revision{modularity-maximizing} Louvain algorithm~\cite{blondel2008fast} \revision{are guaranteed only to find locally optimal solutions to cluster optimization problems}, many spectral methods can be interpreted as approximations of \revision{globally optimal} solutions \revision{to such problems}.

\revision{
    In this paper we study extensions of nonbacktracking spectral clustering to the setting of hypergraphs. 
    Several spectral methods, including nonbacktracking methods, exist for \emph{uniform} hypergraphs---in which all edges contain the same number of nodes. 
    Many interesting hypergraph data sets, however, are nonuniform, containing edges of multiple sizes~\cite{bensonSimplicialClosureHigherorder2018}. 
    Our extension focuses on the challenges and opportunities posed by such data. 
    In addition to the development of new spectral algorithms, we also contribute conjectures related to the possibility of community detection in generative random hypergraph models. 
}

The paper proceeds as follows. 
In \Cref{sec:related-work}, we survey related work on the dyadic nonbacktracking operator and its uses in graph data science, \revision{especially including its role in results on the theory of the detectability of communities in graphs}. 
We \revision{also} discuss spectral clustering techniques applied to hypergraphs, and discuss the hypergraph nonbacktracking operator $\mB$ introduced by~\citet{stormZetaFunctionHypergraph2006a}. 
In \Cref{sec:ihara-bass-nonuniform} we prove a theorem of Ihara-Bass type relating the spectrum of the nonbacktracking operator to that of a related matrix which is usually smaller. 
We also provide a first clustering algorithm based on this matrix.
In \Cref{sec:HSBM} we describe a hypergraph stochastic blockmodel (HSBM) and state in-expectation eigenrelations for the nonbacktracking operator.  
Then, in \Cref{sec:BP}, we study the relationship between the nonbacktracking operator and the belief-propagation algorithm in the HSBM. 
We prove a precise statement of the known heuristic~\cite{krzakalaSpectralRedemptionClustering2013,angeliniSpectralDetectionSparse2015} that the stability of an uninformative fixed point of the belief-propagation dynamics can be studied via the nonbacktracking operator. 
We also prove a relationship \revision{of Ihara-Bass type} between the Jacobian governing the stability of this fixed point and a smaller matrix, \revision{often} enabling faster computations. 
We use this relationship to propose an alternating spectral clustering algorithm based on belief-propagation. 
In \Cref{sec:thresholds}, we pose several conjectures on the spectral properties of hypergraph nonbacktracking operators, and derive from them conjectured thresholds bounding the ability of our proposed algorithms to detect clusters. 
We support these conjectures with experiments on synthetic data. 
We move on to several empirical data sets in \Cref{sec:experiments}, finding that \revision{hypergraph} spectral clustering \revision{using the belief-propagation Jacobian} outperforms methods based on the projected graph when edges of different sizes play statistically distinct roles. 
We conclude in \Cref{sec:discussion} with a discussion of limitations of our algorithm and suggestions for future work.



\section{Related Work} \label{sec:related-work}

We now introduce our primary notation and briefly discuss related work in clustering methods for graphs and hypergraphs. \revision{Table~\ref{tab:notation} gives a summary of our notation.}

\subsection{Notation and Preliminaries} \label{sec:notation}

A hypergraph $\HG$ is a tuple $(\nodes, \edges)$ consisting of a node set $\nodes$ and an edge set $\edges$. 
The node set $\nodes$ contains $n \eqdef \abs{\nodes}$ nodes. 
The edge set $\edges$ consists of subsets of $\nodes$. 
For each $k$ in a set $\Kset$ of possible edge sizes, we let $\edges_k$ denote the set of edges of size $k$. 
We let $\partial_ki\subset \cE_k$ denote the set of edges of size $k$ containing $i$, and we let $\partial i = \cup_{k \in \Kset} \partial_ki$. 
We let $m_k = \abs{\edges_k}$, and $m = \sum_{k \in \Kset} m_k$. 
We usually assume $\Kset = \{2,3,\ldots,\bar{k}\}$ for some maximum edge size $\bar{k}$, and set $\kappa \eqdef \abs{\Kset}$. 
A hypergraph is $k$-\emph{uniform} if we may take $\Kset = \{k\}$, and \emph{nonuniform} if it is not $k$-uniform for any $k$. 
A graph is a 2-uniform hypergraph; in this case we write $\G$ instead of $\HG$. 
We let $\bracket{k} = \frac{1}{m} \sum_{k\in \Kset}km_k$ be the empirical average edge size. 

We will consider many linear maps defined on structures related to graphs and hypergraphs. 
Accordingly, if $S$ is a finite set, we let $\vectorspace{S}$ be the vector space of formal sums of the form $\sum_{s \in S} sa_s$ for scalars $a_s$. 
For integer $j$, we define the notation $\vectorspace{jS} \eqdef \vectorspace{S} \oplus \vectorspace{S} \cdots \oplus \vectorspace{S}$ with $j$ direct summands. 
Given two finite sets $S$ and $T$, we let $\maps{S}{T}$ be the set of linear maps between $\vectorspace{S}$ and $\vectorspace{T}$. 
We abbreviate $\maps{S}{} \eqdef \maps{S}{S}$. 
We speak of elements of $\maps{S}{T}$ interchangeably as either linear maps or matrices. 
\revision{In the latter case, we regard $S$ and $T$ as bases} of $\vectorspace{S}$ and $\vectorspace{T}$, respectively. 
So defined, $\vectorspace{S}$ is isomorphic to $\R^{\abs{S}}$ and $\maps{S}{T}$ is isomorphic to $\R^{\abs{S}\times \abs{T}}$. 
Our notation is intended to emphasize the relationship between the many linear maps we will encounter and the structures on which they act. 

Elements of $\vectorspace{S}$ are written in lowerbase bold: $\vv \in \vectorspace{S}$. 
An entry of $\vv$ is $v_s$ for some $s \in S$. 
Elements of $\maps{S}{T}$ are written in uppercase bold: $\mA \in \maps{S}{T}$. 
An entry of $\mA$ is $a_{s,t}$ for $s \in S$ and $t \in T$. 
In many cases we will need to consider multiply-indexed structures; for example, an element $\vu \in \vectorspace{S\times T}$ has elements of the form $v_{st}$ for $s \in S$ and $t \in T$. 
An element $\mC \in \maps{S\times T}{S'\times T'}$ has elements of the form $c_{st,s't'}$. 
When considering an indexed family of matrices such as $\{\mA_1,\ldots,\mA_k\}$, we separate the family index with a semicolon when writing the entries, e.g., $a_{k;i,j}$ is the $(i,j)$-th entry of $\mA_k$. 
\revision{
    When later considering objects indexed by community labels, we use upper indexing with parentheses. 
    For example, a vector $\vc$ has typical entry $c^{(s)}$ when $s$ is a community label, and a matrix $\mC$ has entries $c^{(s,t)}$ when $t$ is also a community label.
} 
We use $\indicator{P}$ to denote the indicator function of the proposition $P$, and the shorthand $\delta_{i,j} \eqdef \indicator{i = j}$. 

We now define the nonbacktracking operator on hypergraphs.
\revision{This definition is due to}~\citet{stormZetaFunctionHypergraph2006a}. 
\begin{definition}[Pointed Line Graph] \label{dfn:pointed-line-graph}
    A \emph{pointed edge} $iQ$ in a hypergraph $\HG$ consists of an edge $Q \in \edges$ along with a choice of \emph{point} $i \in Q$. 
    Define $\pointed{\edges}$ to be the set of pointed edges. 
    \revision{Let $\pointed{\partial}_ki$ be the set of pointed edges of size $k$ with point $i$, and let $\pointed{\partial} i \eqdef \cup_{k \in \Kset}\pointed{\partial}_ki$}
    We say that pointed edge $jR$ \emph{follows} pointed edge $iQ$, written $iQ \rightarrow jR$, if $j \in Q\setminus i$ and $Q \neq R$. 
    The \emph{pointed line graph} $\cP$ of $\HG$ is a directed graph whose nodes are elements of $\pointed{\edges}$. 
    There is a directed edge $(iQ, jR)$ in $\cP$ if $iQ \rightarrow jR$.  
\end{definition}

\begin{definition}[Nonbacktracking Operator] \label{def:nonbacktracking}
    The nonbacktracking operator $\mB \in \maps{\pointed{\edges}}{}$ associated to a hypergraph $\HG$ is the directed adjacency operator of the pointed line graph $\cP$. 
    Its entries are $b_{iQ, jR} \eqdef \indicator{iQ \rightarrow jR}$. 
\end{definition}

\subsubsection{Nonbacktracking Methods in Network Data Science}

    The nonbacktracking matrix $\mB$ has found several applications in the the study of graphs (i.e., 2-uniform hypergraphs). 
    \Citet{alon2007non} show that random walks on graphs governed by the nonbacktracking matrix $\mB$ mix more rapidly than random walks governed by the adjacency matrix $\mA$, implying that these walks may be more efficient for graph exploration tasks. 
    \Citet{martin2014localization} propose an eigenvector centrality measure based on $\mB$, and show that this centrality avoids the pathological localization of classical adjacency-based eigenvector centrality in sparse graphs. 
    \Citet{torresNonbacktrackingCyclesLength2019} and~\citet{mellor2019graph} impose metrics on the space of point clouds in the complex plane. 
    These metrics enable the comparison of the spectra of two nonbacktracking operators, and induce a pseudometric on simple graphs. 
    The resulting pseudometrics can then be used for graph clustering tasks. 
    \Citet{torresNonbacktrackingEigenvaluesNode2021} develop perturbation theory for the eigenvalues of $\mB$ in order to identify influential nodes in spreading processes on graphs. 

    The nonbacktracking matrix plays an important role in the community detection problem on graphs. 
    In a simple graph with two planted clusters, the eigenvectors of $\mB$ can be used to assign labels to nodes correlated with the true clusters~\cite{krzakalaSpectralRedemptionClustering2013}. 
    \revision{We refer to this approach to community detection} as \emph{\graphsimplealgorithm{}} (\graphsimplealg{}). 
    
\revision{
    A standard way to study the theoretical behavior of many clustering algorithms, including \graphsimplealg{}, is to consider their behavior under generative models of random clustered graphs. 
    A common such model is the sparse binary planted partition model. 
    We begin with two ground-truth labeled communities of $n/2$ nodes. 
    We then draw edges between nodes independently in such a way that each node has, in expectation, $a$ edges joining it to other nodes in its own community and $b$ edges joining it to nodes in the opposite community.\footnote{See \citet{abbeCommunityDetectionStochastic2017} for a more detailed description of this model.} 
    The expected performance of a given algorithm can then be studied as a function of the parameters $a$ and $b$, usually as $n\rightarrow \infty$.
}

\revision{Several} spectral algorithms admit statistical guarantees for the recovery of planted clusters in the $n\rightarrow \infty$ limit~\cite{von2008consistency,lei2015consistency}. 
    \revision{
    There are also important limitations. 
    Heuristically, an algorithm \emph{detects communities} in a generative model if that algorithm is able to reliably return labels that have better-than-random correlation with ground truth.
    A series of deep results~\cite{nadakuditiGraphSpectraDetectability2012,krzakalaSpectralRedemptionClustering2013,bordenaveNonbacktrackingSpectrumRandom2015} have shown that various spectral graph clustering methods, including \graphsimplealg{}, are able to detect communities in the binary planted partition model as $n\rightarrow \infty$ if and only if 
    \begin{equation}
        \phi \eqdef \frac{1}{2}\frac{(a - b)^2}{a + b} > 1\;. \label{eq:graph-detectability}
    \end{equation}
    The quantity $\phi$ may be viewed as a signal-to-noise ratio for the clustering problem. 
    The general structure of such results is to show that, with high probability as $n$ grows large, the graph matrix used for clustering possesses an automatically selectable eigenvalue whose eigenvector can be used to approximately recover the two planted communities. 
    Formal concentration results along these lines for the nonbacktracking matrix $\mB$ are recently available~\cite{bordenaveNonbacktrackingSpectrumRandom2015}. 
    These results show that, when $\phi > 1$, with high probability, the second eigenvalue of $\mB$ is real and holds cluster information.
    } 

\revision{ 
    The equation $\phi = 1$ is often called the \emph{detectability threshold} for the sparse binary planted partition model. 
    It was conjectured~\cite{decelleAsymptoticAnalysisStochastic2011} and later proven~\cite{mossel2015reconstruction,mossel2018proof,massoulie2014community} that the condition $\phi > 1$ is both necessary and sufficient for the existence of algorithms (of any kind) that detect communities in this model. 
    The analysis of spectral methods can thus offer insight on the conditions under which the cluster detection task is possible at all.
    }

\subsection{Spectral Clustering Methods for Hypergraphs}

    There is a wide range of algorithms for clustering and community detection in hypergraphs; see~\citet{chodrowGenerativeHypergraphClustering2021a} for a recent overview. 
    We focus here on spectral methods, i.e., methods which make use of the eigenvectors of some object associated to the hypergraph. 
    \revision{One}  approach \revision{begins by replacing} the hypergraph with a dyadic graph. 
    This is frequently done by clique-projection, in which each edge $e$ of size $k$ is replaced by a clique of $\binom{k}{2}$, possibly-weighted 2-edges between each pair of nodes contained in $e$. 
    Graph spectral methods can then be applied~\cite{zhou2006learning}. 
    Multiple versions of this approach possess asymptotic consistency guarantees under hypergraph planted partition models~\cite{ghoshdastidarConsistencySpectralHypergraph2017,dumitriuPartialRecoveryWeak2021}. 
    A challenge for such approaches is the need to treat edges of multiple sizes.  
    \revision{I}n order to realize the desired statistical guarantees it is usually necessary to assume either a $k$-uniform hypergraph or a generative model in which edges of varying sizes carry similar information about community structure. 
    Both assumptions can be restrictive in practice. 
    
    Tensor methods can provide explicit representations of polyadic relationships. 
    A $k$-uniform hypergraph can be represented as a symmetric \revision{adjacency} $k$-tensor $\mA$.
    In the $k=3$ case, for example, we would have entry $a_{i,j,\revision{h}} = 1$ iff $\curlybrace*{i,j,\revision{h}} \in \edges$.  
    There are several spectral methods for clustering $k$-uniform hypergraphs, many of which rely on concepts connected to tensor eigenvalues and eigenvectors~\cite{lim2005singular}. 
    \citet{keCommunityDetectionHypergraph2019} use a normalized tensor power iteration to compute eigenvectors, while \citet{chang2020hypergraph} take an explicit optimization approach. 
    \citet{huMultiwaySphericalClustering2022} derive a clustering method for uniform hypergraphs based on angular separability, and provide several statistical guarantees. 
    The representation of hypergraphs using adjacency tensors is constrained by the need to represent all edges in a tensor of fixed dimensions; in particular, it is unclear how edges of multiple sizes should \revision{be} represented in the same tensor. 
    A proposal was recently made in this direction by~\citet{galuppiSpectralTheoryWeighted2021}, but the applications of the resulting tensor for data analysis problems remains to be explored. 
    We also note recent work by Mulas and collaborators developing spectral theory for hypergraphs via tensors~\cite{galuppiSpectralTheoryWeighted2021}, random walks~\cite{mulasRandomWalksLaplacians2021}, and Laplace operators~\cite{mulasSpectralTheoryLaplace2021,jostNormalizedLaplaceOperators2021}. 
    To our knowledge, the application of these techniques in the data scientific context has not yet been studied.

    The use of \revision{the nonbacktracking operator} $\mB$ for clustering uniform hypergraphs was considered by~\citet{angeliniSpectralDetectionSparse2015}.  
    These authors studied properties of this operator in the context of uniform hypergraphs, including a weakened version of the Ihara-Bass theorem and some conjectures regarding the locations of the eigenvalues of the operator. 
    The authors offered a conjecture on the detectability threshold in a \revision{uniform} hypergraph planted partition model. 
    Several of \rev{these conjectures} were recently formalized and proved by \citet{stephanSparseRandomHypergraphs2022}.


\section{An Ihara-Bass Theorem for Nonuniform Hypergraphs} \label{sec:ihara-bass-nonuniform}

A theorem often attributed to both \citet{iharaDiscreteSubgroupsTwo1966} and \citet{bassIharaSelbergZetaFunction1992}\rev{, originally formulated for graphs,} relates the spectrum of $\mB$ to the spectrum of a \revision{different matrix $\mB'$ which is usually smaller}. 
We now extend this theorem to nonuniform hypergraphs. 
Our result generalizes a computation by \citet{angeliniSpectralDetectionSparse2015} and theorem by \citet{stephanSparseRandomHypergraphs2022}. 
    
It is useful to distinguish blocks of $\mB$ by edge size. 
Let $\pointed{\edges}_k$ be the set of pointed edges of size $k$, let $\pointed{m}_k = \abs{\pointed{\edges}_k}$, and let $\pointed{m} = \sum_{k \in \Kset} \pointed{m}_k$. 
\begin{definition}[Size-Restricted Nonbacktracking Operators]\label{def:nonbacktracking-uniform}
    Let $\mB_{k'\rightarrow k} \in \maps{\pointed{\edges}_{k}}{\pointed{\edges}_{k'}}$ have entries $b_{k'\rightarrow k; iQ, jR} \eqdef \mathbbm{1}[iQ \rightarrow jR]\;, $
    where $iQ \in \pointed{\edges}_{k'}$ and $jR \in \pointed{\edges}_k$.
    The \emph{$k$-th nonbacktracking operator} $\bB_{k} \in \maps{\pointed{\edges}}{}$ associated to hypergraph $\HG$ has the block form 
    \begin{align*}
        \bB_k = \sqbracket*{\begin{matrix}
            \mathbf{0}_{2m_2 \times 2m_2} & \mathbf{0}_{2m_2 \times 3m_3} & \cdots & \mathbf{B}_{2\rightarrow k} & \cdots & \mathbf{0}_{2m_2 \times \kmax m_{\kmax}} \\
            \mathbf{0}_{3m_3 \times 2m_2} & \mathbf{0}_{3m_3 \times 3m_3} & \cdots & \mathbf{B}_{3\rightarrow k} & \cdots & \mathbf{0}_{3m_3 \times \kmax m_{\kmax}}\\ 
            \vdots & \vdots & \vdots & \vdots & \vdots & \vdots \\ 
            \mathbf{0}_{\kmax m_{\kmax} \times 2m_2} & \mathbf{0}_{\kmax m_{\kmax} \times 3m_3} & \cdots & \mathbf{B}_{\kmax\rightarrow k} & \cdots & \mathbf{0}_{\kmax m_{\kmax} \times \kmax m_{\kmax}}
        \end{matrix}}\;,
    \end{align*} 
    where $\mathbf{0}_{m_1\times m_2}$ is the matrix of zeros of specified dimensions. 
\end{definition}
By construction,  $\mB = \sum_{k \in \Kset} \mB_k$. 
Because the only indices in $\bB_k$ permitted to be nonzero in both rows and columns are the indices corresponding to $\mB_{k\rightarrow k}$, any eigenvalue of $\bB_k$ must either be zero or an eigenvalue of $\mB_{k\rightarrow k}$.

\begin{table}
	\centering
	\caption{Table of selected major notation used.}
	\label{tab:notation}
	\begin{tabularx}{\linewidth}{l X l}
        Symbol & Meaning & Introduced In \\
		\toprule
		$\HG = (\nodes, \edges)$ & a hypergraph with nodes  $\nodes$ and edges $\edges$& \Cref{sec:notation}\\
		$n = \abs{\nodes}$   & the number of nodes & \Cref{sec:notation}\\
		\revision{$m = \abs{\edges}$} & \revision{the number of edges} & \revision{\Cref{sec:notation}}\\
		$\Kset$, $\kappa$ & set of possible edge sizes of a hypergraph, $\kappa = \abs{K}$ & \Cref{sec:notation}\\
		\revision{$\overline{k}$} & \revision{the maximum edge size} & \revision{\Cref{sec:notation}}\\
		$\edges_k$, $m_k$ & the set of edges of size $k$, $m_k = \abs{\edges_k}$ & \Cref{def:nonbacktracking}\\ 
		\revision{$\partial i$, ($\partial_k i$)} & \revision{the set of edges (of size $k$) that contain node $i$} & \revision{\Cref{sec:notation}} \\
		\midrule
		$iQ$ & a pointed edge, $i \in \nodes$ and $Q \in \edges$ & \Cref{dfn:pointed-line-graph}\\
		$\pointed{\edges}$, $\pointed{m}$ & set of pointed edges of $\HG$, $\pointed{m} = \abs{\pointed{\edges}}$ & \Cref{def:nonbacktracking}\\
		$\pointed{\edges_k}$, $\pointed{m}_k$ & set of pointed edges of size $k$, $\pointed{m}_k = \abs{\pointed{\edges}_k}$ & \Cref{def:nonbacktracking}\\
		\revision{$\pointed{\partial} i$}, ($\pointed{\partial}_{k}i$) & {the set pointed of edges (of size $k$) that contain node $i$} & \Cref{dfn:pointed-line-graph} \\ 
		$\cP$     & pointed line graph of a hypergraph  & \Cref{def:nonbacktracking}\\ 
		\midrule
		$\vectorspace{S}$ & finite vector space with orthonormal basis indexed by elements of set $S$ & \Cref{sec:notation} \\
		$\maps{S}{T}$ & Space of linear maps $\vectorspace{S}\rightarrow \vectorspace{T}$ & \Cref{sec:notation} \\ 
        $\mB_{k\rightarrow k'}$ & nonbacktracking operator from $k$-edges to $k'$-edges & \Cref{def:nonbacktracking-uniform}\\
        $\mB_{k}$ & nonbacktracking operator from $k$-edges to all edges & \Cref{def:nonbacktracking-uniform}\\
		$\mB$ & hypergraph nonbacktracking operator & \Cref{def:nonbacktracking-uniform}\\
		$\mA_k, \mD_k$ & $k$-th adjacency and degree operators for a hypergraph & \Cref{def:hypergraph-adjacency}\\
		\revision{$\mA, \mD$} & \revision{block matrix with diagonal blocks of $\mA_k$ or $\mD_k$} & \revision{\Cref{def:hypergraph-adjacency}}\\
		$\mK$      & square matrix with diagonal entries in $\Kset$ & \Cref{sec:ihara-bass-nonuniform}\\
        \midrule 
		$\tuples$, $\tuples_k$ & set of all subsets or $k$-subsets of nodes & \Cref{sec:HSBM}\\ 
		\revision{$\tuples(i)$}, $\tuples_k(i)$ & set of all subsets or $k$-subsets of nodes containing node $i$ & \Cref{sec:HSBM}\\ 
        $x\doteq y$ & $x = (1 + O(n^{-r}))y$ w.h.p. for some $r > 0$. & \Cref{sec:HSBM} \\ 
        \revision{$\eta$} & \revision{distribution of hypergraphs} & \revision{\Cref{sec:HSBM}} \\
        \revision{$\vz$} & \revision{labels for nodes in the blockmodel} & \revision{\Cref{sec:HSBM}} \\
		$\ck{k}{s}$ & Mean $k$-degree of nodes in cluster $s$. & \Cref{eq:ck}\\
		$\mJ$ & Non-vanishing block of the Jacobian matrix of belief-propagation evaluated at the uninformative fixed point. &  \Cref{thm:BP} \\ 
		\midrule
		\oursimplealg{} & \OurSimpleAlgorithm{}{} &\Cref{alg:vanilla} \\ 
		\ouralg{} & \OurAlgorithm{} &\Cref{alg:BP-spectral-clustering}\\ 
		\bottomrule
	\end{tabularx}
\end{table}
 
We now define adjacency and degree operators\rev{. We also distinguish these }by edge size. 
\begin{definition}[Adjacency and Degree Operators] \label{def:hypergraph-adjacency}
    The \emph{$k$-th adjacency operator} $\mA_k \in \maps{\nodes}{}$ associated to $\HG$ has entries
    \begin{equation*}
        a_{k; i,j} \eqdef \sum_{R \in \edges_k} \mathbbm{1}[\curlybrace*{ i,j } \subseteq R]\;.
    \end{equation*}
    The \emph{$k$-th degree matrix} $\mD_k \in \maps{\nodes}{}$ associated to $\HG$ has entries
    \begin{equation*}
        d_{k; i,j} \eqdef \delta_{i,j}\sum_{R\in \edges_k} \mathbbm{1}[i \in R]\;. 
    \end{equation*}
\end{definition}

Define the block matrices  
\begin{align*}
    \bA \eqdef \sqbracket*{\begin{matrix}
        \mA_2 & \cdots & \mA_{\kmax} \\
        \vdots & \ddots & \vdots \\ 
        \mA_{2}& \cdots & \mA_{\kmax}
    \end{matrix}} \in \maps{\Kset \times \nodes}{}\quad \text{and} \quad 
    \bD \eqdef \sqbracket*{\begin{matrix}
        \mD_2 & \cdots & \mD_{\kmax}\\ 
        \vdots & \ddots & \vdots \\ 
        \mD_{2}& \cdots & \mD_{\kmax}
    \end{matrix}} \in \maps{\Kset \times \nodes}{}\;,
\end{align*}
where $\mA_k$ and $\mD_k$ are as in \Cref{def:hypergraph-adjacency}. 
Let $\mK$ be the square matrix \revision{of size $\kappa \times \kappa$} with entries $(2, 3, \ldots,\kmax)$ along the diagonal, and zeroes everywhere else.

\begin{theorem}[Ihara-Bass for nonuniform hypergraphs]\label{thm:ib-nonuniform}
    For any hypergraph $\HG$, we have 
    \begin{equation}
        \det(\mI - \mu \bB) = f_\HG(\mu) \det \paren*{\mI_{\kappa n} + \mu ((\mK - 2\mI_{\kappa})\otimes \mI_n - \bA) + \mu^2 \revision{(\bD - \mI_{\kappa n})}((\mK - \mI_{\kappa}) \otimes \mI_n)}\;,  \label{eq:ib-equation}
    \end{equation}
    where 
    \begin{equation*}
        f_\HG(\mu) = \prod_{k\in \Kset} (1 - \mu)^{m_k(k-1) - n}(1 + \mu(k-1))^{m_k - n}\;,
    \end{equation*}
    and $\otimes$ is the Kronecker product.
\end{theorem}
We supply a proof of \Cref{thm:ib-nonuniform}, as well as of \Cref{cor:ihara-bass-hypergraph-eigenvalues-nonuniform,cor:ib-eigenvector-correspondence} below, in \Cref{sec:ib-nonuniform-proof}.   

The computational significance of \Cref{thm:ib-nonuniform} is that we can compute the interesting eigenvalues of $\bB$ via a \rev{different} matrix $\bB'$\rev{. This matrix is usually smaller than $\bB$, although it can also be less sparse}.  
\begin{cor}\label{cor:ihara-bass-hypergraph-eigenvalues-nonuniform}
    Let $\HG$ be a hypergraph with $m_k$ edges of size $k$ for each $k \in \Kset$. 
    Then: 
    \begin{itemize}
        \item For each $k$, if $m_k > n$, then  $\beta = 1-k$ is an eigenvalue of $\bB$ with algebraic multiplicity at least $m_k - n$. 
        \item If $\sum_{k\in \Kset} m_k(k-1) > \kappa n$, then $\beta = 1$ is an eigenvalue of $\bB$ with algebraic multiplicity at least $\sum_{k\in \Kset} m_k(k-1) - \kappa n$. 
        \item The remaining eigenvalues of $\bB$ are eigenvalues of the matrix 
        \begin{align}
            \bB' \eqdef \sqbracket*{\begin{matrix}
                \vzero_{\kappa n} &  \bD - \mI_{\kappa n} \\ 
                ( \mI_{\kappa}- \mK)\otimes \mI_n &  \bA + (2\mI_{\kappa} - \mK)\otimes \mI_n
            \end{matrix}} \in \maps{2\Kset \times \nodes}{}\;.\label{eq:b-prime-nonuniform}
        \end{align} 
    \end{itemize}
\end{cor}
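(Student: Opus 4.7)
The plan is to extract eigenvalue information from the Ihara-Bass identity of \Cref{thm:ib-nonuniform}, treating each bullet of the corollary in turn.

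For the first two bullets, I would interpret the multiplicity of $\beta$ as an eigenvalue of $\bB$ as the order of vanishing of the polynomial $\det(\mI - \mu \bB)$ at $\mu = 1/\beta$. In the expression $f_\HG(\mu) = \prod_{k \in \Kset} (1-\mu)^{m_k(k-1) - n}(1 + \mu(k-1))^{m_k - n}$, the single factor $(1 + \mu(k-1))^{m_k - n}$ vanishes at $\mu = 1/(1-k)$ to order $m_k - n$, and no other factor of $f_\HG$ vanishes or blows up there; via \Cref{thm:ib-nonuniform} this yields the first bullet whenever $m_k > n$. Similarly, the aggregate contribution $\prod_{k}(1-\mu)^{m_k(k-1) - n}$ vanishes at $\mu = 1$ to net order $\sum_k m_k(k-1) - \kappa n$, producing the second bullet when this quantity is positive. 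The one subtlety is that if some exponents $m_k - n$ are negative, $f_\HG$ has a pole at the corresponding $\mu$; but since the left-hand side of \Cref{thm:ib-nonuniform} is a genuine polynomial, any such pole must be cancelled by a compensating zero of the right-hand determinant, so the stated lower bounds on multiplicity still hold.

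For the third bullet, I would linearize the matrix quadratic
\begin{equation*}
P(\mu) \eqdef \mI_{\kappa n} + \mu((\mK - 2\mI_\kappa)\otimes \mI_n - \bA) + \mu^2((\mK - \mI_\kappa)\otimes \mI_n)(\bD - \mI_{\kappa n})
\end{equation*}
via a Schur-complement identity applied to $\mI_{2\kappa n} - \mu\bB'$. Written as a $2\times 2$ block matrix with $\mI_{\kappa n}$ in the top-left, its Schur complement about that block equals $P(\mu)$: the off-diagonal product $(-\mu((\mI_\kappa - \mK)\otimes \mI_n))(\mI_{\kappa n})^{-1}(-\mu(\bD - \mI_{\kappa n}))$ contributes exactly the $\mu^2$ term of $P(\mu)$, while the bottom-right block $\mI_{\kappa n} - \mu(\bA + (2\mI_\kappa - \mK)\otimes \mI_n)$ supplies the constant and $\mu$ terms. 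Consequently $\det P(\mu) = \det(\mI - \mu\bB')$, and substituting into \Cref{thm:ib-nonuniform} yields $\det(\mI - \mu\bB) = f_\HG(\mu)\det(\mI - \mu\bB')$. Any root of the left-hand polynomial in $\mu$ that is not already accounted for by $f_\HG$ is therefore a root of $\det(\mI - \mu\bB')$, i.e., the reciprocal of an eigenvalue of $\bB'$, giving the third bullet.

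The main obstacle is guessing the correct block form of $\bB'$ in the first place; once guessed, the verification via Schur complement is essentially a one-line computation. As a sanity check, the degree accounting is exact: the first two bullets furnish $\sum_k m_k(k-1) - \kappa n + \sum_k (m_k - n) = \pointed{m} - 2\kappa n$ eigenvalues, and $\bB'$ accounts for the remaining $2\kappa n$, summing to the full $\pointed{m}$ eigenvalues of $\bB$.
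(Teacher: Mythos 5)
Your proposal is correct and follows essentially the same route as the paper: both read the multiplicities off the factorization in \Cref{thm:ib-nonuniform}, attributing the $\beta = 1-k$ and $\beta = 1$ eigenvalues to the zeros of $f_\HG$ and the remaining eigenvalues to the quadratic determinant factor (the paper does this after substituting $\mu = 1/\beta$ and phrasing everything in terms of the characteristic polynomials $p_{\bB}$ and $p_{\bB'}$). Your explicit Schur-complement verification that $\det P(\mu) = \det(\mI_{2\kappa n} - \mu \bB')$ supplies a linearization step the paper leaves implicit, so it is a useful addition rather than a genuinely different argument.
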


\Cref{cor:ihara-bass-hypergraph-eigenvalues-nonuniform} expresses a relationship between the eigenvalues of $\bB$ and $\bB'$. 
There is an associated relationship between their eigenvectors. 

\begin{lemma} \label{cor:ib-eigenvector-correspondence}
    Let $\vu \in \vectorspace{\pointed{\edges}}$ be an eigenvector of $\bB$ with eigenvalue $\beta$. 
    Let $\vx = (\vx_1, \vx_2)^T\in \vectorspace{2\Kset\times \nodes}$, where $\vx_1, \vx_2 \in \vectorspace{\Kset\times \nodes}$ are doubly-indexed vectors defined entrywise by 
    \begin{align}
        x_{1;k,i} \eqdef \revision{\sum_{\substack{\substack{Q \in \partial_ki}}} \sum_{\substack{j \in Q\setminus i}} u_{jQ} }  \quad \text{and} \quad
        x_{2;k,i} \eqdef \revision{\sum_{\substack{Q \in \partial_ki}}u_{iQ}}\;. \label{eq:reduced-eigenvectors}
    \end{align}
    Then, $\bB'\vx = \beta \vx$. 
    In particular, either $\vx$ is an eigenvector of $\bB'$ with eigenvalue $\beta$ or $\vx = \vzero$. 
\end{lemma}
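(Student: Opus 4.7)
The plan is to verify the identity $\bB' \vx = \beta \vx$ coordinate-by-coordinate by pushing the scalar relation
\begin{equation*}
    \beta u_{iQ} = \sum_{R \ni i,\, R \neq Q} \sum_{j \in R \setminus i} u_{jR},
\end{equation*}
which is the $(iQ)$-entry of the eigenvalue equation $\bB \vu = \beta \vu$, through the definitions of $\vx_1$ and $\vx_2$. Once the two block equations that constitute $\bB' \vx = \beta \vx$ hold at each coordinate $(k, i)$, the stated dichotomy follows from the linearity of the map $\vu \mapsto \vx$, which may collapse to zero.

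For the first block $\beta \vx_1 = (\bD - \mI_{\kappa n}) \vx_2$, I would fix $(k, i)$, write $\beta x_{1;k,i} = \sum_{iQ \in \pointed{\edges}_k(i)} \beta u_{iQ}$, substitute the scalar relation, and swap the order of summation. Each pair $(R, j)$ with $R \ni i$ and $j \in R \setminus i$ then appears with multiplicity $d_{k;i} - \indicator{R \in \edges_k}$, namely the number of $k$-edges at $i$ distinct from $R$. Splitting according to whether $R \in \edges_k$ identifies the unrestricted term as $(\bD \vx_2)_{k,i}$ and the restricted term as $x_{2;k,i}$, yielding the identity.

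The second block is the main obstacle. I would similarly expand $\beta x_{2;k,i} = \sum_{Q \in \edges_k,\, Q \ni i} \sum_{j \in Q \setminus i} \beta u_{jQ}$ and substitute the analogous scalar relation for $\beta u_{jQ}$, handling the constraint $R \neq Q$ by writing the sum over $R \ni j$ with $R \neq Q$ as the sum over all $R \ni j$ minus the $R = Q$ contribution. The unrestricted part, after swapping to index by the bridging node $j$ (with the $j = i$ diagonal absorbed via the zero-diagonal convention $a_{k;i,i} = 0$), collapses into $(\bA \vx_2)_{k,i}$ via $a_{k;i,j} = |\{Q \in \edges_k : \{i,j\} \subseteq Q\}|$. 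The subtracted $R = Q$ contribution, after an internal count of pairs $(j, j')$ in $Q$ distinguishing $j' = i$ from $j' \in Q \setminus i$, becomes $(k-1) x_{1;k,i} + (k-2) x_{2;k,i}$. Assembling the two yields $(\beta + k - 2) x_{2;k,i} + (k-1) x_{1;k,i} = (\bA \vx_2)_{k,i}$, which is the second block equation. The delicate part throughout is the careful bookkeeping of the index swaps and of the forbidden diagonal $R = Q$; once these are executed, the remaining algebra is routine.
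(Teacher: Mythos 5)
Your proposal is correct and takes essentially the same route as the paper's proof: both verify the two block identities of $\bB'\vx = \beta\vx$ coordinatewise by substituting the eigenvector relation, handling the constraint $R \neq Q$ by an ``all edges minus the $R=Q$ term'' decomposition, and reducing that excluded term to $(k-1)x_{1;k,i} + (k-2)x_{2;k,i}$, with the dichotomy following from linearity since $\vx$ may vanish. The only cosmetic difference is that for the first block you swap summation order and count multiplicities $d_{k;i}-\indicator{R\in\edges_k}$, whereas the paper adds and subtracts the excluded term inside the bracket---an equivalent manipulation.
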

 
Heuristically, \Cref{cor:ib-eigenvector-correspondence} states that one can aggregate cluster information in $\bu$ to the level of nodes by summing over all edges of size $k$ with specified points.  
To \revision{complete} a first spectral clustering algorithm, we further sum $\vx_{\revision{2}}$ over edge sizes, obtaining vector $\bar{\vx}$ with entries 
\begin{equation}
    \bar{x}_{i} \eqdef \sum_{k\in \Kset} x_{\revision{2};k,i}\;.  \label{eq:aggregate-eigenvectors}
\end{equation}
\rev{In the case of uniform hypergraphs, there exist recent probabilistic guarantees ensuring that \rev{$\bar{\vx}$} is correlated with planted communities under a certain choice of data generating process~\cite{stephanSparseRandomHypergraphs2022}. 
Extending these guarantees to the nonuniform cases is an avenue of future work.}
We find experimentally that the sign of $\bar{x}_i$ is most informative, and it is therefore helpful to use the vector $\tilde{\vx} \eqdef \mathrm{sgn}(\bar{\vx})$, with the sign computed entrywise.\footnote{\rev{see \Cref{fig:threshold-experiment} for a simple experiment supporting the use of $\tilde{\vx}$ rather than $\bar{\vx}$.}}

We now state our first spectral clustering algorithm, \OurSimpleAlgorithm{} (\oursimplealg) in~\Cref{alg:vanilla}. 
We compute a desired number of eigenvectors $\curlybrace*{\vu^{(\ell)}}$, form $\tilde{\vx}^{(\ell)}$ for each, and then cluster in the Euclidean embedding described by the set $\curlybrace*{\tilde{\vx}^{(\ell)}}_\ell$. 
There are multiple choices for the Euclidean clustering subroutine. 
We assume that this subroutine accepts as input a matrix giving the coordinates of a point cloud in Euclidean space, and returns a vector of labels $\labelvec$. 
Throughout this paper we use the standard $k$-means algorithm \cite{macqueen1967some,sebestyen1962decision}, but other choices could in principle lead to superior performance. 
\begin{algorithm} 
    \caption{\OurSimpleAlgorithm{} (\oursimplealg{})}
    \label{alg:vanilla}
    \begin{algorithmic}[1]
    \STATE $\curlybrace*{\vu^{(\ell)}}_{\ell = 1}^h \gets$ $h$\text{ eigenvectors of }$\bB'$\text{ \revision{with real eigenvalues largest in magnitude}}
    \STATE Initialize $\tilde{\mX}$ 
    \FOR{$\ell = 2,\ldots,h $}
        \STATE Compute $\bar{\vx}^{(\ell)}$ via \cref{eq:reduced-eigenvectors,eq:aggregate-eigenvectors}
        \STATE $\tilde{\mX}_{\cdot\ell} \gets \mathrm{sgn}(\bar{\vx}^{\ell})$\;
    \ENDFOR
    \STATE $\labelvec = $ \text{Cluster}$(\tilde{\mX})$
    \RETURN $\labelvec$
    \end{algorithmic}
\end{algorithm}

\revision{
    Carrying out \oursimplealg{} requires the user to choose $h$, the number of eigenvectors with real eigenvalues to extract from $\mB'$. 
    Analogy with the uniform hypergraph case \citep{stephanSparseRandomHypergraphs2022} would suggest the use of $h-1$ eigenvectors to cluster into $h$ communities, provided that there are indeed $h-1$ such eigenvectors with eigenvalues separated from the bulk other than an uninformative eigenvector corresponding to the largest real eigenvalue. 
    We will later argue that \oursimplealg{} is limited in cases in which edges of different sizes carry different cluster information in $\HG$ and that no choice of $h$ can circumvent this limitation.
    To highlight these limitations and build a foundation for further development, we now discuss a generative model of random hypergraphs and study the spectral structure of $\mB$ under this model. 
    }






\section{The Sparse Hypergraph Stochastic Blockmodel} \label{sec:HSBM}

    In this section we briefly review the hypergraph stochastic blockmodel (HSBM), a generative model of clustered hypergraphs. 
    Our choice of notation and formulation most closely resembles that of~\citet{keCommunityDetectionHypergraph2019}. 
    Many other related formulations exist in the literature \cite{ghoshdastidarConsistencySpectralHypergraph2017,angeliniSpectralDetectionSparse2015,chodrowGenerativeHypergraphClustering2021a,dumitriuPartialRecoveryWeak2021,stephanSparseRandomHypergraphs2022,contiscianiPrincipledInferenceHyperedges2022}. 
    We prove in-expectation results for eigenpairs of the matrices $\mB_k$, and pose conjectures generalizing recent proofs by \citet{stephanSparseRandomHypergraphs2022} of eigenpair concentration results in the uniform case. 
    These conjectures will also inform our development of \revision{\ouralgorithm{}} in \Cref{sec:BP}.     
    
    Our blockmodel is a probability distribution over hypergraphs.
    \rev{We denote this distribution by} $\eta$. 
    To sample from this model, we first assign each node $i\in \nodes$ a label $z_i$ from a finite label alphabet $\alphabet$ of size $\ell$. 
    \rev{These labels are drawn independently from a probability vector $\vq \in \vectorspace{\alphabet}$, so that $\q{s}$ gives the probability that $z_i = s$ for each $i\in \nodes$.}
    We collect the labels in a vector $\labelvec$. 
    Let $\tuples$ give the set of possible edges, which we usually take to be sets of nodes with some specified possible sizes. 
    Let $\tuples_k$ denote the set of node subsets of size $k$. 
    Define $\tuples(i)$ to be the set of subsets containing node $i$, and let $\tuples_k(i) = \tuples_k \cap \tuples(i)$. 
    To realize edges, we consider each set of nodes $R\in \tuples$ and add this set to the edge set $\edges$ with probability $\eta(R \in \edges|\labelvec_R)$ depending on the labels in $\labelvec_R$.
    \rev{The structure of the model is specified by the functional form of $\eta$}.
    Here, we will focus on the sparse setting in which 
    \begin{equation}
        \eta(R \in \edges|\labelvec_R) = \frac{\omega(\labelvec_R)   }{n^{\abs{R}-1}}
    \end{equation}
    for some function $\omega$ that does not depend on $n$. 
    This structure imposes sparsity on the hypergraph; the number of $k$-edges for a given node is asymptotically constant with respect to $n$.  
    The overall probability to realize a given combination of label vector $\labelvec$ and edge set $\edges$ is 
    \begin{equation}
        \eta(\edges, \labelvec) = \paren*{\prod_{i\in \nodes} \q{z_i}} \paren*{\prod_{R \in \tuples} \eta(R\in \edges|\labelvec_R)}\;.\label{eq:blockmodel}
    \end{equation}
    \revision{
        The function $\omega$ controls the relative rates of edges within and between communities, and is an extension of the parameters $a$ and $b$ in the graph planted partition model described in \Cref{sec:intro}. 
        A common choice of the function $\omega$ is the ``all-or-nothing'' affinity, defined as 
        \begin{align*}
            \omega(z_R) = 
            \begin{cases}
                \rev{a}_{\abs{R}}  &\quad z_i = z_j \quad \forall i, j \in R \\ 
                \rev{b}_{\abs{R}} & \quad \text{otherwise}\, .
            \end{cases}
        \end{align*}
        When $\rev{a_{k} > b_{k}}$, edges \revision{of size $k$} form at higher rates \emph{within} planted communities than \emph{between} them. 
        This choice of $\omega$ is also the partition used by \citet{chodrowGenerativeHypergraphClustering2021a} to derive a fast modularity maximization algorithm for hypergraphs. 
        However, many other choices for $\omega$ are also possible, including ones that favor edge formation between communities rather than within them. 
        Our work is independent of the choice of $\omega$, subject only to the assumptions described in \Cref{sec:community-correlated-eigenvectors}. 
    }

    In our development below, it will be useful to reason asymptotically about many quantities. 
    Define $x \doteq y$ if there exists some constant $r>0$ such that $x = (1 + O(n^{-r}))y$ either deterministically or with high probability with respect to the blockmodel $\eta$ as $n$ grows large and the largest edge-size $k$ remains fixed. 

    Let 
    \begin{equation}
        \ck{k}{s} \eqdef \frac{1}{(k-1)!}\sum_{\labelvec \in \alphabet^{k-1}}\omega(\labelvec, s) \prod_{t \in \vz}\q{t}\;. \label{eq:ck}
    \end{equation}
    This is the asymptotic expected number of $k$-edges attached to a given node $i$ in cluster $s$, as can be verified via direct calculation: 
    \begin{equation*}
        \sum_{\substack{R \in \tuples_k(i)}} \eta(R \in \edges | \labelvec_R) \prod_{j \in R\setminus i} \q{z_j} 
        \doteq \binom{n-1}{k-1} \sum_{\labelvec \in \alphabet^{k-1}} n^{1-k}\omega(\labelvec, s) \prod_{t \in \vz}\q{t} = (1 + O(n^{-1}))\ck{k}{s} \doteq \ck{k}{s}\;. 
    \end{equation*}
    We also define 
    \begin{equation}
        \ck{k}{s,t} \eqdef \frac{1}{(k-2)!}\sum_{\labelvec \in \alphabet^{k-2}}\omega(\labelvec, s, t)\prod_{t \in \vz}\q{t}\;, \label{eq:c2-def}
    \end{equation}
    which gives the expected number of cluster $t$ neighbors of a node in cluster $s$ through edges of size $k$. 
    We have the identity 
    \begin{equation}
        \ck{k}{s} = \frac{1}{k-1}\sum_{t\in \alphabet}\q{t}\ck{k}{s,t}\;, \label{eq:degree-identity}
    \end{equation}
    which computes $\ck{k}{s}$ by conditioning on the label of a second node in each possible edge.

\subsection{Spectral Structure of $\mB$ in the HSBM} \label{sec:community-correlated-eigenvectors}
    
    We now aim to describe \rev{the} eigenvectors of $\mB_k$ \rev{which are correlated with community structure} under the blockmodel \cref{eq:blockmodel}. 
    We implicitly condition on a realization of the label vector $\labelvec$ for simplicity, so that the only remaining randomness is in the edge generation process. 
    We impose the following assumptions: 
    \begin{itemize}
        \item There are exactly two groups, labeled $1$ and $2$, of equal expected size. 
        Thus, $\q{1} = \q{2} = \frac{1}{2}$.  
        We further assume that the empirical distribution of group labels in the realized label vector $\vz$ is close to $q$; this holds with high probability as $n$ grows large. 
        \item The group-specific expected degrees are equal: $\ck{k}{1} = \ck{k}{2} \eqdef \ck{k}{}$ for each $k$. 
        \item We have $\ck{k}{1,1} = \ck{k}{2,2}\eqdef \ckin$ and $\ck{k}{1,2} = \ck{k}{2,1}\eqdef \ckout$ for all $k$. 
    \end{itemize}
    Formal concentration results under these hypotheses are available for graphs \cite{bordenaveNonbacktrackingSpectrumRandom2015} and $k$-uniform hypergraphs \cite{stephanSparseRandomHypergraphs2022}. 
    While we anticipate that many of the techniques used for these cases will generalize to nonuniform hypergraphs, pursuing formal proofs is beyond our present scope. 
    We \revision{instead} provide informal results by reasoning in expectation. 

    Let $\vu\in \vectorspace{\pointed{\edges}}$ have entries $u_{iQ} = \abs{Q} - 1$.    
    Let $\vsigma \in \vectorspace{\nodes}$ be the vector with entries
    \begin{equation*}
        \sigma_i \eqdef \begin{cases}
            + 1 &\quad z_i = 1 \\ 
            -1 &\quad z_i = 2\;.
        \end{cases}
    \end{equation*}
    Let $\vv \in \vectorspace{\pointed{\edges}}$ have entries $v_{iQ} \eqdef \sum_{j \in Q\setminus i}\sigma_{j}$. 
    
    \begin{theorem}\label{thm:eigen-expectation}
        Consider a hypergraph sampled from $\eta$. 
        Let $\alpha_k = \ck{k}{}(k-1)$. 
        Let $\beta_k = \frac{\ckin - \ckout}{2}$. 
        Then, for all tuples $Q$ and nodes $i\in Q$, we have 
        \begin{align}
            \E\sqbracket*{(\mB_k\vu)_{iQ} - \alpha_ku_{iQ}|Q\in \edges} \doteq 0\;, \label{eq:eigen-expectation-1} \\ 
            \E\sqbracket*{(\mB_k\vv)_{iQ} - \beta_kv_{iQ}|Q\in \edges} \doteq 0\;. \label{eq:eigen-expectation-2}
        \end{align}
    \end{theorem}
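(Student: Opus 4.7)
The plan is to directly compute the conditional expectations $\E\sqbracket*{(\mB_k\vu)_{iQ}|Q\in\edges}$ and $\E\sqbracket*{(\mB_k\vv)_{iQ}|Q\in\edges}$ by expanding $(\mB_k\cdot)_{iQ}$ via \Cref{def:nonbacktracking-uniform} and exploiting the independence of the edge-inclusion events under the blockmodel \cref{eq:blockmodel}. Since conditioning on $Q\in\edges$ does not alter the distribution of any other potential edge, expectations commute cleanly with the nonbacktracking sums. Throughout, I would implicitly condition on a realization of $\labelvec$ whose empirical cluster frequencies are close to $\q{}$, which holds with high probability under $\eta$.

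For \cref{eq:eigen-expectation-1}, since $u_{jR} = \abs{R} - 1 = k - 1$ is constant on pointed $k$-edges, expanding the definition of $\mB_k$ gives $(\mB_k\vu)_{iQ} = (k-1)\sum_{j\in Q\setminus i}\abs{\{R\in\edges_k : j\in R,\ R\neq Q\}}$, hence
\begin{equation*}
\E\sqbracket*{(\mB_k\vu)_{iQ}|Q\in\edges} = (k-1)\sum_{j\in Q\setminus i}\sum_{\substack{R\in\tuples_k(j)\\ R\neq Q}}\eta(R\in\edges|\labelvec_R)\;.
\end{equation*}
The inner sum equals the expected $k$-degree of $j$ minus the single term $\eta(Q\in\edges|\labelvec_Q)=O(n^{-(k-1)})$, which is negligible; by the computation carried out just below \cref{eq:ck}, this quantity is $\doteq\ck{k}{z_j}$. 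Under the symmetry hypothesis $\ck{k}{1}=\ck{k}{2}=\ck{k}{}$ it is independent of $z_j$, and summing over $j\in Q\setminus i$ gives $(k-1)(\abs{Q}-1)\ck{k}{}=\alpha_k u_{iQ}$.

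For \cref{eq:eigen-expectation-2}, substituting $v_{jR}=\sum_{h\in R\setminus j}\sigma_h$ and swapping the order of summation over $R$ and $h$ yields
\begin{equation*}
\E\sqbracket*{(\mB_k\vv)_{iQ}|Q\in\edges} = \sum_{j\in Q\setminus i}\sum_{h\neq j}\sigma_h\sum_{\substack{R\in\tuples_k(\{j,h\})\\ R\neq Q}}\eta(R\in\edges|\labelvec_R)\;.
\end{equation*}
A calculation in the same spirit as the one below \cref{eq:ck}, but now using \cref{eq:c2-def}, shows the innermost sum is $\doteq \ck{k}{z_j,z_h}/n$; using that $\abs{\{h\in\nodes:z_h=t\}}\doteq\q{t}n$, the aggregate over $h$ with $z_h=t$ therefore collapses to $\doteq\q{t}\ck{k}{z_j,t}$. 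Weighting by $\sigma_h=\pm 1$ over the two clusters gives $\q{1}\ck{k}{z_j,1}-\q{2}\ck{k}{z_j,2}$, and substituting the symmetry hypotheses $\q{1}=\q{2}=1/2$, $\ck{k}{1,1}=\ck{k}{2,2}=\ckin$, $\ck{k}{1,2}=\ck{k}{2,1}=\ckout$, a two-case analysis on $z_j$ shows that this equals $\sigma_j\beta_k$ in each case. Summing over $j\in Q\setminus i$ finally produces $\beta_k v_{iQ}$.

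The main technical delicacy is bookkeeping of the $\doteq$ relation: the exclusion of the single term $R=Q$ from each inner sum contributes a multiplicative error of order $O(n^{-(k-1)})$, while replacing empirical cluster frequencies by $\q{t}$ introduces an $O(n^{-1/2})$ correction via standard concentration of sums of label indicators. Both corrections are of the form $1+O(n^{-r})$ with $r>0$ and are therefore absorbed by $\doteq$. A fully formal version of the argument would carry the ``with high probability with respect to $\eta$'' quantifier through each step in the style of the $k$-uniform analysis of~\cite{stephanSparseRandomHypergraphs2022}; the paper signals that extending such a concentration analysis rigorously to the nonuniform case is beyond the present informal scope.
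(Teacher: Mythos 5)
Your proposal is correct and follows essentially the same route as the paper's proof: expand $(\mB_k\vv)_{iQ}$, use independence of edge events (which the paper formalizes via the zero-padded $\tilde{\vv}$, $\tilde{\bB}_k$), swap the sums over $R$ and the second node, evaluate the inner sum as $\doteq \ck{k}{z_j,z_h}/n$, and invoke the symmetry assumptions to collapse the label-weighted sum to $\sigma_j\beta_k$. The only differences are cosmetic — you argue conditioning/independence directly rather than through the tilde construction, and you write out \cref{eq:eigen-expectation-1}, which the paper omits as ``similar.''
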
 
    The full proof of \cref{eq:eigen-expectation-2} is provided in \Cref{sec:eigen-expectation-proof}; the proof of \cref{eq:eigen-expectation-1} is \rev{ommitted but} similar.

    In expectation, \Cref{thm:eigen-expectation} says that, $\mB_k$ has a Perron eigenvector with eigenvalue $\alpha_k$, and a community-correlated eigenvector with eigenvalue $\beta_k$. 
    To substantiate the claim that $\vv$ is indeed community-correlated, we can sum over $k-$edges incident to $i$, obtaining  
    \begin{equation}
        \E\sqbracket*{\sum_{Q\in \partial_ki} v_{iQ}} 
        = \sum_{Q\in \tuples_k(i)}\eta(Q \in \edges_k)\sum_{j \in Q\setminus i}\sigma(j) 
        \doteq \frac{1}{n}\sum_{j \neq i}\sigma_j \ck{k}{z_i, z_j}
        \doteq \beta_k \sigma_i\;. \label{eq:community-expectation}
    \end{equation} 
    This calculation is similar to the proof of \Cref{cor:ib-eigenvector-correspondence}. 
    \rev{Indeed, this calculation shows that  $\E[\vx_{2;k}] \doteq \beta_k \vsigma$ with $\vx_{2;k}$ as in \Cref{cor:ib-eigenvector-correspondence}, explicitly showing why the vector $\vx_{2}$ can be expected to be correlated with planted communities.}
    In practice, we do not have access to the expectation and only to our \revision{realization of $\vx_{2}$, and therefore our estimate of $\vsigma$, is} noisy. 
    Formal concentration results along the lines of \citet{bordenaveNonbacktrackingSpectrumRandom2015} and~\citet{stephanSparseRandomHypergraphs2022} are necessary to provide \revision{probabilistic} guarantees. 

    \revision{
        From \Cref{thm:eigen-expectation} we can also obtain in-expectation eigenpair relations for the full matrix $\mB$. 
        \begin{cor} \label{cor:eigen-expectation-B}
            Under the hypotheses of \Cref{thm:eigen-expectation}, and with $\vu$ and $\vv$ as defined there, 
            \begin{align}
                \E\sqbracket*{(\mB\vu)_{iQ} - \alpha u_{iQ}|Q\in \edges} &\doteq 0\; \\  
                \E\sqbracket*{(\mB\vv)_{iQ} - \beta v_{iQ}|Q\in \edges} &\doteq 0\;, 
            \end{align}
            where $\alpha \eqdef \sum_{k \in \Kset}\alpha_k$ and $\beta \eqdef \sum_{k \in \Kset}\beta_k$. 
        \end{cor} 
        \begin{proof}
            The corollary follows from the formula $\mB = \sum_{k \in \Kset} \mB_k$ and the fact that the vectors $\vu$ and $\vv$ do not depend on $k$. 
        \end{proof}
    }

    \subsection{Implications for \oursimplealg{}} \label{sec:vanilla-thresholds}

        In \cref{alg:vanilla}, we form the nonbacktracking operator $\bB$, obtain a spectral embedding of the nodes, and then cluster in the embedding space. 
        The spectral embedding is obtained by considering the \revision{
            $h-1$  eigenvectors of $\bB$ with real eigenvalues largest in magnitude, excepting the uninformative eigenvector $\vu$.
            A necessary condition for \oursimplealg{} to perform better-than-random guessing is that at least one of these eigenvectors is correlated with ground-truth community labels. 
            }
        
        \begin{figure}
            \centering 
            \includegraphics[width=\textwidth]{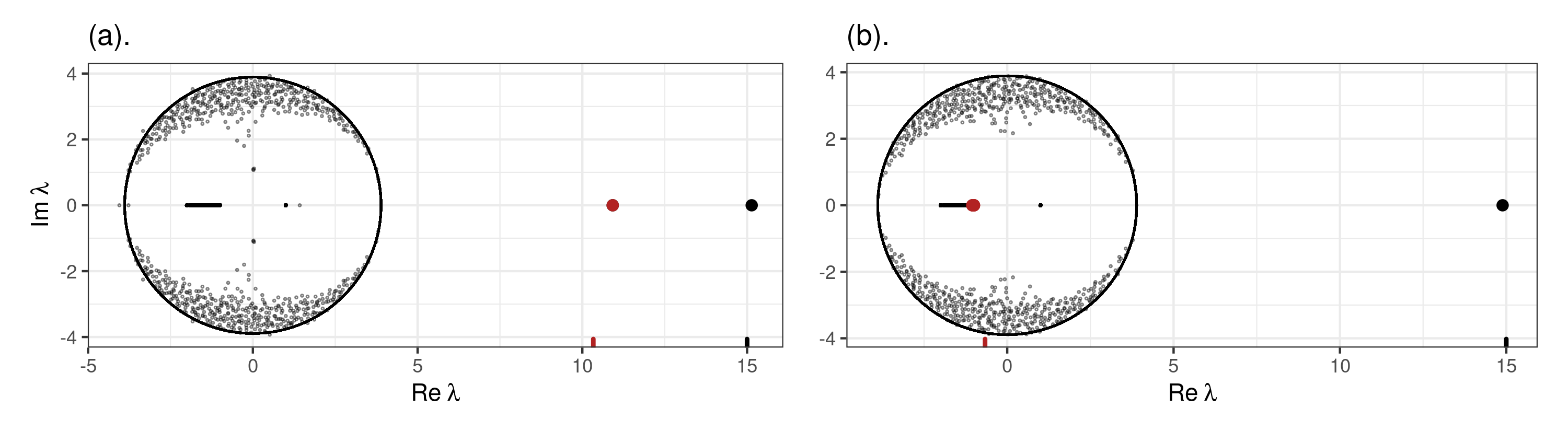}
            \caption{
                \revision{
                    (a): Spectrum of $\mB$ in an HSBM with \rev{edges of size $2$ and $3$, with} $c_2 = c_3 = 5$.
                    In expectation, 80\% of both 2-edges and 3-edges \rev{contain nodes within a single cluster} while 20\% contain nodes from both clusters. 
                    The uninformative real eigenvalue with largest magnitude and community-informative eigenvalue are highlighted (far right). 
                    Ticks on the bottom margin give the theoretical predictions $\alpha$ and $\beta$ for these two eigenvalues \rev{given by \Cref{cor:eigen-expectation-B}}. 
                    In this case, spectral clustering using the eigenvector $\vv$ associated to the second eigenvalue in magnitude achieves Adjusted Rand Index (ARI) 0.98 against ground truth. 
                    The solid circle has radius $\sqrt{\alpha}$. 
                    (b): As in (a), but \rev{only} 10\% of 2-edges and 50\% of 3-edges are within-cluster. 
                    A community-correlated eigenvalue is again present, but lies inside the bulk of the spectrum. 
                    Clustering based on the associated eigenvector would achieve ARI 0.87 against ground truth, but it is not possible to automatically distinguish this informative eigenvalue from nearby uninformative ones. 
                    }
                }\label{fig:eigenvalue-locations-vanilla}
        \end{figure}

        It was argued informally by \citet{angeliniSpectralDetectionSparse2015} and recently proven by \citet{stephanSparseRandomHypergraphs2022} that, in the uniform case, eigenvalues other than $\alpha$ and $\beta$ concentrate with high probability \rev{as $n$ grows large} in the disc of radius $\sqrt{\alpha}$. 
        The informal argument generalizes smoothly. 
        A generalized formal argument, alongside concentration results on the eigenvalues of $\mB$ that strengthen \Cref{cor:eigen-expectation-B}, would be sufficient to prove the following conjecture: 
        \begin{conj} \label{conj:vanilla}
            In the blockmodel $\eta$, with high probability as $n\rightarrow \infty$, 
            \begin{enumerate}
                \item The \revision{eigenvalue of largest magnitude} of $\mB$ \rev{approaches} $\alpha$. 
                \item There exists a community-correlated eigenvector with an associated eigenvalue \rev{approaching} $\beta$. 
                \item The bulk of the spectrum of $\bB$ is confined to the disk of radius \revision{$\sqrt{\alpha}$}. 
                \item \oursimplealg{} is able to \revision{recover labels correlated with ground-truth clusters} iff 
                \begin{equation}
                    \beta^2 > \alpha\;. \label{eq:detectability}
                \end{equation}         
            \end{enumerate}
        \end{conj}
        These conjectures parallel known results for graphs \cite{bordenaveNonbacktrackingSpectrumRandom2015} and uniform hypergraphs \cite{stephanSparseRandomHypergraphs2022}. 

        \Cref{conj:vanilla} highlights a limitation of \oursimplealg{} in the setting in which edges of different sizes reflect cluster structure in different ways. 
        Consider, for example, a setting with hyperedge sizes $k$ and $k'$ in which edges of size $k$ tend to be within-cluster ($\beta_k > 0$), while edges of size $k'$ tend to be between-cluster $(\beta_{k'} < 0$). 
        In such a case, the approximate eigenvalue $\beta = \beta_k + \beta_{k'}$ of $\mB$ can be smaller in magnitude than either of $\beta_{k}$ and $\beta_{k'}$, and could potentially be fully lost in the eigenvalue bulk, causing spectral \revision{clustering} to fail. 
        This can occur even when $\beta_k$ and $\beta_{k'}$ are sufficiently large in magnitude that \revision{clustering} based on edges of size $k$ or $k'$ alone would succeed. 
        \revision{This phenomenon is illustrated in \Cref{fig:eigenvalue-locations-vanilla}(b), in which an informative real eigenvalue lies hidden in the bulk of the spectrum and is hidden among other real, uninformative eigenvalues.}
        The need to adaptively synthesize signals from hyperedges of differing sizes is one motivation of spectral methods based on the belief-propagation Jacobian, \revision{which we develop now}.


\section{Nonbacktracking Operators and Belief Propagation} \label{sec:BP}

We \revision{will} derive a connection between nonbacktracking operators and the belief-propagation algorithm (BP) \cite{bishopPatternRecognitionMachine2006} for community detection in sparse hypergraphs. 
Our derivation extends arguments by \citet{krzakalaSpectralRedemptionClustering2013} in dyadic graphs and  \citet{angeliniSpectralDetectionSparse2015} in uniform hypergraphs. 
We first sketch out the bird's-eye view of the argument. 
BP is known to  be exact on graphical models which are trees, and often reliable on sparse models with local tree-like structure. 
Sparse stochastic blockmodels are an example of \rev{the latter case}. 
Following standard arguments \cite{krzakalaSpectralRedemptionClustering2013}, we show that under certain additional symmetry assumptions, the approximate belief-propagation dynamics have a distinguished fixed point which contains no cluster information. 
Perturbations around this point are encoded in a Jacobian matrix---expressible in terms of the nonbacktracking operators $\mB_k$---whose eigenvectors may therefore contain cluster information. 

\subsection{Belief-Propagation Algorithm}

We will work in the framework of the hypergraph stochastic blockmodel described in \Cref{sec:HSBM}. 
In detection problems, we assume that we observe the edge set $\edges$ but not the label vector $\labelvec$. 
We would like to obtain information about the conditional distribution $\eta(\labelvec|\edges) = \frac{\eta(\edges, \labelvec)}{\eta(\edges)}$ of labels given the edge data $\edges$. 
An especially relevant summary is the marginal distribution of labels for each node, $\eta(z_i|\edges)$. 

We will estimate the marginals using belief-propagation~\cite{bishopPatternRecognitionMachine2006}. 
We treat both subsets $R \in \tuples$ and labels $z_i \in \alphabet$ as factors in a factor graph. 
There is a label factor for each node $i$. 
The message that each label factor sends to node $i$ about its belief that $z_i = s$ is $\q{s}$. 
The \rev{node subset} factors are somewhat more complex.   
Let $\mess{i}{R}{s}$ denote the message that node $i$ passes to the factor $R$ expressing its belief that $z_i = s$. 
Let $\varmess{i}{R}{s}$ denote the message that factor $R$ passes to node $i$ expressing its belief that $z_i = s$. 
Then, the standard BP updates for this model read 
\begin{align}
    \mess{i}{R}{s} &\gets \frac{1}{Z_{iR}}\q{s}\prod_{\substack{Q \in \tuples(i) \setminus R}}\varmess{i}{Q}{s} \label{eq:BP-1} \\ 
    \varmess{i}{R}{s} &\gets \frac{1}{Z_{Ri}} \sum_{\labelvec_R:z_i = s} \eta(R\in \edges|\labelvec_R) \prod_{j \in R\setminus i} \mess{j}{R}{z_j}\;. \label{eq:BP-2}
\end{align}
Here, $Z_{iR}$ and $Z_{Ri}$ are normalizing constants ensuring that $\sum_{s} \mess{i}{R}{s} = \sum_{s} \varmess{i}{R}{s} = 1$. 
\rev{Here and below, sums over label vectors $\vz$ should be assumed to run over $\alphabet^{\ell}$ subject to the explicitly specified constraints.}

On factor graphs which are trees, the updates \cref{eq:BP-1,eq:BP-2} converge, and the desired marginals can be obtained by computing the marginal message for node $i$:  
\begin{equation}
    \mess{i}{}{s} \eqdef \frac{1}{Z_i}\q{s} \prod_{Q \in \tuples(i)} \varmess{i}{Q}{s}\;. \label{eq:marginal-message}
\end{equation}
Our factor graph is admittedly not a tree. 
One possibility is to modify the belief propagation algorithm to account for loops~\cite{kirkleyBeliefPropagationNetworks2021}.
We instead follow the standard argument that: 
\begin{itemize}
    \item The sparsity assumption on $\eta$ implies that the realized hypergraph is locally tree-like as $n$ grows large. 
    \item The updates \cref{eq:BP-1,eq:BP-2} can be approximated by updates that travel only along the edges of the realized hypergraph. 
\end{itemize}
Combined, these two points imply that BP can be approximated by an algorithm that takes place on a \rev{locally tree-like} factor graph. 
The first point is discussed in the graph setting by \cite{decelleAsymptoticAnalysisStochastic2011}. 
The second has been argued heuristically in several papers \cite{decelleAsymptoticAnalysisStochastic2011,krzakalaSpectralRedemptionClustering2013,angeliniSpectralDetectionSparse2015}.\footnote{\revision{Rigorous guarantees for the correctness of similar algorithms in certain sequences of graphs converging to infinite random trees are also available \cite{demboIsingModelsLocally2010,demboGibbsMeasuresPhase2010,salez2011some}.}
    }
We provide below a heuristic statement (\Cref{thm:BP}) that expresses the thrust of this argument. 
\rev{We offer a rigorous statement with proof} in \Cref{thm:edge-reduced-dynamics} (\Cref{sec:BP-proof}). 

\subsection{The BP Jacobian} \label{subsec:BP-jacobian}

For each $k$, define the matrix $\mG_k \in \maps{\alphabet}{}$ with entries 
\begin{equation}
    g_{k;s,t} \eqdef \q{s}\paren*{\frac{\ck{k}{s,t}}{(k-1)\ck{k}{s}} - 1}\;. \label{eq:G}
\end{equation}
Let $\mF$ be the function with components
\begin{equation*}
    \mF(\vmess)_{iR}^{(s)} \eqdef \frac{1}{Z_{iR}}\q{s} \prod_{Q \in \tuples(i)\setminus R} \sum_{\labelvec_Q: z_i = s} \eta(a_Q|\labelvec_Q) \prod_{j \in Q\setminus i} \mess{j}{Q}{z_j}\;.
\end{equation*}
This expression is obtained by substituting \cref{eq:BP-2} into \cref{eq:BP-1}.  
We can then write the dynamics, restricted to the variable $\vmess$, as $\vmess \gets \mF(\vmess)$.  

We are now prepared to relate the belief-propagation algorithm to the nonbacktracking matrices $\bB_k$. 
A precise statement and proof of the relationship requires large amounts of additional notation so we defer them to \Cref{sec:BP-proof}. 
 Here, we state a heuristic version of the result. 

\begin{claim} \label{thm:BP}
    Let $\HG$ be sampled from a sparse Bernoulli stochastic blockmodel $\eta$ in which $\ck{k}{s} = \ck{k}{t}$ for all $s, t \in \alphabet$ and $k \in \Kset$. 
    Then, with high probability as $n\rightarrow \infty$:
    \begin{itemize}
        \item The point $\bar{\vmess}$ with coordinates $\barmess{i}{R}{s} = q^{(s)}$ for all $i \in \nodes$,  $R \in \tuples(i)$, and $s \in \alphabet$ is approximately a fixed point of $\mF$, in the sense that $\mF(\bar{\vmess}) \doteq \bar{\vmess}$.  
        \item The Jacobian of $\mF$ at $\bar{\vmess}$, restricted to a space of appropriately normalized perturbations, has entries of order $O(n^{-1})$ except in a block corresponding to the set of realized edges. 
        The restriction of the Jacobian to this block is $\bJ + O(n^{-1})$, where 
        \begin{equation}
            \bJ \eqdef \sum_{k \in \Kset} \mG_k \otimes \bB_k \in \maps{\alphabet\times \pointed{\edges}}{}\;. \label{eq:jacobian}
        \end{equation}
    \end{itemize} 
\end{claim}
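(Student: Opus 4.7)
The plan is to verify the two bullets by substituting the uniform point into $\mF$ (for the fixed point) and by a log-derivative computation at $\bar{\vmess}$ (for the Jacobian), invoking the sparsity of $\eta$ and the degree-symmetry hypothesis throughout. For the fixed point, I would substitute $\barmess{j}{Q}{s}=\q{s}$ into each $Q$-factor of $\mF(\vmess)_{iR}^{(s)}$: for $Q \in \tuples(i)\setminus R$ of size $k$, the inner sum $\sum_{\labelvec_Q:z_i=s}\eta(a_Q|\labelvec_Q)\prod_{j\in Q\setminus i}\q{z_j}$ evaluates to $(k-1)!\,\ck{k}{s}/n^{k-1}$ when $Q\in\edges$ and to $1-(k-1)!\,\ck{k}{s}/n^{k-1}$ when $Q\notin\edges$. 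The degree-symmetry hypothesis renders each such factor $s$-independent, so $\tilde{\mF}(\bar{\vmess})_{iR}^{(s)}$ equals $\q{s}$ times an $s$-independent constant and normalization by $Z_{iR}$ recovers $\q{s}$ to the claimed $\doteq$ precision.

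For the Jacobian, I would write $\mF = \tilde{\mF}/Z$ and differentiate at $\bar{\vmess}$, obtaining
\[
    \left.\frac{\partial F_{iR}^{(s)}}{\partial \mess{j}{Q}{t}}\right|_{\bar{\vmess}}
    \;=\; \q{s}\!\left[\rho^{(s,t)}_{iR,jQ} \;-\; \sum_{s'}\q{s'}\rho^{(s',t)}_{iR,jQ}\right],
\]
where $\rho^{(s,t)}_{iR,jQ}$ denotes the log-derivative of the $Q$-factor of $\tilde{F}^{(s)}_{iR}$ with respect to $\mess{j}{Q}{t}$. This quantity vanishes unless $i\in Q\setminus j$ and $R\neq Q$, i.e., precisely when $jQ\to iR$, so the support of the Jacobian is organized by the blocks $\bB_k$. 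On the realized-edge block, a direct calculation gives $\rho^{(s,t)} = \ck{k}{s,t}/((k-1)\ck{k}{s})$; combining $\ck{k}{s}=\ck{k}{}$ with \cref{eq:degree-identity} shows $\sum_{s'}\q{s'}\rho^{(s',t)}=1$, so the bracket reduces to $\ck{k}{s,t}/((k-1)\ck{k}{})-1$ and the entry is exactly $g_{k;s,t}$; summing over $k$ then produces $\bJ$. On the non-edge block, $\rho^{(s,t)} = 1 + O(n^{1-k})$; the order-one constant cancels inside the bracket, leaving per-entry residual of size $O(n^{1-k}) \subseteq O(n^{-1})$.

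The hard part will not be any individual entry but the fact that each row of the non-edge block contains $\Theta(n^{k-1})$ nonzero columns of size $k$, so that naive row-sums there are $\Theta(1)$ rather than $O(n^{-1})$. The saving observation is that the leading non-edge contribution is essentially independent of the indices $(iR, jQ)$ and therefore acts as a rank-one operator along a direction annihilated by the simplex constraint $\sum_s \pmess{i}{R}{s} = 0$ -- this is precisely what the qualifier \emph{restricted to appropriately normalized perturbations} refers to. Making this cancellation uniform in $n$, and verifying that the residual off-block error on the tangent space is genuinely $O(n^{-1})$ entrywise, is the technical content that I would defer to the rigorous \Cref{thm:edge-reduced-dynamics} in \Cref{sec:BP-proof}; once that is in place, the block-matching with $\sum_k \mG_k \otimes \bB_k$ reduces to routine bookkeeping of the pointed-edge indices against the rows and columns of the Jacobian.
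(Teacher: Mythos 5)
Your proposal is correct, and at its core it is the same linearization the paper carries out in its rigorous counterpart, \Cref{thm:edge-reduced-dynamics}: evaluate the BP map at the uniform point, verify the fixed-point identity using $\ck{k}{s}=\ck{k}{}$, and read off the realized-edge block of the first-order term as $\sum_{k}\mG_k\otimes\mB_k$ while unrealized-edge directions contribute only $O(n^{-1})$ per entry. The one genuine difference is how the simplex normalization is handled: you differentiate the normalized map directly, so the quotient rule produces the centering $\q{s}\bigl[\rho^{(s,t)}-\sum_{s'}\q{s'}\rho^{(s',t)}\bigr]$, and the identity $\sum_{s'}\q{s'}\rho^{(s',t)}=1$ (from \cref{eq:degree-identity}, degree symmetry, and the symmetry of $\ck{k}{s,t}$) yields exactly $g_{k;s,t}$ entrywise; the paper instead expands the unnormalized realized-edge product, routes all dependence on unrealized-edge messages through a global field term shown (via degree concentration) to be an $s$-independent factor $1+O(n^{-1}+\norm{\vpmess})$, and then applies the projection $\mPi$ onto zero-sum perturbations, which effects the same $-1$ shift. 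Your route is more local and avoids the field-term bookkeeping; the paper's route makes explicit the structural fact that $\mF_1$ depends on $\vmess_0$ only through a near-constant factor, which is what justifies treating the realized-edge dynamics as autonomous. Finally, the row-sum concern you raise about the non-edge block is legitimate but lies beyond what the claim (and the paper's rigorous version) actually asserts—both are entrywise statements, and your $s$-cancellation already delivers the $O(n^{1-k})$ per-entry bound—so deferring aggregate, operator-norm control is consistent with the paper rather than a gap in your argument.
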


\Cref{thm:BP} \rev{states} that the uninformative point $\bar{\vmu}$ is approximately a fixed point of the belief-propagation update $\mF$. 
\rev{T}he stability of this fixed point is governed by the Jacobian matrix, which is dominated by a block which can be approximated by $\bJ$. 
Because $\bJ$ approximates the BP dynamics in a neighborhood of $\bar{\vmu}$, and because the BP dynamics by design attempt to find community structure, we expect that the eigenvectors \revision{of $\bJ$ with large real eigenvalues} to carry information about community structure in the hypergraph $\HG$. 

In general, $\bJ \in \maps{\alphabet \times \pointed{\edges}}{} \simeq \R^{\ell\pointed{m}\times \ell\pointed{m}}$ can be very large, and computing its eigenpairs can be costly.  
We therefore ask whether it is possible to compute on a smaller matrix. 
In the case of a $k$-uniform hypergraph, the relevant eigenvalues of $\bJ = \mG_k \otimes \bB_k$ are of the form $\gamma \beta$, where $\gamma$ is an eigenvalue of $\mG_k$ and $\beta$ an eigenvalue of $\bB_k$. 
This allows~\citet{angeliniSpectralDetectionSparse2015} to compute only on $\bB_k$, or indeed on the smaller $\bB_k'$. 
In nonuniform hypergraphs, however, such a simple reduction is not available. 
We therefore offer a partial generalization in \Cref{thm:reduced-jacobian} that enables us to compute on a smaller matrix on nonuniform hypergraphs. 

\revision{
Let $\vu\in \vectorspace{\alphabet \times \pointed{\edges}}$. 
Index the entries of $\vu$ as $u_{iQ}^{(s)}$, where $iQ \in \pointed{\edges}$ is a pointed edge and $s \in \alphabet$ is a group label. 
Let $\bL \in \maps{\alphabet \times \pointed{\edges}}{2 \alphabet \times \Kset \times\nodes}$ be \revision{the matrix} $\bL:\vu \mapsto \vx = (\vx_1, \vx_2)^T \in \vectorspace{2\alphabet \times \Kset \times \nodes}$, where 
\begin{equation}
    x_{1;i,k}^{(s)} \eqdef \revision{\sum_{Q \in \partial_ki} \sum_{j \in Q\setminus i} u_{jQ}^{(s)}} \quad \text{and} \quad 
    x_{2;i,k}^{(s)}  \eqdef \revision{\sum_{Q \in \partial_ki} u_{iQ}^{(s)}}\;.  \label{eq:alpha-beta-def}
\end{equation}
\begin{theorem}\label{thm:reduced-jacobian}
    There exists a matrix $\mJ' \in \maps{\alphabet \times \Kset \times \nodes}{}$ which can be expressed in terms of the parameter matrices $\curlybrace*{\mG_k}_{k \in \Kset}$ and the adjacency matrices $\curlybrace*{\mA_{k}}_{k \in \Kset}$ such that, if $\lambda \vu = \mJ \vu$, then  $\lambda \vx = \mJ' \vx$. 
    In particular, either $\vx = \vzero$ or $\vx$ is an eigenvector of $\mJ'$ with eigenvalue $\lambda$. 
\end{theorem}
Writing $\mJ'$ and proving \Cref{thm:reduced-jacobian} explicitly requires the introduction of some cumbersome notation. 
We defer the detailed statement and proof to \Cref{thm:jacobian-reduction} in \Cref{sec:reduced-jacobian-proof}. 
} 

\revision{
    We focus on the eigenvectors $\vu$ such that $\mL \vu \neq \vzero$ for clustering algorithms.
    In analogy to the use of $\vx_2$ in \oursimplealgorithm{}, we again use $\vx_2$ here. 
    One intuition for this choice is that  $x_{2;ik}^{(s)}$ sums beliefs that node $i$ belongs to cluster $s$ across edges of size $k$. 
    The vector $\vx_2$ can be obtained either by computing eigenvectors of $\mJ$ and applying the transformation $\mL$ or directly by computing eigenvectors of $\mJ'$.  
}

\subsection{Alternating Belief-Propagation Spectral Clustering}

    \Cref{thm:BP,thm:reduced-jacobian} jointly suggest a modified algorithm based on the eigenvectors of $\bJ$ or $\bJ'$ rather than the eigenvectors of $\bB$ or $\bB'$. 
    Since $\bJ$ depends on the blockmodel parameters through the matrices $\{\mG_k\}$, we alternate between spectral clustering steps and updates to these parameters. 
    This alternating structure is reminiscent of expectation-maximization~\cite{dempster1977maximum} and other coordinate-ascent algorithms. 
    However, our alternating algorithm is not literally a form of coordinate-ascent because the spectral clustering step does not maximize a likelihood objective. 
    
    Indeed, one can carry out \ouralgorithm{} without even specifying a likelihood objective. 
    While the absence of a likelihood makes certain tasks harder, there is also an important computational benefit. 
    In general, fully specifying a stochastic blockmodel requires specifying $\eta(R\in \edges |\labelvec_R)$ for every possible combination of labels $\labelvec_R$. 
    In a hypergraph with edges up to size $k$ and $\ell$ group labels, there are $\frac{\ell^k}{k!}$ such combinations. 
    Calculating a likelihood under the blockmodel therefore requires the estimation of a potentially very large number of parameters. 

    In contrast, \ouralgorithm{} does not require specification of $\eta(R\in \edges|\labelvec_R)$, but only the entries of $\mG_k$ for each $k$. 
    To do this, we need to estimate the label proportions $\q{s}$ and the pairwise edge counts $m_k(s,t)$ for each $k$, $s$, and $t$. 
    This is a total of $\ell + \frac{1}{2}k\ell^2$ parameters, a number which scales much more favorably than \rev{the} $\frac{\ell^k}{k!}$ \rev{parameters required for likelihood maximization}. 
    As a result, \ouralg{} is both less sensitive to the fine details of the HSBM parameters and more robust against overfitting concerns. 
    \rev{We show how to estimate the necessary parameters given an estimated label vector in \Cref{sec:estimation-of-G}.}

    The eigenvectors of $\bJ'$ are elements of $\vectorspace{2\alphabet\times \Kset \times \nodes}$. 
    In order to carry out the clustering step, we need to obtain from these eigenvectors a set of feature vectors in $\vectorspace{\nodes}$ that carry information on the level of nodes. 
    To do so, we use \cref{eq:marginal-message}, which suggests that the linearized perturbations in the marginal label distributions can be obtained by summing over all messages incoming to node $i$. 
    From the eigenvector $\revision{\vx}_j$ we form the matrix $\tilde{\revision{\mY}}_j \in \maps{\nodes}{\alphabet}$ with entries $\tilde{\revision{y}}_{j;i,s} = \mathrm{sign}\paren{\sum_{k\in K}\revision{x}_{\revision{2};ik}^{(s)}}$, with $x_{\revision{2};ik}^{(s)}$ \revision{as defined in} \cref{eq:alpha-beta-def}.
    The extraction of the sign is again useful for noise reduction in small instances.\footnote{\revision{A small experiment supporting the extraction of the sign of the expression $\sum_{k\in K}\revision{x}_{1;ik}^{(s)}$ is shown in \Cref{fig:threshold-experiment}}.} 
    The columns of the matrices $\{\revision{\tilde{\mY}_j}\}$ form the embedding coordinates on which we cluster. 
    This clustering step gives a new label vector $\labelvec$. 
    We use $\labelvec$ to compute updated estimates of the model parameters, and repeat. 
    A single stage of the clustering step is illustrated in \Cref{fig:algorithm}.

    \begin{algorithm} 
        \caption{Step of Alternating BP Hypergraph Spectral Clustering (\ouralg{})}
        \label{alg:BP-spectral-clustering}
        \begin{algorithmic}[1]
        \REQUIRE Hypergraph $\HG$, current clustering $\labelvec_0$ \revision{with $\ell$ groups}
        \STATE $\{\mG_k\} \gets \mathrm{estimateParameters}(\HG, \labelvec_0)$ 
        \STATE Form $\bJ'$ \revision{according to \Cref{thm:reduced-jacobian,thm:jacobian-reduction}}. 
        \STATE $\{\vx_\ell\}_{\ell = 1}^h \gets$ \text{\revision{$h$ eigenvectors of $\bJ'$ with real eigenvalues larger than 1 in magnitude.}}
        \STATE Initialize $\tilde{\mY}$. 
        \FOR{$j = 1,\ldots,h\;, s = 1,\ldots,\ell\;, i = 1,\ldots,n$}
            \STATE $\tilde{y}_{j;i,s} \gets \mathrm{sign}\paren{\sum_{k\in \Kset}x_{\revision{2};ik}^{(s)}}$
        \ENDFOR
        \STATE $\labelvec = $ \text{Cluster}$(\tilde{\mY})$
        \ENSURE $\labelvec$. 
        \end{algorithmic}
    \end{algorithm}

    \revision{
        It is also possible to carry out \ouralg{} using the full Jacobian matrix $\mJ$ rather than the reduced matrix $\mJ'$. 
        In this case, one computes eigenvectors $\{\vu_{\ell}\}_{\ell=1}^{h}$ of $\mJ$ and forms from them the vectors $\{\vx_\ell\}_{\ell=1}^{h}$ via \eqref{eq:alpha-beta-def}. 
        One then proceeds with the remainder of the algorithm. 
        Whether it is best to use $\mJ$ or $\mJ'$ depends on the size of the hypergraph under consideration. 
        The cost of \ouralg{} is dominated by (a) the cost of allocating the matrix used and (b) the cost of computing eigenpairs. 
        Experimentally, we have found that the allocation cost of $\mJ'$ can exceed that of $\mJ$ on small instances.
        When performing many small experiments, such as in  \Cref{fig:heatmaps}, for example, we therefore use $\mJ$ directly. 
        For larger instances, we found faster overall performance using $\mJ'$. 
    }
    
    \subsection{Number of Eigenvectors}
    
    \begin{figure}
        \centering 
        \includegraphics[width=\textwidth]{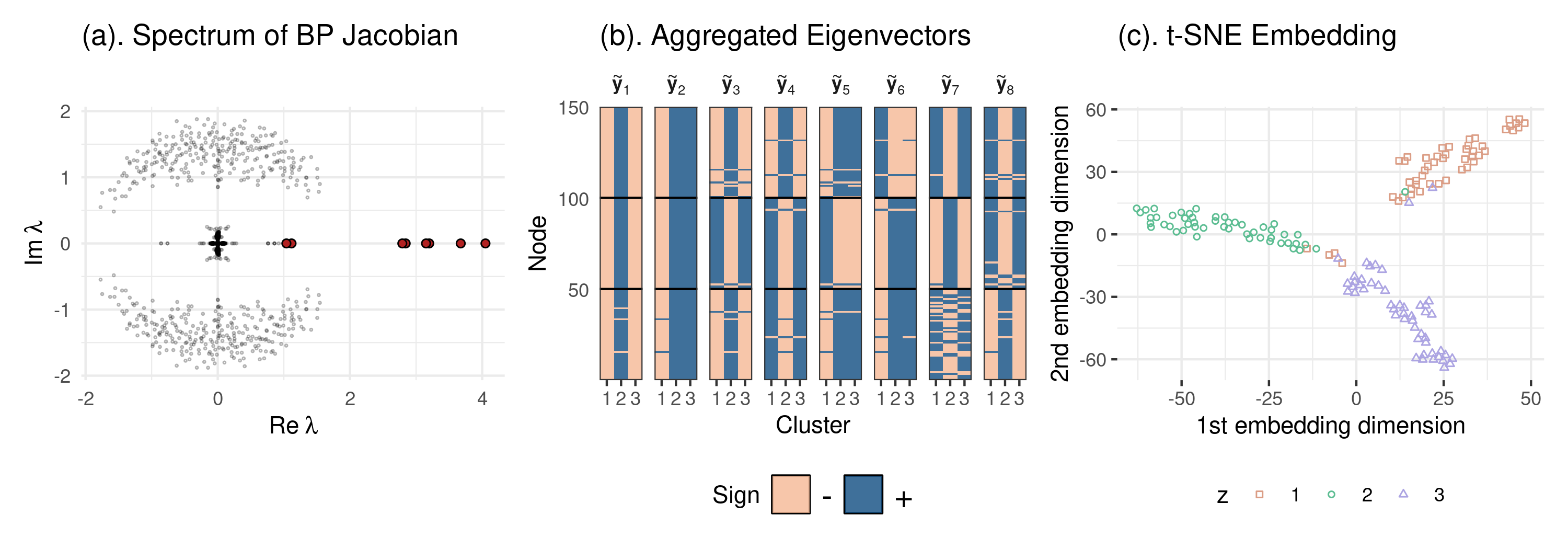}
        \caption{
            Illustration of the stages of \ouralg{} (\Cref{alg:BP-spectral-clustering}). 
            In this example, we generated a synthetic hypergraph with three clusters of 50 nodes each, with $\ck{2}{} = \ck{3}{} = 5$. 
            90\% of 2-edges and 10\% of 3-edges are within-cluster. 
            \revision{We have intentionally used a small hypergraph in order to promote the legibility of the figure.}
            (a): 
            First, we compute the spectrum of the matrix $\bJ'$ described in \Cref{thm:reduced-jacobian}.  
            There are eight real eigenvalues larger than unity in magnitude. 
            (b): 
            Next, we form \rev{features $\tilde{\vy}_h$ from the eigenvectors of $\bJ$}, which we visualize here as matrices. 
            Shown are the signs of the entries of aggregated eigenvectors for highlighted real eigenvalues in (a). 
            The entries are $\tilde{\revision{y}}_{j;i,s}$ \revision{as calculated in \Cref{alg:BP-spectral-clustering}}. 
            Thick horizontal lines separate nodes in different ground-truth clusters. 
            (c): 
            Finally, we study the nodes in the space spanned by the columns $\revision{\tilde{\vy}}_{j;\cdot,s}$ for each aggregate eigenvector $\revision{\tilde{\vy}}_j$. 
            In our example, this space has $8 \times 3 = 24$ dimensions. 
            In a full run of \Cref{alg:BP-spectral-clustering}, we would then perform a clustering algorithm such as $k$-means to obtain labels. 
            Here, we instead visualize the true clusters in two dimensions using t-SNE. 
            \rev{Nodes in the same cluster appear close in the embedding space.}
        }
        \label{fig:algorithm}
    \end{figure}

    \revision{
        In \Cref{alg:BP-spectral-clustering}, we extract from $\mJ'$ (or, alternatively, $\mJ$) the $h$ eigenvectors with real eigenvectors larger than unity in magnitude. 
        This is very different from the standard approach in the case of sparse uniform hypergraphs and graphs. 
        There, in a blockmodel with $\ell$ detectable communities, there are $\ell$ real eigenvalues outside the bulk of the spectrum with high probability. 
        The real \revision{eigenvalue of largest magnitude} is uninformative, and the remaining $\ell-1$ real eigenvalues have eigenvectors from which the $\ell$ communities can be detected.
        In moderately nonsparse regimes, there are also real eigenvalues with community-informative eigenvectors inside the bulk of the spectrum \cite{costeEigenvaluesNonbacktrackingOperator2021}; these eigenvalues can also be used in algorithms \cite{JMLR:v22:20-261}. 
    }\revision{
        In the nonuniform case, the need to consider the Jacobian $\mJ$ (or $\mJ'$) rather than than the nonbacktracking matrix $\mB$ complicates the situation considerably. 
        The Jacobian $\mJ$ is a sum of terms  \eqref{eq:jacobian}, each of which contains a different size-specific nonbacktracking operator $\mB_k$. 
        The \revision{informative} eigenvectors of an individual term are, unfortunately, not guaranteed to be even correlated with \revision{informative} eigenvectors of the complete sum $\mJ$. 
        As a result, simply counting eigenvalues outside the bulk of the spectrum of $\mJ$ is not a reliable guide to the total number of detectable communities. 
        Relating the number of detectable communities to the number of eigenvectors to extract from $\mJ$ is a nontrivial avenue of future work. 
    }\revision{
        Instead, we justify the choice of eigenvectors in \Cref{alg:BP-spectral-clustering} from a dynamical perspective. 
        Each eigenvalue of $\mJ$ larger than unity in magnitude corresponds to an unstable direction. 
        Since messages correspond to probability distributions, only real perturbations to messages are interpretable. 
        A reasonable heuristic choice is therefore to extract the eigenvectors of $\mJ$ or $\mJ'$ whose eigenvalues are both real and larger than unity in magnitude. 
        Unlike in the case of uniform hypergraphs and graphs, informative real eigenvalues can lie \emph{inside} the bulk of spectrum of $\mJ$, as illustrated in \Cref{fig:algorithm}(a). 
        The development of automated guidance for the location of informative eigenpairs in this setting in terms of the desired number of clusters and current parameter estimate would be a useful avenue of future study. 
        The need to compute more eigenvectors on the relatively large matrices $\mJ$ or $\mJ'$ is an important practical limitation of our proposed algorithms. 
    }
    \subsection{Number of Clusters}
    \revision{
        It is also necessary to specify the number of clusters $\ell$ in \cref{alg:BP-spectral-clustering}, as this number is not determined by the number of informative eigenvectors.     
        Many approaches to this problem take a model-selection perspective, choosing a number of clusters to optimize an information criterion or summarize a Bayesian posterior. 
        Unfortunately, such approaches are not available here because \ouralg{} does not optimize a likelihood or other objective function. 
    }

        In order to select the number of clusters and determine which clustering to finally accept, we therefore use a surrogate objective function. 
        We use $k$-means for the $\mathrm{Cluster}$ step, and we use as an objective function the proportion of variance in the embedding space explained by the returned clusters.  
        This enables direct comparisons between candidate clusterings with the same numbers of cluster labels, while scree plots can assist choices about the correct number of cluster labels to use. 
        

\section{Conjectured Thresholds for \ouralg{}} \label{sec:thresholds}

    We now consider the performance of our spectral clustering algorithms \oursimplealg{} and \ouralg{} on sparse synthetic data generated by a simple hypergraph stochastic blockmodel. 
    Our development is motivated in part by known behavior of spectral methods in sparse random graphs~\cite{nadakuditiGraphSpectraDetectability2012,krzakalaSpectralRedemptionClustering2013}. 
    We first restate a standard definition of \emph{detection} in clustering problems.  
    For each $n$, let $\eta_n(\rev{\vtheta}, \labelvec_n)$ be a probability distribution over graphs on $n$ nodes\rev{, parameterized by some vector $\rev{\vtheta}$}. 
    Each such distribution possesses the same shared set of parameters $\rev{\vtheta}$, as well as a planted partition $\labelvec_n$ of nodes. 
    Let $\G_n \sim \eta_n(\rev{\vtheta}, \labelvec_n)$. 
    Let $\mathtt{A}$ be a clustering algorithm, which we view as a map $\G_n \mapsto \hat{\labelvec}_n$ from the data to an estimate of the true labels. 
    Let $\rho$ be a correlation function measuring label agreement with the property that if $\labelvec_n$ and $\hat{\labelvec}_n$ are length-$n$ labels sampled independently, then $\rho(\labelvec, \hat{\labelvec})\rightarrow 0$ with high probability as $n\rightarrow \infty$. 
    Examples of measures include mutual information, \revision{the} Adjusted Rand \revision{Index}, and suitably adjusted versions of the overlap~\cite{decelleAsymptoticAnalysisStochastic2011,abbeCommunityDetectionStochastic2017}. 

    \begin{definition}\label{def:detectability}
        Algorithm $\mathtt{A}$ \emph{detects communities} in the sequence $\{\eta_n, \rev{\vtheta}, \labelvec_n\}$ with respect to the correlation function $\rho$ if there exists $\epsilon > 0$ such that, with high probability as $n$ grows large, $\rho(\labelvec_n, \mathtt{A}(\G_n)) > \epsilon$.
    \end{definition}
    
    \revision{
        As briefly described in \Cref{sec:intro}, a notable setting is the graph stochastic blockmodel with equal-sized communities in which the parameter $\rev{\vtheta} = (a,b)$ governs the rates of edges within and between communities. 
        In this case, the size of the signal-to-noise ratio $\phi$ from \eqref{eq:graph-detectability} determines whether any algorithm exists that detects communities in $\{\eta_n, \rev{\vtheta}, \vz_n\}$ as $n\rightarrow \infty$.  
    }
    
    \Cref{def:detectability} generalizes to the setting of hypergraphs with planted cluster structure. 
    It is necessary only to allow $\eta_n(\rev{\vtheta}, \labelvec_n)$ to be a probability distribution over hypergraphs. 
    Indeed,~\citet{angeliniSpectralDetectionSparse2015} have offered conjectures concerning the ability of nonbacktracking spectral methods to detect planted clusters in the setting of uniform hypergraphs. 
    Several of these conjectures were recently proven by \citet{stephanSparseRandomHypergraphs2022}. 
    Our purpose in this section is to extend these conjectures to the setting of nonuniform hypergraphs. 
    We will offer experimental support of these conjectures and leave their proofs to future work. 

    \revision{
        In \ouralg{}, it is necessary to form an estimate of the parameter matrices $\curlybrace*{\mG_k}_{k\in \Kset}$. 
        In \Cref{alg:BP-spectral-clustering}, we form new estimates of these matrices in each iteration. 
        While realistic, the need for re-estimation considerably complicates detectability analysis. 
        We therefore consider an idealized setting in which the correct values of the parameter matrices $\curlybrace*{\mG_k}_{k\in \Kset}$ are known \emph{a priori}. 
        The assumption that the parameters are known exactly is sometimes called the \emph{Nishimori condition} in statistical physics~\cite{decelleAsymptoticAnalysisStochastic2011}.\footnote{In graph SBMs, the relaxation of the Nishimori condition leads to a much more complicated analysis with qualitatively different conclusions~\cite{kawamotoAlgorithmicDetectabilityThreshold2018}.}
    }

        For each $k$, \rev{direct calculation shows that} the vector $\mathbf{1} = (1, 1)$ is an eigenvector of $\mG_k$ with eigenvalue $\frac{\beta_k}{\alpha_k}$. 
        \revision{
            Let $\vu\in \vectorspace{\pointed{\edges}}$ be as in \Cref{thm:eigen-expectation}. 
        }
        Let $\hat{\vu} = \mathbf{1}\otimes \vu$ and $\hat{\vv} = \mathbf{1}\otimes \vv$. 
        Let $\lambda = \sum_{k \in \Kset} \rev{\frac{\beta_k^2}{\alpha_k}}$. 
        \Cref{eq:jacobian,thm:eigen-expectation} now imply the following result: 
        \begin{cor} \label{cor:jacobian-eig}
            Under the two-group blockmodel $\eta$, for $\ell = 1,2$, we have \revision{
            \begin{align}
                \E\sqbracket*{(\mJ\hat{\vu})_{iQ}^{(\ell)} - \beta \hat{u}_{iQ}^{(\ell)}|Q\in \edges} &\doteq 0 \\  
                \E\sqbracket*{(\mJ\hat{\vv})_{iQ}^{(\ell)} - \lambda \hat{v}_{iQ}^{(\ell)}|Q\in \edges} &\doteq 0\;. 
            \end{align}}
        \end{cor}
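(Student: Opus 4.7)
The plan is to exploit the Kronecker structure of $\mJ = \sum_{k\in\Kset} \mG_k \otimes \mB_k$ together with the fact that $\vu$ and $\vv$ are approximate eigenvectors of each $\mB_k$ by \Cref{thm:eigen-expectation}, and that $\mathbf{1} = (1,1)^T$ is an exact eigenvector of each $\mG_k$ with eigenvalue $\gamma_k = \beta_k/[(k-1)c_k]$.

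First I would apply the mixed-product identity $(\mG_k \otimes \mB_k)(\mathbf{1} \otimes \vu) = (\mG_k \mathbf{1}) \otimes (\mB_k \vu) = \gamma_k \, \mathbf{1} \otimes (\mB_k \vu)$ termwise and sum over $k\in \Kset$ to obtain
\begin{equation*}
    \mJ \hat{\vu} \;=\; \sum_{k\in \Kset} \gamma_k \, \mathbf{1}\otimes (\mB_k \vu)\;.
\end{equation*}
Reading off the $(\ell,iQ)$-entry gives $(\mJ \hat{\vu})_{iQ}^{(\ell)} = \sum_{k\in \Kset} \gamma_k (\mB_k \vu)_{iQ}$, independent of $\ell\in\{1,2\}$ since the $\mathbf{1}$-coordinate is constant. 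Taking conditional expectations with respect to $\eta$ given $Q\in\edges$ and invoking \cref{eq:eigen-expectation-1} of \Cref{thm:eigen-expectation} term by term yields
\begin{equation*}
    \E\!\sqbracket*{(\mJ\hat{\vu})_{iQ}^{(\ell)} \,\big|\, Q\in\edges}
    \;\doteq\; \sum_{k\in\Kset} \gamma_k \alpha_k \, u_{iQ}
    \;=\; \nu \, \hat{u}_{iQ}^{(\ell)}\;,
\end{equation*}
where I used that $\hat{u}_{iQ}^{(\ell)} = u_{iQ}$ by the definition $\hat{\vu} = \mathbf{1}\otimes \vu$. Rearranging produces the first claimed relation. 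The second claim for $\hat{\vv}$ follows from an identical computation, substituting \cref{eq:eigen-expectation-2} in place of \cref{eq:eigen-expectation-1}, and reading off $\lambda = \sum_k \gamma_k \beta_k$ in place of $\nu$.

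There is essentially no serious obstacle here; the entire argument is a structural consequence of how $\mJ$ decomposes over edge-sizes combined with the bilinearity of the Kronecker product. The only bookkeeping point to check is that the $\doteq$ relation is preserved under the finite sum over $k\in\Kset$ and under multiplication by the fixed constants $\gamma_k$ (both of which depend only on the blockmodel parameters and $k$, not on $n$), which is immediate since $\abs{\Kset} = \kappa$ is fixed as $n\to\infty$. The result may therefore be stated as a direct corollary of \Cref{thm:eigen-expectation} and the definition \cref{eq:jacobian} of $\mJ$.
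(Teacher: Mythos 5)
Your proposal is the paper's own (implicit) argument: \Cref{cor:jacobian-eig} is presented as a direct consequence of \cref{eq:jacobian} and \Cref{thm:eigen-expectation}, and the Kronecker mixed-product step followed by termwise application of \cref{eq:eigen-expectation-1,eq:eigen-expectation-2} and linearity of conditional expectation is exactly that derivation; your bookkeeping remark that $\doteq$ survives a fixed finite sum with $n$-independent constants $\gamma_k$ is also correct. One caveat, which you inherit from the paper's text rather than introduce yourself: under the symmetric assumptions of \Cref{thm:eigen-expectation}, \cref{eq:G} together with \cref{eq:degree-identity} gives $\gkin + \gkout = 0$, so the rows of $\mG_k$ sum to zero and $(1,1)$ is an eigenvector of $\mG_k$ with eigenvalue $0$; the eigenvalue $\gamma_k = \beta_k/\paren*{(k-1)\ck{k}{}}$ is instead carried by $(1,-1)$ (consistent with the later identity $\gamma_k = \gkin - \gkout = 2\gkin$). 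Taken literally, your key identity $\mG_k\mathbf{1} = \gamma_k\mathbf{1}$ therefore fails, and the mixed-product computation would give $\mJ(\mathbf{1}\otimes\vu) = \vzero$ rather than $\nu\,\hat{\vu}$; both your argument and the corollary as stated go through verbatim once $\mathbf{1}$ is replaced by $(1,-1)$ in the definitions of $\hat{\vu}$ and $\hat{\vv}$, which is evidently what is intended.
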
 
        \revision{
            Importantly, it is not guaranteed that $\lambda \leq \beta$. 
            Unlike in the case of uniform hypergraphs, therefore, it may be the real eigenvalue of largest magnitude that carries community information. 
            \Cref{fig:eigenvalue-locations} in \Cref{sec:additional-experiments} computes the two real eigenvalues of $\mJ$ of largest magnitude in a nonuniform hypergraph, finding excellent agreement with \Cref{cor:jacobian-eig} and illustrating this phenomenon. 
        }
        Motivated by these results, we pose the following conjecture: 
        \begin{conj} \label{conj:jacobian-threshold}
            In the same setting as \Cref{thm:eigen-expectation}, with high probability as $n$ grows large, 
            \begin{itemize}
                \item $\mJ$ possesses a real, community-correlated eigenvector with eigenvalue $\lambda + o(1)$. 
                \item \ouralg{}, when initialized with knowledge of the true parameters $\ckin$ and $\ckout$, is able to detect ground-truth clusters if $\abs{\lambda} > 1$.   
            \end{itemize}
        \end{conj}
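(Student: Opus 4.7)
The plan is to establish the two bullets in sequence, treating the first as a spectral concentration result and the second as a downstream consequence combined with a linear stability argument for the uninformative BP fixed point.

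For the first bullet, I would start from \Cref{cor:jacobian-eig}, which already supplies $\hat{\vv} = \mathbf{1} \otimes \vv$ as an in-expectation right eigenvector of $\bJ$ with eigenvalue $\lambda$. The goal is to promote this into concentration around a genuine eigenpair $(\hat{\lambda}, \mathbf{w}^\star)$ of $\bJ$ with $\hat{\lambda} = \lambda + o(1)$ real and $\mathbf{w}^\star$ directionally close to $\hat{\vv}$. I would first pass to the smaller matrix $\bJ'$ via \Cref{thm:reduced-jacobian} (size $O(\ell\kappa n)$), which places us in a sparse random matrix setting analogous to those studied by~\citet{bordenaveNonbacktrackingSpectrumRandom2015} for graphs and~\citet{stephanSparseRandomHypergraphs2022} for uniform hypergraphs. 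The adaptation requires two steps: (i) bound the bulk of $\bJ'$ inside a disc of radius $\rho_\star \le 1$ by controlling $\E[\mathrm{tr}((\bJ'^{*}\bJ')^s)]$ for $s \sim \log n$, where the combinatorial sum reduces to counting closed nonbacktracking walks on $\HG$ weighted by products of $\mG_k$ entries and of affinities $\omega(\labelvec_R)/n^{|R|-1}$; and (ii) isolate the outlier eigenvalues $\nu$ and $\lambda$ via perturbation arguments anchored at $\hat{\vu}$ and $\hat{\vv}$, using sparsity to control the residual $\bJ\hat{\vv} - \lambda\hat{\vv}$ in operator norm. Realness of the outliers follows because the test vectors are real and the residual is concentrated tightly enough to prevent a complex-conjugate pair from splitting off from $\lambda$.

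For the second bullet, assume the first bullet holds so that $\bJ$ has an isolated real eigenvalue $\hat{\lambda}$ with $|\hat{\lambda}| > 1 \ge \rho_\star$ and an eigenvector $\mathbf{w}^\star$ close in direction to $\hat{\vv}$. Applying the aggregation operator $\bL$ from \Cref{thm:reduced-jacobian}, the image $\bL \mathbf{w}^\star$ has node-indexed entries that, after the sign step used in \Cref{alg:BP-spectral-clustering}, are approximately proportional to $\sigma_i$; this follows from \cref{eq:community-expectation}, which gives $\sum_{Q \in \edges_k(i)} v_{iQ} \doteq \beta_k \sigma_i$. Because \ouralg{} is initialized with the true parameters, $\{\mG_k\}$ is estimated accurately, so $\bJ'$ is formed correctly and its real leading eigenvectors are recovered. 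The $k$-means step then acts on node-level features non-trivially correlated with $\vsigma$, yielding a label vector $\hat{\labelvec}$ whose overlap with $\labelvec$ exceeds chance by a constant with high probability, establishing detection in the sense of \Cref{def:detectability}. The threshold $|\lambda| > 1$ arises naturally as the linear stability threshold of the uninformative BP fixed point: below it, $\bJ$ acts as a contraction on community-correlated perturbations and no informative eigendirection can separate from the bulk.

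The hard part will be step (i), the bulk bound for $\bJ'$. In the $k$-uniform case, $\bJ = \mG_k \otimes \bB_k$ factors as a Kronecker product and its spectrum is a direct product of the spectra of $\mG_k$ and $\bB_k$, allowing the bulk of $\bB_k$ to be attacked directly. For nonuniform hypergraphs, $\bJ = \sum_k \mG_k \otimes \bB_k$ mixes contributions across edge sizes, and closed nonbacktracking walks that switch edge size at each step contribute cross terms weighted by products $\prod_t g_{k_t; s_t, s_{t+1}}$ that do not factor cleanly. Controlling these cross-terms, and in particular showing that they cannot inflate the bulk radius beyond $1$, is the technical heart of the argument. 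A promising route would be to derive an Ihara-Bass-type identity tailored to $\bJ$ itself, paralleling \Cref{thm:ib-nonuniform}, that converts the bulk bound into a statement about eigenvalues of a polynomial matrix pencil in $\mA_k$ and $\mD_k$, to which existing sparse-regime spectral techniques could then be applied.
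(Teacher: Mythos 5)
The statement you are addressing is posed in the paper as a \emph{conjecture}, not a theorem: the paper's only support for it is the in-expectation eigenrelation of \Cref{cor:jacobian-eig} (itself a direct consequence of \Cref{thm:eigen-expectation} and the block form $\bJ = \sum_k \mG_k \otimes \bB_k$), together with the synthetic experiments of \Cref{fig:heatmaps}; the authors explicitly leave a proof to future work and point to random-matrix tools of the type used by \citet{bordenaveNonbacktrackingSpectrumRandom2015} and \citet{stephanSparseRandomHypergraphs2022}. Your proposal is aligned with that suggested route, but it is a research plan rather than a proof, and the parts you defer are precisely the open content of the conjecture. Concretely: (i) the trace-method bulk bound for $\bJ'$ (or $\bJ$) is not carried out, and you yourself identify that the nonuniform case destroys the Kronecker factorization that makes the uniform case tractable --- the cross-terms from walks that change edge size, weighted by products of $\mG_{k}$ entries, are exactly what no existing result controls; asserting a bulk radius $\rho_\star \le 1$ is an unproven (and nontrivial) claim, since the threshold $\abs{\lambda} > 1$ in the conjecture comes from linear stability of the uninformative BP fixed point, and identifying that stability radius with a spectral bulk radius of the random matrix $\bJ'$ is itself part of what must be established. (ii) Promoting the in-expectation relation $\E[\bJ\hat{\vv} - \lambda \hat{\vv}] \doteq 0$ to an operator-norm bound on the residual $\bJ\hat{\vv} - \lambda\hat{\vv}$ in the sparse regime is stated but not argued; in sparse models the residual does not concentrate entrywise, which is exactly why Bordenave--Lelarge--Massouli\'e-type arguments require delicate path-counting rather than naive perturbation theory. (iii) The final step, converting an eigenvector of $\bJ'$ that is merely correlated in expectation with $\vsigma$ into a guarantee that the sign-aggregation plus $k$-means step of \Cref{alg:BP-spectral-clustering} achieves overlap bounded away from zero (detection in the sense of \Cref{def:detectability}), is asserted via \cref{eq:community-expectation} but not proved; that equation is again an expectation statement, and the paper itself notes that concentration results are needed before such a claim can be made.

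In short, your outline is a reasonable and faithful elaboration of the program the paper gestures at, and the decomposition into bulk bound, outlier isolation, realness of the outlier, and algorithmic detection is the right skeleton. But since every load-bearing step is deferred --- most critically the bulk control for the non-Kronecker sum $\sum_k \mG_k \otimes \bB_k$ --- the proposal does not close the conjecture, and it should be presented as a proof strategy rather than a proof. The suggestion of an Ihara--Bass-type identity for $\bJ$ itself (paralleling \Cref{thm:ib-nonuniform}, whose analogue for $\bJ$ is partially provided by \Cref{thm:reduced-jacobian}) is a sensible concrete first step toward making the bulk bound tractable.
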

        
        \revision{
            We now verify that \ouralg{}, initialized with knowledge of $\ckin$ and $\ckout$, is able to detect clusters in broader parameter regimes than \oursimplealg{}. 
            After algebraic rearrangement, we can write the respective detectability conditions from \Cref{conj:vanilla,conj:jacobian-threshold} as 
            \begin{align}
                \phi_1 &\eqdef \frac{\paren*{\sum_{k \in K}\beta_k}^2}{\sum_{k \in K} \alpha_{k }} > 1 \tag{\oursimplealg{}} \\ 
                \phi_2 &\eqdef \lambda = \sum_{k \in K}\frac{\beta_k^2}{\alpha_k} > 1\;. \tag{\ouralg{}}
            \end{align}
            Noting that $\alpha_k \ge 0$ and $\beta_k = 0$ if $\alpha_k = 0$, an application of Bergstr\"{o}m's inequality~\cite{bergstrom1949triangle, beckenbach2012inequalities} yields $\phi_2 \ge \phi_1$.
            Since $\phi_2 \geq \phi_1$, we conclude that, under \Cref{conj:vanilla,conj:jacobian-threshold}, \ouralg{} with true parameters succeeds in detecting communities in all cases for which \oursimplealg{} succeeds. 
            We also expect to find cases in which $\phi_2 > 1 \geq \phi_1$. 
            In such cases cases, our conjectures imply that \ouralg{} succeeds when \oursimplealg{} fails. 
            We support this expectation experimentally in \Cref{fig:heatmaps} below. 
            }

    \subsection{Parameterized Thresholds} \label{sec:parameterized-thresholds}

        We now illustrate \Cref{conj:vanilla,conj:jacobian-threshold} \rev{computationally}. 
        To do so, we first need to express conditions on the eigenvalues in terms of the parameters of an HSBM. 
        We specify the affinity function implicitly via a ball-dropping process \cite{ramani2019coin}. 
        We first generate a number of $k$-edges. 
        With probability $p_k$, a $k$-edge is sampled uniformly at random from the set of all node $k$-subsets $R$ such that all nodes in $R$ have the same cluster label. 
        With probability $1-p_k$, the $k$-edge is sampled uniformly at random from the set of all $k$-subsets in which at least two nodes have differing labels. 
        We refer to the former type of edge as \emph{within-cluster} and the latter as \emph{between-cluster}. 
        
        \revision{
            We first consider \oursimplealg{}. 
            In this HSBM, \cref{eq:detectability} in \Cref{conj:vanilla} defines a pair of hyperplanes in the coordinates $\{p_k\}_{k \in \Kset}$.
        }
        This follows from two facts. 
        First, by construction, $\alpha$ does not depend on $\{p_k\}$. 
        Second, $\beta$ is an affine function of  $\{p_k\}$:

        \begin{lemma} \label{lm:vanilla-parameterized-eigs}
            In this model, with $r_k \eqdef \frac{1 - 2^{2-k}}{2 - 2^{2-k}}$, we have 
            \begin{align}
                \alpha &=  \sum_{k \in \Kset} (k-1)c_k\\
                \beta  &= \sum_{k \in \Kset}(k-1)c_k\sqbracket*{2(1-r_k)p_k + 2r_k - 1}\;. \label{eq:vanilla-eig-parameterized}
            \end{align}
        \end{lemma}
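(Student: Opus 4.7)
The plan is to compute $\beta = \sum_{k\in\Kset}\beta_k$ using $\beta_k = (\ckin - \ckout)/2$ by directly unpacking $\ckin$ and $\ckout$ from the ball-dropping construction. Fix a node $i$ in cluster 1 and condition on the partition having balanced sizes (which holds w.h.p.). I would split the $k$-edges incident to $i$ into within-cluster edges (sampled with probability $p_k$) and between-cluster edges (sampled with probability $1-p_k$), and for each type account separately for (a) the expected number of such edges incident to $i$ and (b) the expected number of cluster-1 versus cluster-2 nodes among the other $k-1$ endpoints.

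For (a), an elementary ratio of binomial coefficients shows that a uniformly chosen within-cluster $k$-set contains $i$ with probability $\binom{n/2-1}{k-1}/[2\binom{n/2}{k}] \doteq k/n$, and (using $\binom{n-1}{k-1} - \binom{n/2-1}{k-1}$ over $\binom{n}{k} - 2\binom{n/2}{k}$) a uniformly chosen between-cluster $k$-set also contains $i$ with probability $\doteq k/n$. Combined with $c_k \doteq m_k k/n$, this yields expected within-cluster and between-cluster $k$-edge counts at $i$ of $p_k c_k$ and $(1-p_k)c_k$ respectively. All within-cluster contributions land in cluster 1, so they contribute $p_k c_k (k-1)$ to the expected number of cluster-1 neighbors through size-$k$ edges.

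The main technical step is (b) for between-cluster edges. Writing $N = n/2$, the expected number of cluster-2 nodes among the other $k-1$ endpoints of a between-cluster $k$-edge containing $i$ is
\begin{equation*}
f_k = \frac{\sum_{j=1}^{k-1} j\binom{N}{j}\binom{N-1}{k-1-j}}{\binom{n-1}{k-1}-\binom{N-1}{k-1}}.
\end{equation*}
Applying $j\binom{N}{j} = N\binom{N-1}{j-1}$ and Vandermonde's identity collapses the numerator to $N\binom{n-2}{k-2}$. The asymptotic forms $\binom{n-1}{k-1}-\binom{N-1}{k-1}\doteq \frac{n^{k-1}(1-2^{1-k})}{(k-1)!}$ and $N\binom{n-2}{k-2}\doteq \frac{n^{k-1}}{2(k-2)!}$ then yield $f_k \doteq (k-1)/(2-2^{2-k})$, so the expected number of cluster-1 endpoints is $(k-1) - f_k = (k-1)r_k$ for the value of $r_k$ in the statement. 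This combinatorial identification of $r_k$ is the step I expect to be most delicate; everything else is arithmetic around it.

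To finish, I would assemble the two expected counts: the expected cluster-1 neighbors of $i$ through size-$k$ edges is $c_k(k-1)[p_k + (1-p_k)r_k]$, while the expected cluster-2 neighbors is $c_k(k-1)(1-p_k)(1-r_k)$. By the definition of $c_k^{(s,t)}$ these are $\tfrac12 \ckin$ and $\tfrac12 \ckout$ respectively, and the consistency check $\ckin + \ckout = 2(k-1)c_k$ matches identity \cref{eq:degree-identity}. Subtracting gives
\begin{equation*}
\beta_k = (k-1)c_k\bigl[p_k + (1-p_k)(2r_k-1)\bigr] = (k-1)c_k\bigl[2(1-r_k)p_k + 2r_k - 1\bigr],
\end{equation*}
and summing over $k \in \Kset$ yields \cref{eq:vanilla-eig-parameterized}.
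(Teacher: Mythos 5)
Your proposal is correct and reaches the paper's formula through the same intermediate identity $\ckin = 2(k-1)c_k\sqbracket*{p_k + (1-p_k)r_k}$, but via a noticeably different combinatorial route. The paper works globally: it counts expected within-cluster pair incidences $\E[\mk{k}{s,s}]$ over all edges, identifies $r_k$ through $\E\sqbracket*{\binom{H}{2}}$ where $H$ is the (approximately binomial, conditioned on not-all-equal) number of cluster-$s$ nodes in a random between-cluster edge, converts to $\ckin = 4\E[\mk{k}{s,s}]/n$, and then obtains $\ckout$ from the degree identity \cref{eq:degree-identity}. You instead work locally around a fixed node $i$: you condition on a between-cluster $k$-set containing $i$, evaluate its expected composition exactly via $j\binom{N}{j}=N\binom{N-1}{j-1}$ and Vandermonde (the numerator collapsing to $N\binom{n-2}{k-2}$), and only then pass to asymptotics to extract $(k-1)r_k$; you also compute $\ckout$ directly and use \cref{eq:degree-identity} merely as a consistency check. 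Your normalization step---identifying the expected same- and different-cluster neighbor incidences with $\tfrac12\ckin$ and $\tfrac12\ckout$ (the factor $\q{t}=\tfrac12$ implicit in \cref{eq:degree-identity})---is handled correctly and matches the paper's $\ckin = 4\E[\mk{k}{s,s}]/n$ convention. What each buys: your node-centered argument keeps the composition of the conditioned edge exact up to the final ratio asymptotics (the paper's binomial model for $H$ is already an asymptotic idealization of the hypergeometric composition), and it produces both $\ckin$ and $\ckout$ from one calculation; the paper's global pair-counting is slightly shorter, needing only the second-moment quantity $\E\sqbracket*{\binom{H}{2}}$ and one application of the identity. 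All steps in your argument check out, including the $k=2$ edge case ($r_2=0$) and the final algebra giving $\beta_k=(k-1)c_k\sqbracket*{2(1-r_k)p_k+2r_k-1}$.
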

        The proof is a direct calculation and provided in \Cref{sec:proof-of-vanilla-parameterized-eigs}. 
        \Cref{lm:vanilla-parameterized-eigs} in conjunction with \Cref{conj:vanilla} define a pair of hyperplanes in the coordinates $\{p_k\}$. 
        The region between these hyperplanes is, under these conjectures, the region in which \oursimplealg{} fails to detect clusters. 
        
        \Cref{fig:heatmaps}(a). shows an experiment on a blockmodel on \rev{$400$} nodes with $2$- and $3$-edges with varying $p_2$ and $p_3$. 
        In this panel, we run \oursimplealg{} 20 times for each parameter value, and compute the average Adjusted Rand Index (ARI) of the retrieved clustering against the planted clustering. 
        White lines are the boundaries given by \rev{$\beta^2 = \alpha$, with $\alpha$ and $\beta$ given by \cref{lm:vanilla-parameterized-eigs}}. 
        Under \Cref{conj:vanilla}, as $n\rightarrow \infty$, spectral clustering returns \rev{labels} with ARI bounded above $0$ iff $(p_2, p_3)$ does not lie between the two white lines. 
        The experimental results shown are consistent with this conjecture. 
           
    \begin{figure}[h]
        \centering
        \includegraphics[width=0.98\textwidth]{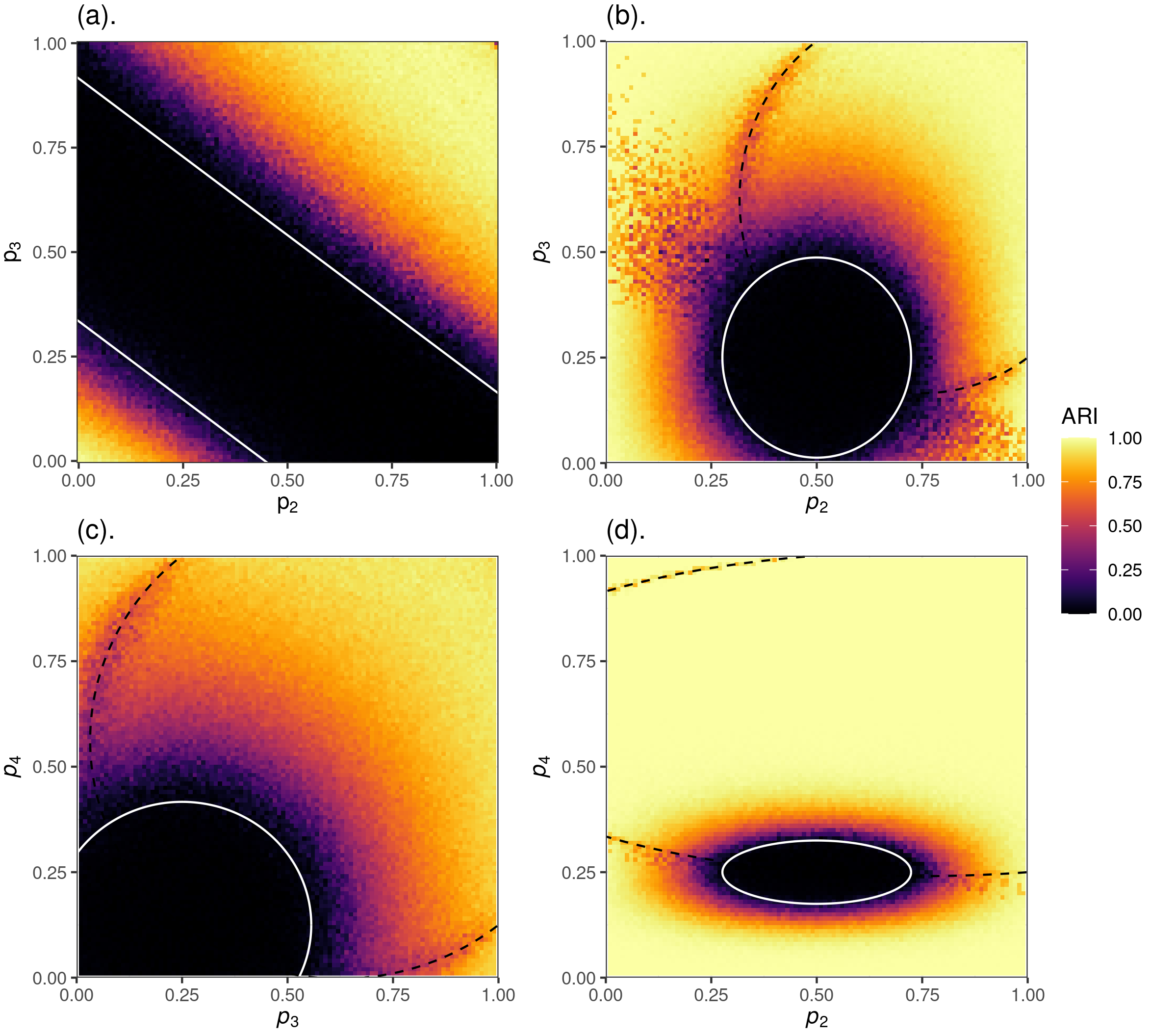}
        \caption{Experiments and analytical boundaries for nonbacktracking spectral clustering in synthetic hypergraphs of \revision{400} nodes.
        Each pixel is the mean Adjusted Rand Index (ARI) of the retrieved clustering against ground truth across 20 trials. 
        (a). Spectral clustering via the nonbacktracking operator $\bB$ according to \oursimplealg{} (\Cref{alg:vanilla}), $c_2 = c_3 = 5$. 
        The white lines are given by \rev{the equation $\beta^2 = \alpha$, with $\alpha$ and $\beta$ as given by \cref{lm:vanilla-parameterized-eigs}}. 
        (b)-(d). Spectral clustering in one round of \ouralg{} (\Cref{alg:BP-spectral-clustering}) using the true values of the parameter matrices $\curlybrace*{\mG_k}_{k\in \Kset}$.   
        In each panel, the white ellipse is given by \rev{\eqref{eq:ellipse}, which is in turn a consequence of the equation $\lambda = 1$}.   
        The dashed black curves trace the ellipse described by the collision of the community-correlated eigenvalue $\lambda$ with the uninformative eigenvalue $\beta$. 
        We show hypergraphs with edges of size $2$, $3$, and $4$. 
        The mean $k$-degrees for each edge-size $k$ vary in each panel.  
        (b). $c_2 = c_3 = 5$, $c_4 = 0$. 
        (c). $c_3 = c_4 = 5$, $c_2 = 0$. 
        (d). $c_2 = 5$, $c_4 = 50$, $c_3 = 0$. 
        }\label{fig:heatmaps}
    \end{figure}

        \rev{We now consider detectability thresholds for \ouralgorithm{} under the Nishimori condition.}
        We can compute the entries of the matrix $\mG_k$ in terms of $\ckin$ using \cref{eq:G}, obtaining
        \begin{equation*}
            \gkin \eqdef \gk{k}{s,s} = \frac{1}{2} \paren*{\frac{\ckin}{(k-1)\ck{k}{}} - 1} \quad \text{and} \quad \gkout \eqdef \gk{k}{s,t} = -\gkin\;
        \end{equation*} 
        The eigenvalue of $\mG_k$ that appears in \Cref{cor:jacobian-eig} is then 
        \begin{equation}
            \rev{\frac{\beta_k}{\alpha_k}} = \gkin - \gkout = 2\gkin = \frac{\ckin}{(k-1)\ck{k}{}} - 1  \;.  
        \end{equation}
        Direct computation now shows that the condition $\lambda = \sum_{k \in \Kset} \rev{\frac{\beta_k^2}{\alpha_k}} = 1$ defines an axis-aligned ellipsoid with coordinates $(p_{k_1}, p_{k_2},\ldots)$, centroid $(x_{k_1},x_{k_2},\ldots)$ and radii $(a_{k_1}, a_{k_2},\ldots)$, where 
        \begin{equation} 
            x_k = \frac{1 - 2r_k}{2 - 2r_k}  \quad \text{and}\quad  a_k = \frac{\sqrt{(k-1)\ck{k}{}}}{2-2r_k} \quad \text{with} \quad r_k \eqdef \frac{1 - 2^{2-k}}{2-2^{2-k}}\;.\label{eq:ellipse}
        \end{equation}
        \Cref{conj:jacobian-threshold} claims that \ouralg{} succeeds outside this ellipse ($\lambda > 1$) and fails inside ($\lambda < 1$). 

    Panels (b)-(d).\ of \Cref{fig:heatmaps} show a sequence of cluster recovery experiments. 
    As before, each pixel gives the average Adjusted Rand Index of the recovered cluster against ground truth across 20 runs of \ouralg{} in a hypergraph stochastic blockmodel on \revision{400} nodes, with varying parameters $\{p_k\}$. 
    In each experiment, the ellipse defined by \cref{eq:ellipse} is shown in white. 
    An implication of our conjectures is that, as $n\rightarrow \infty$, with high probability, belief-propagation spectral clustering returns a clustering with ARI approaching 0 iff the point $\{p_k\}$ lies in the interior of the ellipse. 
    Careful examination shows that the algorithm occasionally succeeds within the ellipse, and occasionally fails outside it. 
    We attribute these deviations from conjectured theory to finite-size effects. 
    Recalling that we have treated the true matrices $\curlybrace*{\mG_k}_{k \in \Kset}$ as known, the observed performance and thresholds should be regarded as idealizations of the more realistic case in which these matrices must be inferred along the way. 

    From \Cref{cor:jacobian-eig} we know that there is also an approximate eigenpair \revision{$(\beta, \hat{\vu})$} which is uncorrelated with planted cluster structure. 
    The equation \revision{$\lambda = \beta$} again describes an ellipsoid in parameter space at which the two eigenvalues collide. 
    This collision induces noise in the corresponding eigenvectors, resulting in observably lower-quality cluster recovery along this ellipsoid (\Cref{fig:heatmaps}(b-d), dashed black curves). 

    \revision{
        Our conjectured thresholds are asymptotic, while the synthetic hypergraphs on which we compute in \Cref{fig:heatmaps} are, at 400 nodes, relatively small. 
        Even in this regime, the agreement with theory is quite strong. 
        In \Cref{fig:large-hypergraph-1d}, we show an experiment in a smaller region of parameter space on hypergraphs of 10,000 nodes, again finding close agreement with conjectured theory. 
    } 

\section{Experiments on Data} \label{sec:experiments}

    We first study several data sets in which ground-truth labels are available. 
    The \dataprimaryschool{}~\cite{stehleHighResolutionMeasurementsFacetoFace2011,bensonSimplicialClosureHigherorder2018} data set logs close-proximity human contact interactions detected by wearable sensors. 
    Nodes are students or teachers. 
    A hyperedge exists between a set of nodes that were jointly in proximity to each other within a short time window. 
    Each student is assigned to a unique classroom, which we use as a ground-truth label. 
    The data contain timestamps associated to each interaction, although we do not use these timestamps here. 
    There are $n = 242$ nodes and $m = 12,704$ hyperedges in \dataprimaryschool{}. 
    Hyperedges range from size $k = 2$ to size $k = 5$. 

    \begin{figure}
        \includegraphics[width=\textwidth]{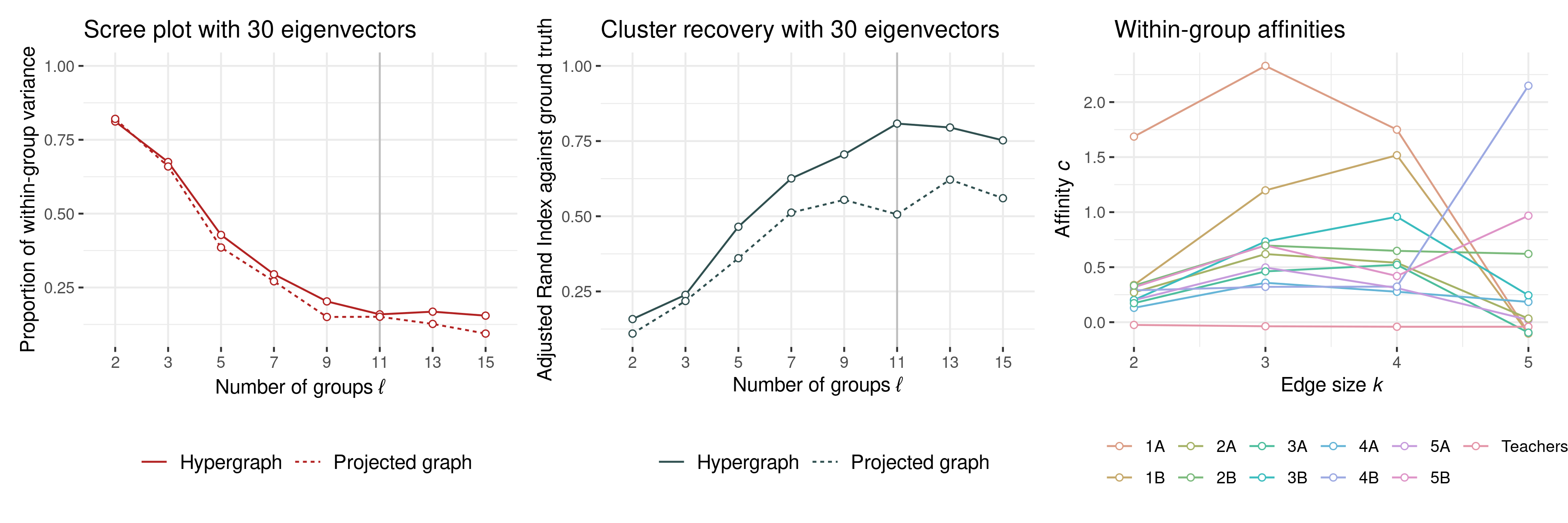}
        \caption{
            Cluster recovery in the \dataprimaryschool{} data set~\cite{stehleHighResolutionMeasurementsFacetoFace2011,bensonSimplicialClosureHigherorder2018}.
            We ran \ouralg{} on the data for 10 rounds, using the 30 eigenvectors of the belief-propagation Jacobian \revision{with largest real eigenvalues} and with a varying number of clusters to be estimated. 
            In each round, we update the estimate of the labels $\hat{\labelvec}$ by choosing the best of 20 runs of $k$-means according to the within-group sum-of-squares objective. 
            We repeat this experiment on the projected (clique-expansion) graph. 
            (Left): scree plot of the mean within-group sum-of-squares obtained by the $k$-means step as a function of the number of groups to be estimated. 
            The vertical grey line gives the true number of labels in the data. 
            (Center): Adjusted Rand Index of the clustering with lowest $k$-means objective against ground truth. 
            (Right): The diagonal entries of the matrix $\mC_k$ for varying edge size $k$. 
            A similar experiment for the \datahighschool{} data set~\cite{mastrandreaContactPatternsHigh2015,bensonSimplicialClosureHigherorder2018} is given in \Cref{fig:contact-high-school}. 
            }
        \label{fig:contact-primary-school}
    \end{figure}

    \Cref{fig:contact-primary-school} shows a suite of clustering experiments on \dataprimaryschool{}, which possesses 11 ground-truth clusters, including 10 homeroom classes and one cluster containing all teachers. 
    We ran \ouralg{} multiple times, choosing from among runs the clustering that resulted in the smallest value of the $k$-means within-group sum-of-squares objective.  
    We also varied the number of clusters $\ell$ to be learned. 
    We repeated this process for both the original hypergraph data and the projected (clique-expansion) graph obtained by replacing each $k$-hyperedge with a $k$-clique. 
    We refer to this algorithm as \graphalgorithm{} (\graphalg). 
    The lefthand plot shows that the $k$-means objective is able to give some guidance as to the appropriate number of groups to infer, with the objective function leveling off close to the true number of groups for both \graphalg{} and \ouralg{}. 
    At center, we observe that for most values of $\ell$, including all those close to the correct value, \ouralg{} considerably outperforms dyadic spectral clustering in retrieving labels correlated with ground truth. 
    One explanation for this phenomenon may be observed at right, where we plot the diagonal entries of the matrix $\mC_k$ for each edge size $k$, computed using the true cluster labels. 
    These diagonal entries measure the rate at which nodes connect to other nodes in their same group, and may therefore be interpreted as a measure of affinity or assortativity. 
    These affinities vary considerably according to edge size, suggesting that edges of differing sizes play meaningfully distinct roles in this data set. 
    \ouralg{} again outperforms spectral clustering on the projected graph, with the gap likely due to the different connection structure across edges of varying sizes. 
    While these results suggest that \ouralgorithm{} is preferable to \graphalg{} on these school contact data sets, neither algorithm achieves the perfect cluster recovery obtainable via other methods~\cite{chodrowGenerativeHypergraphClustering2021a}. 

    \begin{figure}
        \includegraphics[width=\textwidth]{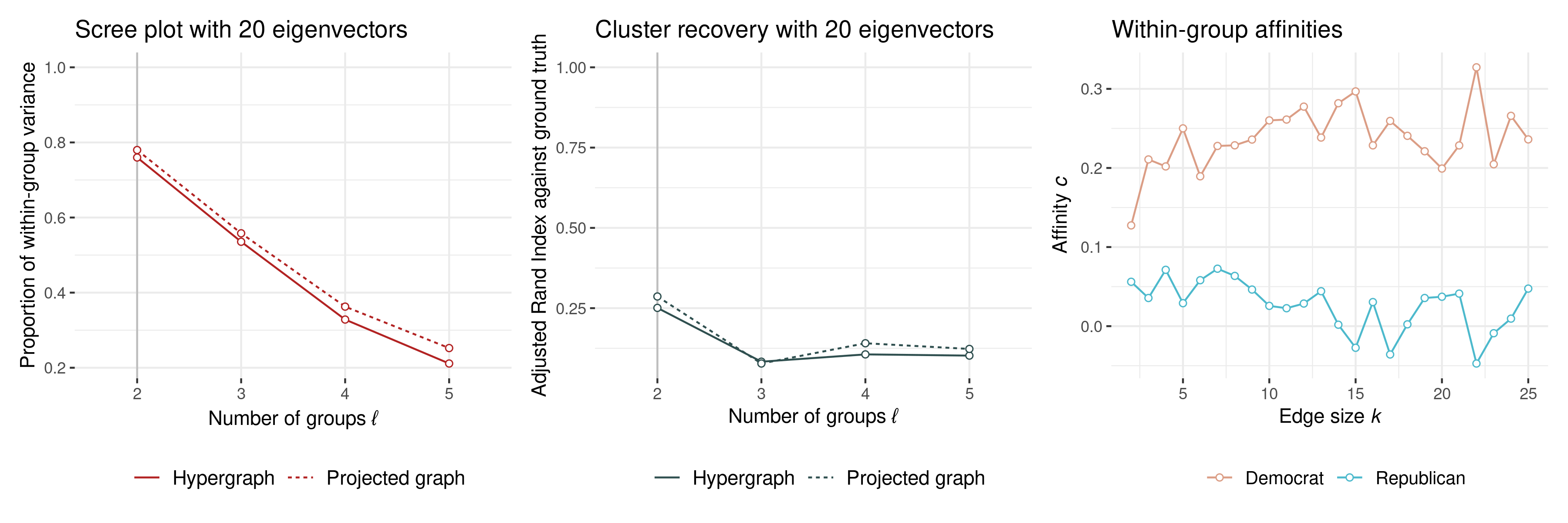}
        \caption{
            As in \Cref{fig:contact-primary-school}, using the \texttt{senate-bills} data set~\cite{fowler2006connecting,fowler2006legislative,bensonSimplicialClosureHigherorder2018}. 
            }
            \label{fig:senate-bills}
    \end{figure}

    The \datasenate{} data set \cite{fowler2006connecting,fowler2006legislative} provides a contrasting case. 
    Nodes are U.S. senators. 
    An edge exists on a set of senators for each bill that that set of senators cosponsored. 
    The data reflects bills from the 103rd through 115th U.S. Congresses, which span the years 1993–2017.
    There are 293 senators and 20,006 bills represented. 
    We consider bills cosponsored by between $k = 2$ and $k = 25$ senators, although a small number of bills sponsored by larger groups exist. 
    Labels correspond to the two major U.S. political parties. 
    Unlike in the contact networks, the scree plot entirely fails to track the number of true clusters. 
    Both \ouralgorithm{} and \graphalgorithm{} recover partitions with Adjusted Rand Index near $0.25$ when instructed to search for exactly two groups, and do much more poorly otherwise. 
    These results are qualitatively aligned with those of~\citet{chodrowGenerativeHypergraphClustering2021a}, who found that greedy modularity-based methods also struggle with the label recovery task on \datasenate{}. 
    One explanation for why \ouralgorithm{} does not achieve noticeable improvement over \graphalgorithm{} in this case is given by the plot of within-group affinities (far right). 
    Here, the variability of these affinities with edge size is much smaller than it is for \dataprimaryschool{}, suggesting that the ability of \ouralg{} to distinguish roles for different edge sizes may not confer a useful advantage over \graphalg{} on \datasenate{}.

    \begin{figure}
        \centering
        \includegraphics[width=1.0\textwidth]{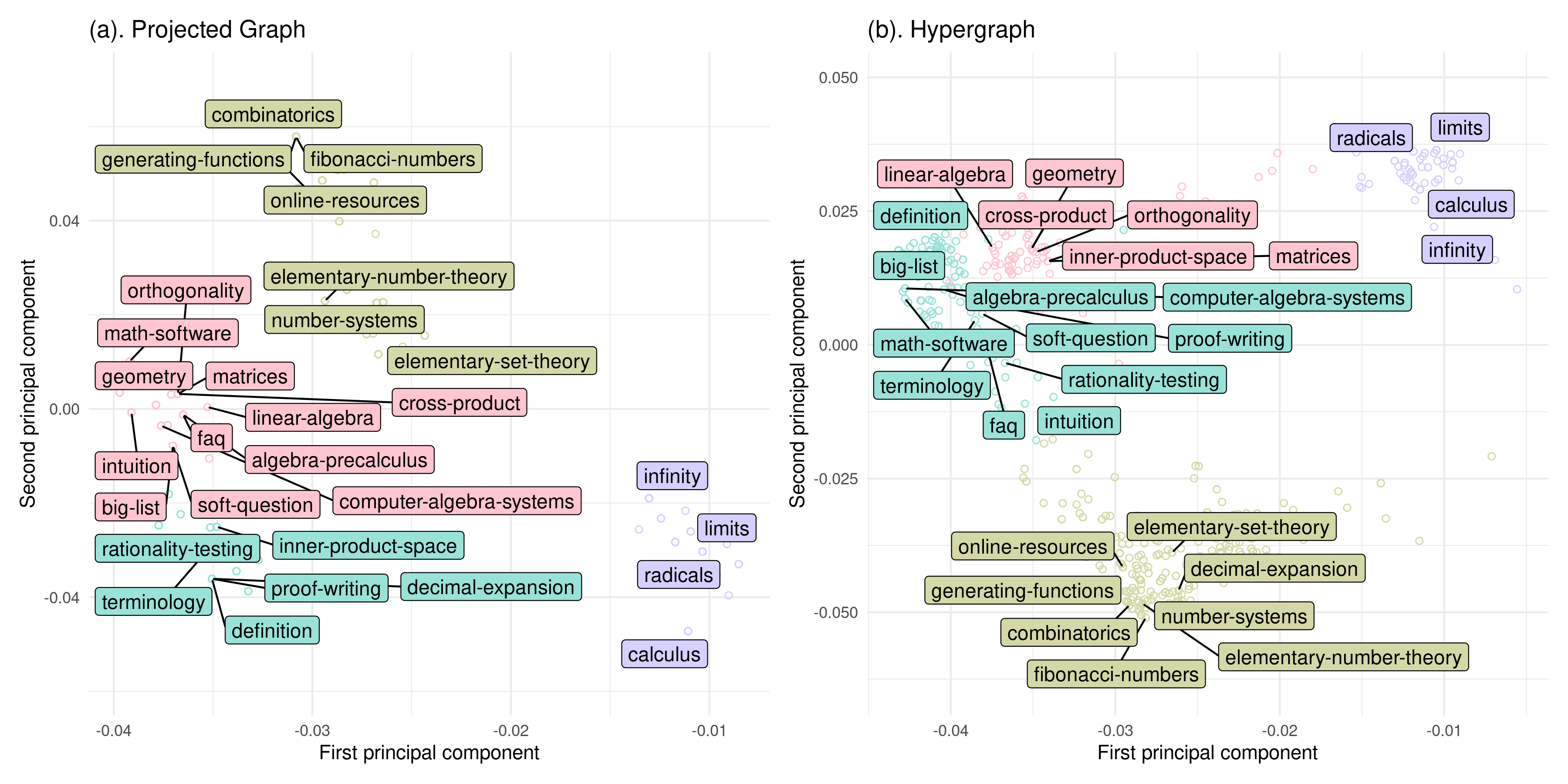}
        \caption{
            (Best viewed in color): Clustering the \texttt{tags-math-sx} data set~\cite{bensonSimplicialClosureHigherorder2018} using belief-propagation spectral clustering on the projected graph (\graphalg{}) and original hypergraph data (\ouralg{}). 
            Only tags that appeared in at least 20 questions are included. 
            In each panel, we performed 50 steps of \Cref{alg:BP-spectral-clustering} using $\ell = 4$ groups and $h = 15$ eigenvectors. 
            We repeated this process 50 times, resulting in 50 candidate clusterings. 
            For each algorithm, the clustering shown is the one which achieved the lowest $k$-means objective (total within-group sum-of-squares) of these 50 candidates. 
            Colors give the learned clustering, while coordinates in the plane give the 2-dimensional principal components projection of the eigenvector embedding from which that clustering was learned. 
            The 30 most frequently-used tags in the data set are labeled. 
        }
        \label{fig:math-sx}
    \end{figure}
    
    It is also possible to use \ouralg{} to cluster data in which no labels are natively present.  
    \Cref{fig:math-sx} shows an experiment in which we use \Cref{alg:BP-spectral-clustering} to cluster tags on the forum Math StackExchange~\cite{bensonSimplicialClosureHigherorder2018}. 
    Nodes are tags that usually express certain concepts or areas within mathematics. 
    Edges correspond to questions asked on the forum. 
    An edge exists between a set of nodes for each question asked to which the corresponding tags were applied. 
    The data contains edges of size $k = 2,\ldots,5$, with an average total degree of 720 edges per node. 
    There are a total of 1,419 nodes in the original data. 
    We removed nodes with degree less than 20, resulting in a hypergraph containing 1,038 nodes. 
    We show the result of belief-propagation spectral clustering on both the projected graph and the original hypergraph in \Cref{fig:math-sx}. 
    
    The large-scale output structure is relatively similar between the two algorithms. 
    In both cases, for example, there is a clearly separated cluster (purple) associated to calculus , including the topics \texttt{limits}, \texttt{infinity}, and \texttt{radicals}. 
    There is also a cluster  (gold) associated with topics in discrete mathematics, including \texttt{combinatorics}, \texttt{fibonacci-numbers}, and \texttt{elementary-set-theory}. 
    A third cluster (pink) focuses on linear algebra, with tags including \texttt{orthogonality}, \texttt{matrices}, and \texttt{inner-product-space}. 
    A final cluster (cyan) includes a number of tags that do not fit neatly in a single mathematical subfield, such as \texttt{definition}, \texttt{terminology}, and \texttt{proof-writing}. 

    There are, however, several notable differences in the clusterings produced. 
    The linear algebra cluster in panel (b) contains only six words each with clear connections to the field, while the cluster in panel (a) is much larger and includes several superficially unrelated tags, such as \texttt{soft-question} and \texttt{algebra-precalculus}. 
    In addition, \texttt{inner-product-space} is separated in a different cluster. 
    As another example, in panel (b) \texttt{decimal-expansion} is grouped with other topics in discrete mathematics, and is close in embedding space to \texttt{number-systems}. 
    In contrast, \texttt{decimal-expansion} is far from any related topics in panel (a). 
    
    We note that cluster assessment is a challenging task and our approach is admittedly \emph{ad hoc}. 
    That said, we do find it clear that clusters found by \oursimplealg{} are preferable to those found by \graphsimplealg{} in groupings of mathematical topic tags.


\section{Discussion} \label{sec:discussion}

    We have proposed and analyzed two spectral algorithms for clustering hypergraphs. 
    We have focused on the distinctive challenges posed by \emph{nonuniform} hypergraphs, which comprise many of the most interesting hypergraph data sets. 
    To address the challenge of representing edges of multiple sizes in compatible data structures, we have employed the hypergraph nonbacktracking matrix $\bB$ proposed by~\citet{stormZetaFunctionHypergraph2006a}. 
    We have considered both a simple spectral algorithm (\oursimplealg{}) that relies only on the eigenvectors of $\bB$, as well as a more complex spectral algorithm (\ouralg{}) based on the eigenvectors of the belief-propagation Jacobian $\bJ$ evaluated at the uninformative fixed point. 
    In each case, we have provided theorems that allow us to compute on alternative matrices that are usually smaller, thereby enabling faster computation \rev{in some cases}. 
    We have demonstrated the latter algorithm on several hypergraph data sets, showing that it enjoys superior performance over nonbacktracking methods on projected graphs due to its ability to explicitly represent distinct statistical roles for edges depending on size. 
    That said, we emphasize that \ouralgorithm{} is not state-of-the-art for label recovery in hypergraph clustering tasks as measured by accuracy and especially by scalability. 
    Even the use of \Cref{thm:reduced-jacobian} yields a matrix $\bJ'$ of size  $2n\bar{k}\ell$, which can become computationally challenging when either $k$ or $\ell$ are large. 
    \revision{In practice, the use of \ouralg{} is considerably constrained by both the memory requirements of forming $\mJ$ or $\mJ'$ and the performance of sparse eigensolvers~\cite{saad2011numerical}.}

    Part of the significance of our approach is that it admits closed-form analysis of the parameter regime in which \ouralg{} fails to recover clusters correlated with ground truth. 
    In the parameterization we use in \Cref{fig:heatmaps}, this regime is ellipsoidal in coordinates $\{p_k\}$. 
    Our description of this region relies on several conjectures related to the location of the eigenvalues of the matrices $\{\mB_k\}$ and their associated eigenvectors.  
    These conjectures are inspired by known results for the graph case~\cite{bordenaveNonbacktrackingSpectrumRandom2015}.
    Proofs of these conjectures would provide conclusive characterizations of the success and failure regimes for \ouralg{}. 
    We expect such proofs to require tools from random matrix theory similar to those used in~\citet{stephanSparseRandomHypergraphs2022}. 

    In the dyadic stochastic blockmodel, it has been proven that, for the stochastic blockmodel on graphs with two clusters, the regime in which nonbacktracking spectral clustering fails coincides precisely with the regime in which the regime in which \emph{no algorithm} can detect clusters~\cite{mossel2018proof,massoulie2014community}. 
    We pose a similar conjecture for the hypergraph blockmodel: within the ellipsoid described by \cref{eq:ellipse}, the cluster detection problem cannot be solved by any algorithm. 
    We propose a proof of this conjecture as a direction of future work. 
\section*{Acknowledgements}

Much of PSC's work on this paper was completed during his time in the Department of Mathematics of the University of California, Los Angeles. 
We are grateful to Mason Porter for donating computational resources to support this research. 
\section*{Software and Data Availability}

Code and data sufficient to reproduce experiments in this paper may be found at the GitHub repository \href{https://github.com/jamiehadd/HypergraphSpectralClustering}{jamiehadd/HypergraphSpectralClustering} 
Primary computations were performed in the Julia language using a custom-written package \cite{bezanson2017julia}. 
Visualizations were constructed using ggplot2 for the R programming language \cite{rcoreteamLanguageEnvironmentStatistical2022,wickhamGgplot2ElegantGraphics2016}. 
\section*{Gender Representation in Cited Works}

Recent work in several fields of science has identified gender bias in citation practices---papers by women and other gender-minoritized scientists are systematically under-cited in their fields \cite{dion2018gendered,caplar2017quantitative, maliniak2013gender,dworkin2020extent,teich2021citation, dworkin2020citing,wang2021gendered,llorens2021gender}.

In the spirit of \citet{zurn2020citation}, we performed an analysis of gender representation in the references cited in the main text of this manuscript. 
We manually gender-coded the first and last authors in the works cited according to personal acquaintance, instances of pronoun usage online, or first name. 
We focused on the first and last authors because typically, though not always, the former is the leading researcher and the latter the senior author in the disciplines included in our references. 
Our method of coding is limited in several ways. 
Gender is fundamentally nonbinary. 
Names and pronouns may not be indicative of gender. 
Gender may change over time. 
Manual coding is inherently \revision{flawed} and subject to error. 
Furthermore, the heuristic that the first and last authors correspond \revision{to those which} make the most important contributions to a manuscript is of varying validity in different areas of science, especially in mathematics. 

Of the works cited in the main text (excluding this statement), we estimate that \rev{18}\% had a non-male first author and \rev{19}\% had a non-male last author. Of those with at least two authors, \rev{25}\% had either a non-male first author or a non-male last author. 

\nocite{kemptonNonBacktrackingRandomWalks2016} 

\bibliographystyle{abbrvnat} 
\bibliography{zotero-refs,other-refs}

\appendix

\section{Proof of \Cref{thm:ib-nonuniform}} \label{sec:ib-nonuniform-proof}

    Our proof approach extends Kempton's proof of the Ihara-Bass formula for graphs~\cite{kemptonNonBacktrackingRandomWalks2016}. 
    The same approach was used by~\citet{stephanSparseRandomHypergraphs2022} for the case of uniform hypergraphs.

    For each $k$, define operators $\mS_k \in \maps{\pointed{\edges}}{\nodes}$, $\mT_k \in \maps{\nodes}{\pointed{\edges}}$, and $\mW_k \in \maps{\pointed{\edges}}{}$ with entries: 
    \begin{align*}
        s_{k;jR, i} &\eqdef \begin{cases} 
            1 &\quad i \in R\setminus j\;, \; \abs{R} = k \\ 
            0 &\quad \text{otherwise}
        \end{cases} \\ 
        t_{k;i, jR} &\eqdef \begin{cases} 
            1 &\quad i = j\;, \; \abs{R} = k \\ 
            0 &\quad \text{otherwise}
        \end{cases} \\ 
        w_{k;iQ, jR} &\eqdef \begin{cases} 
            1 &\quad Q = R\;,\; i \neq j\;, \; \abs{Q} = \abs{R} = k \\ 
            0 &\quad \text{otherwise}\;.
        \end{cases}
    \end{align*}
    These operators satisfy several important relations. 
    We begin with entrywise calculations: 
    \begin{align} \label{eq:entrywise-1}
        [\mT_k \mS_{k'}]_{i,j} &= \sum_{\ell R \in \pointed{\edges}} t_{k;i,\ell R}s_{k';\ell R,j}  
        = \delta_{k,k'} \abs{\curlybrace*{R \in \edges_k \;:\; i,j \in R}} 
        \eqdef \delta_{k,k'} a_{k;i,j} \\ 
        [\mT_k \mW_{\revision{k}} \mS_{k'}]_{i,j} &= \sum_{\revision{Q \in \partial i,\; R \in \partial j}} t_{k;i,iQ}w_{k;iQ, jR} s_{k';jR, j} \nonumber\\ 
        &= \delta_{k,k'}\begin{cases} 
            (k-1) d_{k;i,i} &\quad i = j \\ 
            (k-2)a_{k;i,j} &\quad \text{otherwise}
        \end{cases} \nonumber\\ 
        &= \delta_{k,k'} \sqbracket*{(k-1)\mD_k + (k-2) \mA_k}_{i,j}\;.\label{eq:entrywise-2} \\ 
        [\mS_k\mT_{k'} - \delta_{k,k'}\mW_k]_{iQ, jR} &= \sum_{\revision{h}\in \nodes} s_{k;iQ, \revision{h}}t_{k';\revision{h}, jR}  - \delta_{k,k'}w_{k;iQ, jR} \nonumber\\ 
        &= \begin{cases}
            1 &\quad iQ \rightarrow jR\;,\; \abs{iQ} = k\;, \; \abs{jR} = k' \\ 
            0 &\quad \text{otherwise} 
        \end{cases} \nonumber\\ 
        &= b_{k \rightarrow k';iQ, jR}
        \label{eq:entrywise-3}
    \end{align}
    Define block matrices 
    \begin{align*}
        \bS \eqdef 
        \sqbracket*{\begin{matrix}
            \mS_2  &  &  \\ 
            & \ddots &  \\ 
            &  & \mS_{\kmax} 
        \end{matrix}} \quad,\quad 
        \bT \eqdef 
        \sqbracket*{\begin{matrix}
            \mT_2 & \cdots & \mT_{\kmax} \\ 
            \vdots & \ddots & \vdots \\ 
            \mT_2 & \cdots & \mT_{\kmax}
        \end{matrix}} \quad \text{, and}\quad
        \bW \eqdef 
        \sqbracket*{\begin{matrix}
            \mW_2 & & \\ 
            & \ddots & \\ 
            & & \mW_{\kmax}
        \end{matrix}} \;.
    \end{align*} 
    Direct multiplication and use of \cref{eq:entrywise-1,eq:entrywise-2,eq:entrywise-3} gives the relations
    \begin{align}
        \bT\bS 
        & = \bA \label{eq:rel-1}\\ 
        \bT\bW\bS 
        & = ((\mK - \mI_{\kappa})\otimes \mI_n )\bD + ((\mK - 2\mI_{\kappa})\otimes \mI_n)\bA \label{eq:rel-2}\\ 
        \bS\bT - \bW 
        & = \bB \label{eq:rel-3} \;.
    \end{align}

    We are now prepared for the main computation. 
    The \emph{push-through identity} states that
    \begin{align}
        \det(\mX + \mY \mZ) = \det(\mX) \det(\mI + \mZ\mX^{-1}\mY)\;, \label{eq:push-through}
    \end{align} 
    provided that $\mX$ is invertible and all matrix products are well-defined. 
    \citet{kemptonNonBacktrackingRandomWalks2016} provides an elementary proof. 
    Using \cref{eq:rel-3} to write $\mI - \mu\bB = \mI - \mu\bS\bT + \mu\bW$ and applying \cref{eq:push-through} to the righthand side gives 
    \begin{align*}
        \det(\mI - \mu \bB) = \det(\mI + \mu \bW) \det(\mI - \mu \bT(\mI + \mu\bW)^{-1}\bS)\;. 
    \end{align*}
    Focusing on the second factor, we compute
    \begin{align}\label{eq:block-inverse}
        (\mI + \mu \bW)^{-1} = 
        \sqbracket*{\begin{matrix}
            \mP_2(\mu) & & \\ 
            & \ddots & \\ 
            & & \mP_{\kmax}(\mu)
        \end{matrix}}\;,
    \end{align}
    where we have defined 
    \begin{align*}
        \mP_k(\mu) \eqdef p_k(\mu)\mI + q_k(\mu) \mW_k
    \end{align*}
    with coefficients 
    \begin{align*}
        p_k(\mu) \eqdef \frac{1 + \mu(k-2)}{h_k(\mu)} \;,&\quad q_k(\mu) \eqdef \frac{-\mu}{h_k(\mu)}\;, & \text{and} \; \quad h_k(\mu) \eqdef (1-\mu)(1 + \mu(k-1))\;.
    \end{align*}
    The derivation of this inverse uses the fact that $\mI + \mu \bW$ is a block-diagonal matrix with one block for each edge. 
    Each block for an edge of size $k$ has the form $\mI + (\mu - 1)\mE$, where $\mE$ is a $k\times k$ matrix of ones. 
    \revision{
        The expressions $p_k(\mu)$, $q_k(\mu)$, and $h_k(\mu)$ can be derived by assuming $\paren*{\mI + (\mu -1)\mE}\paren*{p_k(\mu) \mI + q'_k(\mu) \mE} = \mI$ and solving.  
        Then, $q_k(\mu) = q'_{k}(\mu) - p_k(\mu)$. 
    }

    Using \cref{eq:rel-1,eq:rel-2,eq:block-inverse}, we now compute 
    \begin{align*}
        \bT(\mI + \mu \bW)^{-1}\bS &= 
        \sqbracket*{\begin{matrix}
            p_2(\mu) \mA_2 & \cdots & p_{\kmax}(\mu) \mA_{\kmax} \\ 
            \vdots & \ddots & \vdots \\ 
            p_2(\mu) \mA_2 & \cdots & p_{\kmax}(\mu)\mA_{\kmax} 
        \end{matrix}} \\ 
        &\quad + 
        \sqbracket*{\begin{matrix}
            q_2(\mu)\mD_2 & \cdots & q_{\kmax}(\mu)\paren*{(\kmax-1)\mD_{\kmax} + (\kmax-2)\mA_{\kmax} }\\ 
            \vdots & \ddots & \vdots\\ 
            q_2(\mu)\mD_2 & \cdots & q_{\kmax}(\mu)\paren*{(\kmax-1)\mD_{\kmax} + (\kmax-2)\mA_{\kmax} }
        \end{matrix}}\;.
    \end{align*}
Performing $n$ row multiplications  by $h_k(\mu)$ for each $k$ yields 
    \begin{align*}
        \det(\mI - \mu \bT(\mI + \mu \bW)^{-1}\bS) = \paren*{\prod_k h_k(\mu)^{-n}} \det \bM(\mu)\;,        
    \end{align*}
    where $\bM(\mu)$ is the matrix 
    \begin{align*}
        \bM(\mu) &\eqdef 
        \sqbracket*{\begin{matrix}
            h_2(\mu)\mI_n & & \\
            & \ddots & \\ 
            & & h_{\kmax}(\mu)\mI_n 
        \end{matrix}} - 
        \mu
        \sqbracket*{\begin{matrix}
            (1 - \mu) \mA_2 & \cdots & (1 + \mu(\kmax-2)) \mA_{\kmax}  \\ 
            \vdots & \ddots & \vdots \\ 
            (1 - \mu) \mA_2  & \cdots & (1 + \mu(\kmax-2))\mA_{\kmax} 
        \end{matrix}} \\ 
        & \quad 
        + \mu^2 
        \sqbracket*{\begin{matrix}
            \mD_2 & \cdots & (\kmax-1)\mD_{\kmax} + (\kmax-2)\mA_{\kmax} \\ 
            \vdots & \ddots & \vdots \\ 
            \mD_2 & \cdots & (\kmax-1)\mD_{\kmax} + (\kmax-2)\mA_{\kmax}
        \end{matrix}} 
        \\ 
        &= \mI_{\kappa} + \mu ((\mK - 2\mI_{\kappa})\otimes \mI_n - \bA) + \mu^2 (\bD - \mI_{\kappa n}) \rev{((\mK - \mI_{\kappa}) \otimes \mI_n)}\;.
    \end{align*}
    This gives the second factor in the statement of \Cref{thm:ib-nonuniform}, so our final step is to address the factor $\det(\mI + \mu \bW)$. 
    We have 
    \begin{align*}
        \det(\mI + \mu \bW) &= \prod_{k\in \Kset} (1 - \mu)^{m_k(k-1)}(1 + \mu(k-1))^{m_k} \;.
    \end{align*}
    We find in turn 
    \begin{align*}
        \frac{\det(\mI + \mu \bW)}{\prod_{k \in \Kset} h_k(\mu)^{n}} = \prod_{k\in \Kset} (1 - \mu)^{m_k(k-1) - n}(1 + \mu(k-1))^{m_k - n} = f_\HG(\mu) \;. \label{eq:prefactor}
    \end{align*}
    This completes the computation and the proof. 

\subsection{Proof of \Cref{cor:ihara-bass-hypergraph-eigenvalues-nonuniform}} \label{sec:ib-nonuniform-cor-proof}

    Recall that $\pointed{m}$ is the total number of pointed edges. 
    \revision{
        We make the substitution $\mu = \frac{1}{\beta}$ in \eqref{eq:ib-equation}. 
        From the result, we extract copies of the characteristic polynomials $p_\mB$ of $\mB$ and $p_{\mB'}$ of $\mB'$. 
    }
    We obtain 
    \begin{align}
        \beta^{-\pointed{m}} p_{\bB}(\beta) = \beta^{-2\kappa n} f_{\HG}(\beta^{-1}) p_{\bB'}(\beta)\;.
    \end{align}
    We can have $p_\bB(\beta) = 0$ only if either $f_{\HG}(\beta^{-1}) = 0$ or $p_{\bB'}(\beta) = 0$. 
    For each $k$, if $m_k > n$, then $f_{\HG}$ has $m_k - n$ roots of the form $\beta = 1-k$. 
    Similarly, if $\sum_{k\in \Kset}m_k(k-1) > \kappa n$, then $f_{\HG}$ has $\sum_{k\in \Kset}m_k(k-1) - \kappa n$ roots of the form $\beta = 1$. 
    These are the only roots of $f_{\HG}$, and any remaining roots of $p_{\bB}$ must therefore be roots of $p_{\bB'}$. 
    If on the other hand $p_{\bB'}(\beta') = 0$, then either $p_{\bB}(\beta) = 0$ or $\beta^{-1}$ is a pole of $f_\HG$. 
    By our factorization of $f_\HG$, this can occur only if $\beta = 1-k$ for some $k$  or $\beta = 1$. 
    These cases can occur only if $m_k < n$ and $\sum_{k\in \Kset}m_k(k-1) < \kappa n$, respectively.

\subsection{Proof of \Cref{cor:ib-eigenvector-correspondence}} \label{sec:proof-of-ib-eigenvector-correspondence}

\revision{
    Our proof closely follows that of \citet{stephanSparseRandomHypergraphs2022}.
    Define the matrices 
        \begin{align*}
            \bar{\bS}_k \eqdef 
            \sqbracket*{\begin{matrix}
                \vzero &   &  & &  \\ 
                & \ddots  &  &  &\\ 
                & & \mS_{k} &  & \\ 
                & &  &  \ddots & \\ 
                & &  &   & \vzero
            \end{matrix}} \quad,\quad 
            \bar{\bT}_k \eqdef 
            \sqbracket*{\begin{matrix}
                \vzero & \cdots & \mT_{k} & \cdots & \vzero \\ 
                \vdots & \cdots & \vdots & \cdots & \vdots \\ 
                \vzero & \cdots & \mT_{k} & \cdots  & \vzero 
            \end{matrix}} \quad \text{, and} \quad
            \bar{\bW}_k \eqdef 
            \sqbracket*{\begin{matrix}
                \vzero &   &  & &  \\ 
                & \ddots  &  &  &\\ 
                & & \mW_{k} &  & \\ 
                & &  &  \ddots & \\ 
                & &  &   & \vzero
            \end{matrix}} \;.
        \end{align*} 
    We then have
    \begin{align*}
        \mS = \sum_{k\in K} \bar{\mS}_k\quad, \quad \mT = \sum_{k\in K} \bar{\mT}_k\quad, \quad \text{and}\quad \mW = \sum_{k\in K} \bar{\mW}_k\;. 
    \end{align*}   
    Let $\vx_{h;k} \in \vectorspace{\nodes}$ be the vector containing only the entries of $\vx_{h}$ corresponding to edges of size $k$ for $h = 1,2$. 
    The calculation 
    \begin{align}
        \vx_{2; k,i} = \sum_{Q \in \partial_ki} u_{iQ} 
        = \sum_{jQ \in \pointed{\edges}} t_{k; i,jQ} u_{jQ} 
        = (\bar{\bT}_k \vu)_i\;,
         \label{eq:x1_identity}
    \end{align}
    shows that $\vx_{2;k} = \bar{\mT}_k \vu$.  
    A similar calculation shows that $\vx_{1;k} = \bar{\mT}_k\mW^{-1}\vu$. 
    We also make use of the following identities, which can be verified through calculations similar to those shown in \Cref{sec:ib-nonuniform-proof}. 
    \begin{align}
        \mW_k &= (k-1)\mW_k^{-1} + (k-2)\mI, \label{eq:W-identity}\\ 
        \mB_k &= \mS\bar{\mT}_k - \bar{\mW}_k, \label{eq:B_identity}\\ 
        \bS_k &= \bW_k \bT_k^\top, \text{ and} \label{eq:S_identity}\\ 
        \bD_k &= \bT_k \bT_k^\top\;. \label{eq:D_identity} 
    \end{align}
}

\revision{
    Let $\beta \vu = \mB \vu$. 
    Denote by $\bar{\vu}_k$ the vector with components $\bar{u}_{k;iQ} = \delta_{\abs{Q}, k}u_{iQ}$.
    Then, we have $\vu = \sum_{k \in K} \bar{\vu}_k$. 
    We also have $\mB_{k'} \bar{\vu}_k = \beta \delta_{k',k}\bar{\vu}_k$, which implies that $\mB \bar{\vu}_k = \sum_{k'\in K}\mB_k \bar{\vu}_k = \mB_k \bar{\vu}_k = \beta \bar{\vu}_k$.  
    We will premultiply both sides of the relation $\beta \bar{\vu}_k = \mB \bar{\vu}_k$ by the matrix $\bar{\mT}_k\mW^{-1}$. 
}

\revision{
    Define $\bar{\vx}_{h;k}$ to be the vector with components $[\bar{\vx}_{h;k}]_{k',i} = \delta_{k, k'}x_{h;k,i}$ for $h = 1,2$. 
    Then, the relation $\bar{\mT}_k \mW^{-1}\vu$ implies that $\bar{\mT}_k \mW^{-1} \bar{\vu}_k = \bar{\vx}_{1;k}$. 
    On the other hand, we compute 
    \begin{align}
           \bar{\mT}_k \mW^{-1}\mB\bar{\vu}_k 
        &= \bar{\mT}_k \mW^{-1}\mB_k\bar{\vu}_k  \\ 
        &= \bar{\mT}_k \mW^{-1}\sqbracket*{\mS\bar{\mT}_k - \bar{\mW}_k}\bar{\vu}_k  \\ 
        &= \sqbracket*{\bar{\mT}_k\mW^{-1} \mS  - \mI_{\kappa n}}\bar{\vx}_{2;k}\;.
    \end{align}
    We have used the fact that, since $\bar{\vu}_k$ is nonzero only in the entries in which $\bar{\mW}_k$ is, $\mW^{-1}\bar{\mW}_k \bar{\vu}_k = \bar{\vu}_k$. 
}
    
\revision{
    The identity \eqref{eq:S_identity} implies that $\mW^{-1} \mS = \diag (\mT_1,\ldots, \mT_{\bar{k}})$. 
    The identity \eqref{eq:D_identity} then gives that 
    $\bar{\mT}_k \mW^{-1}\mS = \bar{\mD}_k$, where 
    \begin{align}
        \bar{\mD}_k = \sqbracket*{\begin{matrix}
            \vzero & \cdots & \mD_{k} & \cdots & \vzero \\ 
            \vdots & \cdots & \vdots & \cdots & \vdots \\ 
            \vzero & \cdots & \mD_{k} & \cdots  & \vzero 
        \end{matrix}}\;.
    \end{align}
    We have $\sum_{k\in K} \bar{\mD}_k = \mD$. 
    We have thus far shown that 
    \begin{align}
        \beta \bar{\vx}_{1;k} = \bar{\mT}_k\mW^{-1}\mB_k\bar{\vu}_k = \sqbracket*{\bar{\mD}_k - \mI_{\kappa n}} \bar{\vx}_{2;k}\;. \label{eq:rel-1-partial}
    \end{align}
    Summing over $k$ gives 
    \begin{align}
        \beta \vx_{1} = \sqbracket*{\mD - \mI_{\kappa n}}\vx_{2}\;, \label{eq:rel-1}
    \end{align}
    which establishes our first needed relation. 
}

\revision{
    Let us now instead premultiply the relation $\beta \bar{\vu}_k = \mB \bar{\vu}_k$ by $\bar{\mT}_k$. 
    The lefthand side becomes $\beta \bar{\vx}_{2;k}$. 
    The righthand side becomes 
    \begin{align*}
           \bar{\mT}_k \mB \bar{\vu}_k 
        &= \bar{\mT}_k \mB_k \bar{\vu}_k \\ 
        &= \bar{\mT}_k \mS \bar{\mT}_k\bar{\vu}_k - \bar{\mT}_k\bar{\mW}_k\bar{\vu}_k \\ 
        &= \bar{\mT}_k \mS \bar{\vx}_{2;k} - \bar{\mT}_k\bar{\mW}_k\bar{\vu}_k\;.
    \end{align*}
    We have $\bar{\mT}_k \mS = \bar{\mA}_k$, where $\bar{\mA}_k$ is the matrix 
    \begin{align}
        \bar{\mA}_k = \sqbracket*{\begin{matrix}
            \vzero & \cdots & \mA_{k} & \cdots & \vzero \\ 
            \vdots & \cdots & \vdots & \cdots & \vdots \\ 
            \vzero & \cdots & \mA_{k} & \cdots  & \vzero 
        \end{matrix}}\;.
    \end{align}
    Using this and \eqref{eq:W-identity}, we obtain 
    \begin{align}
        \bar{\mT}_k \mB \bar{\vu}_k &= \bar{\mA}_k \bar{\vx}_{2;k} - \bar{\mT}_k \sqbracket*{(k-1)\mW^{-1} + (k-2)\mI}\bar{\vu}_k \nonumber\\ 
        &= \bar{\mA}_k \bar{\vx}_{2;k} - (k-1)\bar{\vx}_{1;k} - (k-2)\bar{\vx}_{2;k} \nonumber\\ 
        &= \sqbracket*{\bar{\mA}_k - (k-2)\mI_{\kappa n}} \bar{\vx}_{2;k} - (k-1)\mI_{\kappa n}\bar{\vx}_{1;k} \;. \label{eq:rel-2-partial}
    \end{align} 
    Summing over edge sizes $k$ gives 
    \begin{align}
        \beta \vx_2 = \sqbracket*{\mA - (k-2)\mI_{\kappa n}}\vx_2 - (k-1)\mI_{\kappa n}\vx_1 \;. \label{eq:rel-2} 
    \end{align}
    Writing \eqref{eq:rel-1} and \eqref{eq:rel-2} in combined matrix form yields $\beta \paren*{\begin{matrix}\vx_1 \\ \vx_2 \end{matrix}} = \bB' \paren*{\begin{matrix}\vx_1 \\ \vx_2 \end{matrix}}$, as was to be shown.  
}

\section{Precise Statement and Proof of \Cref{thm:BP}} \label{sec:BP-proof}

Let $\alphabet$ be the alphabet of cluster labels, with $\abs{\alphabet} = \ell$. 
Let $\cM$ be the space of possible messages $\vmess$; we can identify $\cM$ with a product of probability $(\ell-1)$-simplices containing one factor for each node-tuple pair.
Let $\bar{\vmess}$ be the vector of messages with entries $\bar{\mu}_{iR}^{(s)} = q^{(s)}$ for all nodes $i$, subsets $R \in \tuples(i)$, and labels $s \in \alphabet$. 

We will consider perturbations to the belief-propagation dynamics. 
The normalization condition $\sum_{s \in \alphabet} \mess{i}{R}{s} = 1$ on elements of $\cM$ requires that perturbations $\vpmess$ to a message vector $\vmess$ must satisfy $\sum_{s \in \alphabet} \pmess{i}{R}{s} = 0$. 
Letting $\mPi$ denote the projection operator onto the subspace defined by this relation, we have 
\begin{align}
    [\mPi\vpmess]_{iR}^{(s)} = \pmess{i}{R}{s} - \frac{1}{\ell}\sum_{t\in\alphabet} \pmess{i}{R}{t}\;. 
\end{align}
For a given hypergraph realization, we can separate $\vmu$ into components $(\vmu_0, \vmu_1)$, where entries of $\vmu_0$ correspond to unrealized edges and entries of $\vmu_1$ correspond to realized edges. 
We can similarly separate the components of the function $\mF$. 
This allows us to write the BP update dynamics in the form 
\begin{align*}
    \vmu_0 &\gets \mF_0(\vmu_0, \vmu_1) \\     
    \vmu_1 &\gets \mF_1(\vmu_0, \vmu_1)\;. 
\end{align*}
We are now prepared to state a precise analog to to the heuristic \Cref{thm:BP}.

\begin{theorem}\label{thm:edge-reduced-dynamics}
    Let $\HG$ be sampled from the sparse Bernoulli blockmodel. 
    Then, as $n$ grows large, $\mF(\bar{\vmess}) \doteq \bar{\vmess}$. 
    Furthermore, 
    \begin{align}
        \mJ_{10} \eqdef \mPi \frac{\partial \mF_1(\bar{\vmu})}{\partial \vmess_{0}} \doteq \vzero \quad \text{and} \quad 
        \mJ_{11} \eqdef \mPi \frac{\partial \mF_1(\bar{\vmu})}{\partial \vmess_{1}} \doteq \sum_{k \in \Kset} (\mG_k \otimes \mB_k). \label{eq:BP-jacobian-expressions}
    \end{align}
\end{theorem}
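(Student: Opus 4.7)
\emph{Proof plan.}
The plan is to evaluate $\mF$ and its Jacobian at $\bar{\vmess}$ directly from the definitions, using the sparse scaling $\eta(Q\in\edges|\vz_Q)=\omega(\vz_Q)/n^{|Q|-1}$ to decompose every product into contributions from the $\Theta(n^{k-1})$ unrealized $k$-subsets incident to a node and the $O(1)$ realized $k$-edges, and leveraging the symmetry $\ck{k}{s}=\ck{k}{t}$ to obtain the approximate-fixed-point and cancellation structure.

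Writing $\mF(\vmess)_{iR}^{(s)} = q^{(s)} Z_{iR}^{-1}\prod_{Q\in\tuples(i)\setminus R} g_{s,Q}(\vmess)$ with
$$g_{s,Q}(\vmess)\eqdef \sum_{\vz_Q:z_i=s}\eta(a_Q|\vz_Q)\prod_{j\in Q\setminus i}\mess{j}{Q}{z_j},$$
I would evaluate at $\bar{\vmess}$ and use $\sum_{z}q^{(z)}=1$ to reduce each factor to an elementary expression: each unrealized $k$-subset contributes $1-(k-1)!\,\ck{k}{s}/n^{k-1}$ and each realized $k$-edge contributes $(k-1)!\,\ck{k}{s}/n^{k-1}$. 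A Poisson-type concentration argument applied to each $k\in\Kset$ yields $\prod_{Q\in\tuples_k(i)\setminus\edges}g_{s,Q}(\bar{\vmess})\doteq e^{-\ck{k}{s}}$ with high probability. Under $\ck{k}{s}=\ck{k}{}$ the $s$-dependence collapses into the prefactor $q^{(s)}$, so normalization by $Z_{iR}$ gives $\mF(\bar{\vmess})_{iR}^{(s)}\doteq q^{(s)}$.

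For the Jacobian, the log-derivative identity gives
$$\frac{\partial \mF_1^{(s)}}{\partial \mu_{jQ}^{(t)}} = \mF_1^{(s)}\!\left[\frac{\partial \log g_{s,Q}}{\partial \mu_{jQ}^{(t)}} - \frac{\partial \log Z_{iR}}{\partial \mu_{jQ}^{(t)}}\right],$$
and the only factor $g_{s,Q'}$ that depends on $\mu_{jQ}^{(t)}$ is the one with $Q'=Q$, nonzero only when $j\in Q\setminus i$ and $Q\neq R$ --- exactly the nonbacktracking adjacency condition. For \emph{realized} $Q$ of size $k$, direct computation gives $g_{s,Q}(\bar{\vmess})=(k-1)!\,\ck{k}{s}/n^{k-1}$ and $\partial g_{s,Q}(\bar{\vmess})/\partial \mu_{jQ}^{(t)} = (k-2)!\,\ck{k}{s,t}/n^{k-1}$, hence $\partial\log g_{s,Q}/\partial \mu_{jQ}^{(t)} \doteq \ck{k}{s,t}/((k-1)\ck{k}{})$. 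The degree identity~\cref{eq:degree-identity}, combined with symmetry of $\ck{k}{s,t}$ in $s$ and $t$, forces $\partial \log Z_{iR}/\partial\mu_{jQ}^{(t)}\doteq 1$. Recognizing \cref{eq:G} in the residual produces exactly $g_{k;s,t}$, and assembling over pointed edges and $k\in\Kset$ yields the Kronecker structure $\sum_{k}\mG_k\otimes\mB_k$. For \emph{unrealized} $Q$ of size $k$, both $\partial\log g_{s,Q}$ and $\partial\log Z_{iR}$ expand as $1+O(n^{-(k-1)})$, and the same degree identity cancels the $O(1)$ terms, leaving entrywise residuals of size $O(n^{-(k-1)})\le O(n^{-1})$ that survive the projection $\mPi$; hence $\mJ_{10}\doteq\vzero$.

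The most delicate ingredient will be making the $\doteq$ symbols probabilistically precise: the Poisson-type product approximation requires uniform concentration of hyperdegrees across nodes and labels --- a quantitative analog of the local-tree-likeness arguments invoked in the graph case by~\citet{decelleAsymptoticAnalysisStochastic2011} and~\citet{krzakalaSpectralRedemptionClustering2013}. A secondary issue is that the $O(n^{-(k-1)})$ entrywise bound for $\mJ_{10}$, while vanishing pointwise, is spread over $\Theta(n^{k-1})$ columns; any subsequent stability argument using this bound must take care that aggregated contributions from $\mJ_{10}$ remain subdominant to the $O(1)$ entries of $\mJ_{11}$ when acting on bounded perturbations.
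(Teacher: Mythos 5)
Your proposal is correct and follows the same overall strategy as the paper: write out $\mF_1$ explicitly, evaluate at $\bar{\vmess}$ under the sparse scaling, separate the $\Theta(n^{k-1})$ unrealized factors (a field-like term, which you evaluate as $e^{-\ck{k}{}{}}$ via a Poisson limit where the paper keeps it as an unevaluated field term $N^{(s)}$) from the $O(1)$ realized factors, invoke degree concentration, and read off the $\sum_{k}\mG_k\otimes\mB_k$ structure from the realized-edge block. The one genuine difference is where the ``$-1$'' in $\gk{k}{s,t}$ comes from: you extract it from the derivative of the normalizing constant, showing $\partial\log Z_{iR}/\partial\mess{j}{Q}{t}\doteq 1$ via the degree identity \cref{eq:degree-identity} together with symmetry of $\ck{k}{s,t}$ and $\ck{k}{s}=\ck{k}{}$, whereas the paper linearizes the unnormalized update and produces the $-1$ only after applying the projection $\mPi$. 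Your route is arguably cleaner (it computes the raw Jacobian entrywise and makes the nonbacktracking support condition immediate), but note that the theorem's objects are $\mPi\,\partial\mF_1/\partial\vmess_0$ and $\mPi\,\partial\mF_1/\partial\vmess_1$, so after obtaining $\partial\mF_1/\partial\vmess_1\doteq\sum_k\mG_k\otimes\mB_k$ you still owe one line: under the stated symmetry the columns of $\mG_k$ sum to zero (again by \cref{eq:degree-identity}), so $\mPi(\mG_k\otimes\mB_k)=\mG_k\otimes\mB_k$ and the projection changes nothing; similarly, for $\mJ_{10}$ the leading cancellation is simply that both logarithmic derivatives are $1+O(n^{1-k})$, with the degree identity only refining the subleading term. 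Your closing caveats (uniform degree concentration to justify the $\doteq$ statements, and the fact that the entrywise $O(n^{-1})$ bound on $\mJ_{10}$ is spread over $\Theta(n^{k-1})$ columns) are apt and match the points the paper treats only via its with-high-probability concentration remark.
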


\begin{proof}

    Throughout this proof, sums indexed by $\labelvec_R$ are assumed to run through $\alphabet^{\abs{R}}$ \rev{subject to stated constraints}. 
    
    We'll first compute several approximations describing how messages propagate along unrealized edges, i.e. subsets $R$ such that $a_R = 0$. 
    Since $\eta(a_R = 0|\labelvec_R) = 1-\omega(\labelvec_R)n^{1 - \abs{R}}$, \cref{eq:BP-2} becomes
    \begin{align}
        \varmess{i}{R}{s} &\gets \frac{1}{Z_{Ri}}\sum_{\labelvec_R:z_i = s}(1-\omega(\labelvec_R)n^{1 - \abs{R}}) \prod_{j \in R\setminus i}\mess{j}{R}{z_j} \nonumber\\ 
        &= \paren*{1 - O(n^{1-\abs{R}})}\frac{1}{Z_{Ri}}\sum_{\labelvec_R:z_i = s} \prod_{j \in R\setminus i}\mess{j}{R}{z_j} \nonumber\\ 
        &= \paren*{1 - O(n^{1-\abs{R}})}\frac{1}{Z_{Ri}}\;. \label{eq:almost-constant-1}
    \end{align}
    The last line follows from the normalization of the messages $\mess{j}{R}{s}$, since the sum ranges over all possible labelings of the nodes in $R\setminus i$. 
    Since the smallest possible edge size is $k = 2$, we have shown that $\varmess{i}{R}{s} \doteq Z_{Ri}^{-1}$. 
    In particular, $\varmess{i}{R}{s}$ is approximately constant with respect to $s$.

    We can also approximate $\mess{i}{R}{s}$ in the case $a_R = 0$, using \cref{eq:almost-constant-1,eq:BP-1,eq:marginal-message} to obtain 
    \begin{align}
        \mess{i}{R}{s} &\gets \frac{1}{Z_{iR}}\q{s} \prod_{Q \in \tuples(i)\setminus R} \varmess{i}{Q}{s} \nonumber\\ 
        &= \paren*{1 + O(n^{1-\abs{R}})} \frac{Z_{Ri}}{Z_{iR}}\q{s} \prod_{Q \in \tuples(i)} \varmess{i}{Q}{s} \nonumber\\ 
        & \doteq\mess{i}{}{s}\;. \label{eq:almost-constant-2}
    \end{align}
    Here, we are able to identify the normalizing constant $Z_i = \frac{Z_{iR}}{Z_{Ri}}$ independent of $R$ because it normalizes an expression independent of $R$. 

    Let's now consider how messages are passed along edges $R$ such that $a_R = 1$. 
    This corresponds to the consideration of $\mF_1$. 
    Substituting \cref{eq:BP-2} into \cref{eq:BP-1} and absorbing normalizing constants allows us to eliminate the messages $\varmess{i}{R}{s}$ entirely, obtaining an explicit form for $\mF_1$:
    \begin{align}
        \messcoords{\mF_1(\vmess_0, \vmess_1)}{i}{R}{s} = \frac{1}{Z_{iR}}\q{s} \prod_{Q \in \tuples(i)\setminus R} \sum_{\labelvec_Q: z_i = s} \eta(a_Q|\labelvec_Q) \prod_{j \in Q\setminus i} \mess{j}{Q}{z_j}\;. \label{eq:BP-3}
    \end{align}
    The updates in $\vmess_1$ are now $\vmess_1 \gets \mF_1(\vmess_0, \vmess_1)$.

    Let us write $\mF_1$ in the form 
    \begin{align*}
        \messcoords{\mF_1(\vmess_0, \vmess_1)}{i}{R}{s} = \frac{1}{Z_{iR}}\q{s} \messcoords{M}{i}{R}{s} \messcoords{N}{i}{R}{s}\;,
    \end{align*}
    where $\messcoords{M}{i}{R}{s}$ contains factors corresponding to sets $Q$ such that $a_Q = 1$, while $\messcoords{N}{i}{R}{s}$ contains factors for sets $Q$ such that $a_Q = 0$. 
    We can expand $\messcoords{N}{i}{R}{s}$: 
    \begin{align*}
        \messcoords{N}{i}{R}{s} &= \prod_{\substack{Q \in \tuples(i)\setminus R \\ a_Q = 0}} \sum_{\labelvec_Q: z_i = s} \eta(a_Q = 0 | \labelvec_Q) \prod_{j \in Q\setminus i} \mess{j}{Q}{z_j} \\ 
        &= \prod_{\substack{Q \in \tuples(i)\setminus R \\ a_Q = 0}} \sum_{\labelvec_Q: z_i = s} (1-\omega(\labelvec_Q)n^{1 - \abs{Q}}) \prod_{j \in Q\setminus i} \mess{j}{Q}{z_j} \\ 
        &= \prod_{\substack{Q \in \tuples(i)\setminus R \\ a_Q = 0}} \paren*{1 - n^{1 - \abs{Q}}\sum_{\labelvec_Q: z_i = s} \omega(\labelvec_Q) \prod_{j \in Q\setminus i} \mess{j}{Q}{z_j}}\;,
    \end{align*}
    where we have used the normalization of the messages $\mess{j}{Q}{z_j}$ in the last line. 

    We will now approximate $\messcoords{N}{i}{R}{s}$ by a ``field term'' $\messcoords{N}{}{}{s}$ which does not depend on $i$ or $R$. 
    First, since $a_Q = 0$ for each $Q$ appearing in the product defining $\messcoords{N}{i}{R}{s}$, we approximate $\mess{j}{Q}{z_j} = (1 + O(n^{1-\abs{Q}}))\mess{j}{}{z_j}$. 
    Next, 
    \begin{align*}
        \messcoords{N}{i}{R}{s} &= \prod_{\substack{Q \in \tuples(i)\setminus R \\ a_Q = 0}} \paren*{1 - n^{1 - \abs{Q}}\sum_{\labelvec_Q: z_i = s} \omega(\labelvec_Q) \prod_{j \in Q\setminus i} (1 + O(n^{1-\abs{Q}}))\mess{j}{}{z_j}}\\ 
        &\doteq (1 + O(n^{-1}))\prod_{\substack{Q \in \tuples(i)\setminus R \\ a_Q = 0}} \paren*{1 - n^{1 - \abs{Q}}\sum_{\labelvec_Q: z_i = s} \omega(\labelvec_Q) \prod_{j \in Q\setminus i}\mess{j}{}{z_j}} \\ 
        &= (1 + O(n^{-1}))\frac{\prod_{\substack{Q \in \tuples(i)}} \paren*{1 - n^{1 - \abs{Q}}\sum_{\labelvec_Q: z_i = s} \omega(\labelvec_Q) \prod_{j \in Q\setminus i}\mess{j}{}{z_j}}}{\prod_{\substack{Q \in \tuples(i) \\ a_Q = 1}} \paren*{1 - n^{1 - \abs{Q}}\sum_{\labelvec_Q: z_i = s} \omega(\labelvec_Q) \prod_{j \in Q\setminus i}\mess{j}{}{z_j}}}\;. 
    \end{align*}
    The number of factors in the denominator is equal to the degree of node $i$, which is binomial and therefore concentrates about its mean $\ck{k}{s}$.  
    We therefore have that, with high probability as $n$ grows large, the entire denominator is then also $1 - O(n^{-1})$. 
    With high probability, then, 
    \begin{align*}
        \messcoords{N}{i}{R}{s} &\doteq \prod_{\substack{Q \in \tuples(i)}} \paren*{1 - n^{1-\abs{Q}}\sum_{\labelvec_Q: z_i = s} \omega(\labelvec_Q) \prod_{j \in Q\setminus i} \mess{j}{}{z_j}} \\ 
        &\eqdef  N^{(s)}\;,
    \end{align*}
    where we have defined 
    \begin{align}
        N^{(s)} \eqdef \prod_k\prod_{\substack{Q \in \tuples_k}} \paren*{1 - n^{1-k}\sum_{\labelvec_Q: z_{q_1} = s} \omega(\labelvec_Q) \prod_{j \in Q\setminus q_1} \mess{j}{}{z_j}}
    \end{align}
    \revision{to} be a constant ``field term'' which does not depend on $i$ or $R$. 
    Thus, with high probability as $n$ grows large, the message passing update \cref{eq:BP-3} satisfies
    \begin{align}
        \mF_1(\vmess_0, \vmess_1)_{iR}^{(s)} &\doteq\frac{N^{(s)}\q{s}}{Z_{iR}}  \prod_{\substack{Q \in \tuples(i)\setminus R \\ a_Q = 1}} n^{1-\abs{Q}}\sum_{\labelvec_Q: z_i = s} \omega(\labelvec_Q)\prod_{j \in Q\setminus i} \mess{j}{Q}{z_j} \nonumber \\ 
        &= \frac{N^{(s)}\q{s}}{Z_{iR}}  \prod_{\substack{Q \in \tuples(i)\setminus R \\ a_Q = 1}}\sum_{\labelvec_Q: z_i = s} \omega(\labelvec_Q)\prod_{j \in Q\setminus i} \mess{j}{Q}{z_j}\;. \label{eq:approx-dynamics}
    \end{align}
    Here we have absorbed factors that do not depend on $s$ into $Z_{iR}$. 
    Importantly, $\mF_1$  depends on $\vmess_0$ only through the field term $N^{(s)}$. 

    We next consider the behavior of \cref{eq:approx-dynamics} near the point $\bar{\vmess}$ with entries $\bar{\mu}_{iR}^{(s)} = \q{s}$. 
    Let $\vpmess$ be a perturbation vector assumed small. 
    We'll first consider the field term. 
    Let $\Delta$ refer to terms of order $O(n^{-1} + \norm{\vpmess})$. 
    The perturbed field term $N^{(s)}$ now reads 
    \begin{align*}
        N^{(s)}(\bar{\vmess} + \vpmess)_{iR}^{(s)} &= 
         \prod_k\prod_{\substack{Q \in \tuples_k}} \paren*{1 - n^{1-k}\sum_{\labelvec_Q: z_{q_1} = s} \omega(\labelvec_Q) \prod_{j \in Q\setminus q_1} \paren*{\q{z_j} + \pmess{j}{Q}{z_j}}} \\ 
        &\doteq \prod_k\prod_{\substack{Q \in \tuples_k}} \paren*{1 - n^{1-k}\sum_{\labelvec_Q: z_{q_1} = s} \omega(\labelvec_Q) \paren*{\prod_{j \in Q \setminus q_1} \q{z_j} + \sum_{j \in Q\setminus q_1}\pmess{j}{Q}{z_j} \prod_{\ell \in Q\setminus q_1, j} \q{z_\ell}}} \\
        &= \prod_k \prod_{Q \in \tuples_k}\paren*{1 - (k-1)!n^{1-k}c_k - n^{1-k}\sum_{\labelvec_Q: z_{q_1} = s}\omega(\labelvec_Q) \sum_{j \in Q\setminus q_1}\pmess{j}{Q}{z_j} \prod_{\ell \in Q\setminus q_1, j} \q{z_\ell}  } \\ 
        &= (1 + \Delta)\prod_k \prod_{Q \in \tuples_k}\paren*{1 - (k-1)!n^{1-k}c_k} \\ 
        &\eqdef (1 + \Delta)N\;,
    \end{align*}
    where we have defined $N$ to absorb the products. 
    We have shown that, near $\bar{\vmess}$, the field term $N^{(s)}$ approximately does not depend on $s$.

    Paralleling the partition $\vmess = (\vmess_0, \vmess_1)$,  we can partition the entries of $\vpmess$ as $\vpmess = (\vpmess_0, \vpmess_1)$, again corresponding to unrealized and realized edges. 
    From \cref{eq:approx-dynamics},  
    \begin{align*}
        \messcoords{\mF_1(\bar{\vmess} + \vpmess)}{i}{R}{s} &= 
        \messcoords{\mF_1(\bar{\vmess}_0 + \vpmess_0,\bar{\vmess}_1 + \vpmess_1 )}{i}{R}{s} \\ 
        &= (1+\Delta)\frac{N\q{s}}{Z_{iR}}  \prod_{\substack{Q \in \tuples(i)\setminus R \\ a_Q = 1}}\sum_{\labelvec_Q: z_i = s} \omega(\labelvec_Q)\prod_{j \in Q\setminus i} \paren*{\q{z_j} + \pmess{j}{Q}{z_j}} \\ 
        &\doteq (1+\Delta)\frac{N\q{s}}{Z_{iR}} \prod_{\substack{Q \in \tuples(i)\setminus R \\ a_Q = 1}}\sum_{\labelvec_Q: z_i = s} \omega(\labelvec_Q) \sqbracket*{\prod_{j \in Q \setminus i}\q{z_j} + \sum_{j \in Q \setminus i}\pmess{j}{Q}{z_j}\prod_{h \in Q \setminus i,j} \q{z_h}} \\ 
        &= (1+\Delta)\frac{N\q{s}}{Z_{iR}} \prod_k \prod_{\substack{Q \in \tuples_k(i)\setminus R \\ a_Q = 1}}\paren*{(k-1)!c_k + \sum_{\labelvec_Q: z_i = s} \omega(\labelvec_Q)  \sum_{j \in Q \setminus i}\pmess{j}{Q}{z_j}\prod_{h \in Q \setminus i,j} \q{z_h} }\;.
    \end{align*}
    By conditioning on the label of $j$, we can simplify the second term in the factor: 
    \begin{align}
        \messcoords{\mF_1(\bar{\vmess} + \vpmess)}{i}{R}{s} &=(1+\Delta)\frac{N\q{s}}{Z_{iR}} \prod_k \prod_{\substack{Q \in \tuples_k(i)\setminus R \\ a_Q = 1}}\paren*{(k-1)!c_k + (k-2)! \sum_{t \in \alphabet} \ck{k}{s,t} \sum_{j \in Q\setminus i}\pmess{j}{Q}{t} } \nonumber \\ 
        &= (1+\Delta)\frac{N\q{s}}{Z_{iR}} \prod_k \prod_{\substack{Q \in \tuples_k(i)\setminus R \\ a_Q = 1}}\paren*{1 +  \sum_{t \in \alphabet} \frac{\ck{k}{s,t}}{(k-1)c_k} \sum_{j \in Q\setminus i}\pmess{j}{Q}{t} }\;. \label{eq:product-to-expand}
    \end{align}
    When $\vpmess = \vzero$, we have that $\mF(\bar{\vmess})_{iR}^{(s)} \doteq \frac{N}{Z_{iR}} \q{s}$. 
    Since the messages must normalize in $s$, we have that, up to errors 
    that can be absorbed into the first factor, $\frac{N}{Z_{iR}} = 1$. 
    This shows that  
    \begin{align}
        \mF_1(\bar{\vmess}_0, \bar{\vmess}_1) &\doteq \bar{\vmess}_1\;. \label{eq:fixed-point} 
    \end{align}
    A similar calculation shows that 
    \begin{align*}
        \mF_0(\bar{\vmess}_0, \bar{\vmess}_1) &\doteq \bar{\vmess}_0\;,
    \end{align*}
    which relations jointly give $\mF(\bar{\vmess}) \doteq \bar{\vmess}$.  
    This proves the first clause of the theorem. 

    Furthermore, since $\mF_1$ depends on $\bar{\vmess}_0$ only through the factor $N/Z_{iR} = 1 + \Delta = 1 + O(n^{-1} + \norm{\vpmess})$, we have that any derivative of $\mF_1$ in a direction corresponding to $\vmess_0$ is of order $O(n^{-1})$.
    Thus, the Jacobian $\frac{\partial \mF_1}{\partial \vmess_0}$, evaluated at $\bar{\vmess}$, has entries of order  $O(n^{-1})$. 
    The projected matrix $\mJ_{10} = \mPi \frac{\partial \mF_1}{\partial \vmess_0}$ is also of order $O(n^{-1})$, proving the first equation in \cref{eq:BP-jacobian-expressions}.

    It remains to compute $\mJ_{11}$ at $\bar{\vmess}$. 
    Expanding the product in \cref{eq:product-to-expand} to first order in $\vpmess$ and separating the arguments of $\mF_1$ gives 
    \begin{align*}
        \messcoords{(\bar{\vmess}_0, \bar{\vmess}_1 + \vpmess_1)}{i}{R}{s} = (1+\Delta) \q{s}\paren*{1 + \sum_{k \in \Kset} \sum_{t \in \alphabet} \frac{\ck{k}{s,t}}{(k-1)c_k} \sum_{\substack{Q \in \tuples_k(i)\setminus R \\ a_Q = 1}} \sum_{j \in Q\setminus i} \pmess{j}{Q}{t} }\;. 
    \end{align*}
    Using \cref{eq:fixed-point} gives 
    \begin{align*}
        \messcoords{\mF_1(\bar{\vmess}_0, \bar{\vmess}_1 + \vpmess_1) -  \mF_1(\bar{\vmess}_0, \bar{\vmess}_1)}{i}{R}{s} = (1+\Delta) \q{s}\sum_{k \in \Kset} \sum_{t \in \alphabet} \frac{\ck{k}{s,t}}{(k-1)c_k} \sum_{\substack{Q \in \tuples_k(i)\setminus R \\ a_Q = 1}} \sum_{j \in Q\setminus i} \pmess{j}{Q}{t}\;. 
    \end{align*}

    We now apply the projection $\mPi$ onto the subspace of admissible perturbations, yielding 
    \begin{align*}
        \messcoords{[\mPi\mF_1(\bar{\vmess}_0, \bar{\vmess}_1 + \vpmess_1) -  \mPi\mF_1(\bar{\vmess}_0, \bar{\vmess}_1) ]}{i}{R}{s} = (1+\Delta) \q{s}\sum_{k \in \Kset} \sum_{t \in \alphabet} \paren*{\frac{\ck{k}{s,t}}{(k-1)c_k} - 1} \sum_{\substack{Q \in \tuples_k(i)\setminus R \\ a_Q = 1}} \sum_{j \in Q\setminus i} \pmess{j}{Q}{t}\;.
    \end{align*}
    We can identify $R$ with an edge $e$, and the pair $(i,R)$ as a pointed edge $\pointed{e}$ with $p(\pointed{e}) = i$. 
    Doing the same for $Q$ and $j$, we can recognize the two rightmost sums as the action of $\mB_k$ on the perturbation vector $\vpmess$. 
    Using the definition of $\mG_k$, we can write this relation as 
    \begin{align*}
        \mPi\mF_1(\bar{\vmess}_0, \bar{\vmess}_1 + \vpmess_1) - \mPi\mF_1(\bar{\vmess}_0, \bar{\vmess}_1) = (1+\Delta) \sum_{k \in \Kset} (\mG_k \otimes \mB_k)\vpmess\;. 
    \end{align*}
    Ignoring the error term, this relation would define the Jacobian $\mJ_{11} \eqdef \mPi\frac{\partial \mF_1}{\partial \vmess_1}$ as equal to the righthand side. 
    Allowing $\norm{\vpmess}\rightarrow 0$,  we conclude that $\mJ_{11}$ satisfies 
    \begin{align*}
        \mJ_{11} \doteq (1 + O(n^{-1})) \sum_{k \in \Kset} (\mG_k  \otimes \mB_k)\;,
    \end{align*}
    which establishes the second clause of \cref{eq:BP-jacobian-expressions} and completes the proof. 
\end{proof}

\section{Proof of \Cref{thm:eigen-expectation}} \label{sec:eigen-expectation-proof}

    We will prove \cref{eq:eigen-expectation-2}; the proof of \cref{eq:eigen-expectation-1} is similar but somewhat shorter. 
    Let $\tilde{\vv}$ be a vector indexed by tuples $Q \in \binom{[n]}{k}$ and nodes $i \in Q$ with entries
    \begin{align*}        
        \tilde{v}_{iQ} &\eqdef \begin{cases}
            v_{iQ} &\quad Q \in \edges\;, \; \rev{i \in Q} \\ 
            0 &\quad \text{otherwise.}
        \end{cases}
    \end{align*}
    Let $\tilde{\bB}_k$ be the matrix with entries 
    \begin{align*}
        \tilde{b}_{k;iQ, jR} = 
        \begin{cases}
            b_{k;iQ, jR} &\quad Q, R \in \edges\;,\; \rev{i \in Q\;, \; j \in R}\\ 
            0 &\quad \text{otherwise.} 
        \end{cases}
    \end{align*}
    We are going to show that  $\E\sqbracket*{(\tilde{\mB}_k\tilde{\vv})_{iQ} - \beta_k\tilde{v}_{iQ}|Q \in \edges} \doteq 0$; since $\tilde{\mB}_k$ and $\tilde{\vv}$ agree with $\mB_k$ and $\vv$ conditioned on the event $Q \in \edges$, this will imply \cref{eq:eigen-expectation-2}. 

    We proceed via direct computation. 
    Expanding the expectation, we can write  
    \begin{align*}
        \E[(\tilde{\mB}_k\tilde{\vv})_{iQ}|Q\in \edges]
        &= \sum_{jR\in \pointed{\edges}_k}\E[\tilde{b}_{iQ, jR}\tilde{v}_{jR}|Q\in \edges] \\ 
        &= \sum_{jR\in \pointed{\edges}_k}\E[\tilde{b}_{iQ, jR}\tilde{v}_{jR}|Q, R\in \edges]\eta(R\in \edges) \\ 
        &= \sum_{jR\in \pointed{\edges}_k}\eta(R\in \edges) b_{iQ, jR}v_{jR} \\ 
        &= \sum_{jR\in \pointed{\edges}_k}\eta(R\in \edges) b_{iQ, jR}\sum_{\ell \in R\setminus j}\sigma_\ell \;. 
    \end{align*}
    The third line follows from the fact that, conditioned on the event $Q, R \in \edges$, $\tilde{b}_{iQ, jR} = b_{iQ, jR}$ and $\tilde{v}_{jR} = v_{jR}$. 
    Proceeding from the fourth line, we can evaluate $b_{iQ, jR}$ and rearrange the sums: 
    \begin{align*}
        \E[(\tilde{\mB}_k\tilde{\vv})_{iQ}|Q\in \edges]
        &= \sum_{j \in Q\setminus i}\sum_{\substack{R\in \tuples_k(j)\setminus Q}}\eta(R\in \edges)\sum_{\ell \in R\setminus j}\sigma_\ell \\ 
        &= \sum_{j \in Q\setminus i}\sum_{\substack{R'\in \binom{[n]\setminus j}{k-1}\\R'\neq Q\setminus j}} \eta(R'\cup j \in \edges)\sum_{\ell \in R'}\sigma_\ell \\ 
        &= \sum_{j \in Q\setminus i}\sum_{\ell \neq j}\sigma_\ell\sum_{\substack{R'' \in \binom{[n]\setminus \{j,\ell\}}{k-2} \\ R'' \neq Q\setminus \{\ell,j\}}} \eta(R'' \cup \{j , \ell\} \in \edges)\;. 
    \end{align*}
    The inner sum satisfies 
    \begin{align*}
        \sum_{\substack{R'' \in \binom{[n]\setminus \{j,\ell\}}{k-2} \\ R'' \neq Q\setminus \{\ell,j\} }} \eta(R'' \cup \{j , \ell\} \in \edges) \doteq \sum_{\substack{R'' \in \binom{[n]\setminus \{j,\ell\}}{k-2} }} \eta(R'' \cup \{j , \ell\} \in \edges)\;,
    \end{align*}
    The asymptotic equality holds because there are $\binom{n-2}{k-2}$ terms, of which the condition $R^{''} \neq Q\setminus\{\ell, j\}$ excludes only one.\footnote{In the edge case $k = 2$, the two sides are in fact exactly equal. } 
    We proceed to compute the sum on the righthand side. 
    \begin{align*}
        \sum_{\substack{R'' \in \binom{[n]\setminus \{j,\ell\}}{k-2}}} \eta(R'' \cup \{j , \ell\} \in \edges) 
        &= n^{1-k}\sum_{\substack{R'' \in \binom{[n]\setminus \{j,\ell\}}{k-2} }} \omega(\labelvec_{R''}, z_j, z_\ell) \\ 
        &= n^{1-k}\sum_{\labelvec \in \alphabet^{k-2}}\sum_{\substack{R'' \in \binom{[n]\setminus \{j,\ell\}}{k-2} \\ \labelvec_{R''} = \labelvec}} \omega(\labelvec, z_j, z_\ell)\;. 
    \end{align*} 
    We can make progress by counting the number of subsets $R''$ that realize each specified label vector $\labelvec$. 
    There are $\binom{n-2}{k-2}$ possible choices, and the proportion of these choices satisfying $\labelvec_R = \labelvec$ is asymptotically $\prod_{s \in \vz} \q{s}$. 
    This gives   
    \begin{align*}
        \sum_{\substack{R'' \in \binom{[n]\setminus j,\ell}{k-2}}} \eta(R'' \cup \{j , \ell\} \in \edges)
        &\doteq n^{1-k} \binom{n-2}{k-2}\sum_{\labelvec \in \alphabet^{k-2}}\omega(\labelvec, z_j, z_\ell) \prod_{s \in \vz} \q{s} \\ 
        &\doteq \frac{1}{n}\frac{1}{(k-2)!}\sum_{\labelvec \in \alphabet^{k-2}}\omega(\labelvec, z_j, z_\ell) \prod_{s \in \vz} \q{s} \\
        &= \frac{1}{n}\ck{k}{z_j,z_\ell}\;,
    \end{align*}
    where we have used \cref{eq:c2-def} in the final line. 
    We therefore have 
    \begin{align}
        \E[(\tilde{\mB}_k\tilde{\vv})_{iQ}|Q\in \edges] \doteq \frac{1}{n}\sum_{j \in Q\setminus i}\sum_{\ell \neq j}\sigma_\ell \ck{k}{z_j, z_\ell}\;. 
    \end{align}
    Let us split this sum according to whether $z_\ell = z_j$: 
    \begin{align*}
        \E[(\tilde{\mB}_k\tilde{\vv})_{iQ}|Q\in \edges] &\doteq \frac{1}{n}\sum_{j \in Q\setminus i}\paren*{\sum_{\substack{\ell \neq j \\ z_\ell = z_j}  }\sigma_\ell \ck{k}{z_j, z_\ell} + \sum_{\substack{\ell \neq j \\ z_\ell \neq z_j}  }\sigma_\ell \ck{k}{z_j, z_\ell}} \\ 
        &= \frac{1}{n}\sum_{j \in Q\setminus i}\sigma_j\paren*{\sum_{\substack{\ell \neq j \\ z_\ell = z_j}  } \ckin - \sum_{\substack{\ell \neq j \\ z_\ell \neq z_j}  } \ckout} \\ 
        &\doteq \frac{1}{2}\sum_{j \in Q\setminus i}\sigma_j\paren*{\ckin - \ckout} \\ 
        &= \frac{1}{2}\paren*{\ckin - \ckout} \tilde{v}_{iQ}\;. 
    \end{align*}
    For the third line, we have used the fact that there are approximately $\frac{n}{2}$ terms in each sum. 
    This completes the proof. 

\section{Proof of \Cref{thm:reduced-jacobian}} \label{sec:reduced-jacobian-proof}
\revision{
We now provide a more detailed statement and proof of \Cref{thm:reduced-jacobian}. 
We include a more explicit description of the matrix $\mL$, as well as the matrix $\mJ'$. 
\begin{theorem} \label{thm:jacobian-reduction}
    Suppose that $\vu \in \vectorspace{\alphabet\times \pointed{\edges}}$ and that $\xi \vu = \mJ\vu$ for some $\xi \neq 0$. 
    Let $\mL$ be the matrix
    \begin{align}
        \mL = 
        \sqbracket*{\begin{matrix}
            \mI_\ell \\ 
            \mI_\ell 
        \end{matrix}} 
        \otimes
        \sum_{k\in \Kset} 
        \sqbracket*{\begin{matrix}
            \bar{\mT}_k \mW^{-1} \\ 
            \bar{\mT}_k
        \end{matrix}}\;,
    \end{align}
    with $\bar{\mT}_k$ and $\mW$ as defined in \Cref{sec:proof-of-ib-eigenvector-correspondence}. 
    Let $\vx \eqdef (\vx_1, \vx_2)^T \eqdef \mL \vu$.  
    Then, $\xi \vx = \mJ' \vx$, where 
    \begin{align*}
        \mJ' \eqdef \paren*{\mI_{2}\otimes \mG\otimes \mI_n}
        \sqbracket*{
            \begin{matrix}
                \vzero & \mI_\ell \otimes \tilde{\mD} \\ 
                \vzero & \mI_\ell \otimes \tilde{\mA}
            \end{matrix}
        } 
        -  \paren*{\mI_{2}\otimes \mH\otimes \mI_n}
        \sqbracket*{
            \begin{matrix}
                \vzero & \mI_{\ell \kappa n} \\ 
                \mI_\ell \otimes \sqbracket*{\mK - \mI_\kappa}\otimes \mI_n & \mI_\ell \otimes (\mK - 2\mI_{\kappa})\otimes \mI_{n}
            \end{matrix}
        }
    \end{align*} 
    and
    \begin{align*}
        \mG \eqdef \sqbracket*{\begin{matrix}
            \mG_{2} & \cdots & \mG_{\bar{k}} \\ 
            \vdots & \ddots  & \vdots \\ 
            \mG_{2} & \cdots & \mG_{\bar{k}} 
        \end{matrix}} \; \text{,} \; 
        \mH \eqdef \sqbracket*{\begin{matrix}
            \mG_{2} & &  \\ 
             & \ddots  &\\ 
             & & \mG_{\bar{k}} 
        \end{matrix}}\; \text{, } \; 
        \tilde{\mD} \eqdef \sqbracket*{\begin{matrix}
            \mD_{2} & &  \\ 
             & \ddots  &\\ 
             & & \mD_{\bar{k}} 
        \end{matrix}} \; \text{, and} \; 
        \tilde{\mA} \eqdef \sqbracket*{\begin{matrix}
            \mA_{2} & &  \\ 
             & \ddots  &\\ 
             & & \mA_{\bar{k}} 
        \end{matrix}}\;.
    \end{align*}
    In particular, either $\vx = \vzero$ or $\vx$ is an eigenvector of $\mJ'$ with eigenvalue $\xi$. 
\end{theorem}
}

\begin{proof}
    \revision{
    Our proof broadly parallels the proof of \Cref{cor:ib-eigenvector-correspondence} given in \Cref{sec:proof-of-ib-eigenvector-correspondence}. 
    We multiply the relation $\xi \vu = \mJ \vu$ by each of the two blocks of $\mL$, obtaining the relationship $\xi \vx = \mJ' \vx$. 
    Starting with the first block, premultiply by the matrix $\mI_\ell \otimes \sum_{k \in \Kset} \bar{\mT}_k \mW^{-1}$.  
    On the lefthand side we obtain $\xi \vx_1$. 
    On the right we compute 
    \begin{align*}
        \xi \vx_1 &= \paren*{\mI_\ell \otimes \sum_{k \in \Kset} \bar{\mT}_k \mW^{-1}} \mJ \vu \\ 
        &= \paren*{\mI_\ell \otimes \sum_{k \in \Kset} \bar{\mT}_k \mW^{-1}} \paren*{\sum_{k' \in \Kset} \mG_{k'} \otimes \mB_{k'}} \vu \\ 
        &= \paren*{\sum_{k, k' \in \Kset} \paren*{\mG_{k'} \otimes \sqbracket{\bar{\mT}_k \mW^{-1} \mB_{k'}}}} \vu\;.
    \end{align*}
    Let $\bar{\vu}_k^{(s)} \in \vectorspace{\pointed{\edges}}$ be the vector with entries of $\vu$ in cluster $s$ with a pointed edge of size $k$, and zero for pointed edges of different size. 
    Let $\bar{\vu}_k \eqdef \sum_{s \in \alphabet} \ve^{(s)} \otimes \bar{\vu}_{k}^{(s)}$, where $\ve^{(s)}$ is the standard basis vector in the direction $s$. 
    We then have $\vu = \sum_{k \in \Kset} \bar{\vu}_k$. 
    We then write 
    \begin{align*}
        \xi \vx_1  &= \paren*{\sum_{k, k' \in \Kset} \paren*{\mG_{k'} \otimes \sqbracket{\bar{\mT}_k \mW^{-1} \mB_{k'}}}} \sum_{k'' \in \Kset, s \in \alphabet} \paren*{\ve^{(s)} \otimes \bar{\vu}_{k''}^{(s)}} \\ 
        &= \sum_{k, k', k'' \in \Kset, s \in \alphabet} \paren*{\mG_{k'} \ve^{(s)}}\otimes \paren*{\bar{\mT}_k \mW^{-1} \mB_{k'} \bar{\vu}^{(s)}_{k''}} \\ 
        &= \sum_{k \in \Kset, s \in \alphabet} \paren*{\mG_{k} \ve^{(s)}}\otimes \paren*{\bar{\mT}_k \mW^{-1} \mB_{k} \bar{\vu}^{(s)}_{k}}\;.
    \end{align*}
    In the second line we have used the mixed product property. 
    The third follows from direct multiplication, finding that the products involving mixed edge sizes zero out. 
    The same argument as in \Cref{sec:ib-nonuniform-cor-proof} shows that $\bar{\mT}_k \mW^{-1} \mB_{k} \bar{\vu}^{(s)}_{k} = \paren*{\bar{\mD}_{k} - \mI_{\kappa n}}\bar{\vx}_{2;k}^{(s)}$, where $\bar{\vx}_{2;k}^{(s)}\in \vectorspace{\nodes}$ has entries equal to $\vx_2$ in edge size $k$ and zero otherwise, with all entries corresponding to group $s$.  
    We thus have 
    \begin{align*}
        \xi \vx_1 &= \sum_{k \in \Kset, s \in \alphabet} \paren*{\paren*{\mG_{k} \ve^{(s)}}\otimes\paren*{\bar{\mD}_{k} - \mI_{\kappa n}}}\bar{\vx}_{2;k}^{(s)} \\ 
        &= \sum_{k \in \Kset, s \in \alphabet} \paren*{\mG_{k}\otimes \paren*{\bar{\mD}_{k} - \mI_{\kappa n}}}\paren*{\ve^{(s)} \otimes\bar{\vx}_{2;k}^{(s)}} \\ 
        &= \sum_{k \in \Kset} \paren*{\mG_{k}\otimes \paren*{\bar{\mD}_{k} - \mI_{\kappa n}}} \bar{\vx}_{2;k}\;.
    \end{align*}
    The second line is again the mixed product property, while for the third we have defined $\bar{\vx}_{2;k}$ to have entries that agree with $\vx_2$ on edges of size $k$ and which are zero otherwise. 
    Summing over $k$ yields, after some algebra, our first reduced relation: 
    \begin{align}
        \xi \vx_{1} = \sqbracket{\paren*{\mG \otimes \mI_n}\paren*{\mI_\ell \otimes \tilde{\mD}} - \mH \otimes \mI_n }\vx_{2} \;, \label{eq:reduced-jacobian-rel-1} 
    \end{align}
    where the matrices $\mG$, $\mH$, and $\tilde{\mD}$ are defined in the statement of \Cref{thm:jacobian-reduction}. 
    }

    \revision{
    We now premultiply both sides of the eigenvector relation $\xi \vu = \mJ \vu$ by the matrix $\mI_{\ell} \otimes \sum_{k \in \Kset }\bar{\mT}_k$. 
    The lefthand side becomes $\xi \vx_2$. 
    For the righthand side, we compute 
    \begin{align*}
        \xi \vx_2 &= \paren*{\mI_{\ell} \otimes \sum_{k \in \Kset }\bar{\mT}_k}\mJ\vu \\ 
        &= \paren*{\mI_\ell \otimes \sum_{k \in \Kset} \bar{\mT}_k } \paren*{\sum_{k' \in \Kset} \mG_{k'} \otimes \mB_{k'}} \vu \\ 
        &= \paren*{\sum_{k, k' \in \Kset} \paren*{\mG_{k'} \otimes \sqbracket{\bar{\mT}_k  \mB_{k'}}}} \vu \\ 
        &= \paren*{\sum_{k, k' \in \Kset} \paren*{\mG_{k'} \otimes \sqbracket{\bar{\mT}_k  \mB_{k'}}}} \sum_{k'' \in \Kset, s \in \alphabet} \paren*{\ve^{(s)} \otimes \bar{\vu}_{k''}^{(s)}} \\ 
        &= \sum_{k, k', k'' \in \Kset, s \in \alphabet} \paren*{\mG_{k'} \ve^{(s)}}\otimes \paren*{\bar{\mT}_k  \mB_{k'} \bar{\vu}^{(s)}_{k''}} \\ 
        &= \sum_{k \in \Kset, s \in \alphabet} \paren*{\mG_{k} \ve^{(s)}}\otimes \paren*{\bar{\mT}_k  \mB_{k} \bar{\vu}^{(s)}_{k}}\;.
    \end{align*}
    The steps of the calculation so far precisely parallel the calculation of \eqref{eq:reduced-jacobian-rel-1}. 
    Defining $\bar{\vx}^{(s)}_{1;k}$ similarly to $\bar{\vx}^{(s)}_{2;k}$ and retracing the argument from \Cref{sec:ib-nonuniform-cor-proof}, we further simplify 
    \begin{align*}
        \xi \vx_2 &= \sum_{k \in \Kset, s \in \alphabet} \paren*{\mG_{k} \ve^{(s)}}\otimes \paren*{\sqbracket*{\bar{\mA}_k - (k-2)\mI_{\kappa n}}\bar{\vx}^{(s)}_{2;k} - (k-1)\mI_{\kappa n}\bar{\vx}^{(s)}_{1;k}} \\ 
        &= \sum_{k \in \Kset, s \in \alphabet} \paren*{\mG_k\ve^{(s)}} \otimes \paren*{\sqbracket*{\bar{\mA}_k - (k-2)\mI_{\kappa n}}\bar{\vx}^{(s)}_{2;k} - (k-1)\mI_{\kappa n}\bar{\vx}^{(s)}_{1;k}} \\ 
        &= \sum_{k \in \Kset, s \in \alphabet}  \curlybrace*{\paren*{\mG_k\otimes \sqbracket*{\bar{\mA}_k - (k-2)\mI_{\kappa n}}} \paren*{\ve^{(s)}\otimes \bar{\vx}^{(s)}_{2;k}} - (k-1) \paren*{\mG_k \otimes \mI_{\kappa n}}\paren*{\ve^{(s)}\otimes \bar{\vx}^{(s)}_{1;k}}} \\ 
        &= \sum_{k \in \Kset} \curlybrace*{ \paren*{\mG_k\otimes \sqbracket*{\bar{\mA}_k - (k-2)\mI_{\kappa n}}} \bar{\vx}_{2;k} - (k-1) \paren*{\mG_k \otimes \mI_{\kappa n}}\bar{\vx}_{1;k}} \;.
    \end{align*}
    Performing the sum over $k$ yields, after some further algebra, 
    \begin{align}
        \xi \vx_{2} &=  \paren*{\mG \otimes \mI_n}\paren*{\mI_\ell \otimes \tilde{\mA}}\vx_2 - \paren*{\mH \otimes \mI_n}\paren*{\mI_\ell \otimes \sqbracket{\mK - 2\mI_\kappa} \otimes \mI_n}\vx_2 - \paren*{\mH \otimes \mI_n}\paren*{\mI_\ell \otimes\sqbracket{\mK - \mI_\kappa} \otimes \mI_n}\vx_1\;, \label{eq:reduced-jacobian-rel-2} 
    \end{align}
    where $\tilde{\mA}$ is as defined in the statement of \Cref{thm:jacobian-reduction}. 
    }
    
    \revision{
    Combining \cref{eq:reduced-jacobian-rel-1,eq:reduced-jacobian-rel-2} yields the eigenvector relation 
    \begin{align*}
        \xi \paren*{\begin{matrix} \vx_1 \\ \vx_2\end{matrix}} = 
        \curlybrace*{
            \paren*{\mI_{2}\otimes \mG\otimes \mI_n}
            \sqbracket*{
                \begin{matrix}
                    \vzero & \mI_\ell \otimes \tilde{\mD} \\ 
                    \vzero & \mI_\ell \otimes \tilde{\mA}
                \end{matrix}
            } 
            -  \paren*{\mI_{2}\otimes \mH\otimes \mI_n}
            \sqbracket*{
                \begin{matrix}
                    \vzero & \mI_{\ell \kappa n} \\ 
                    \mI_\ell \otimes \sqbracket*{\mK - \mI_\kappa}\otimes \mI_n & \mI_\ell \otimes (\mK - 2\mI_{\kappa})\otimes \mI_{n}
                \end{matrix}
            }
        }
            \paren*{\begin{matrix} \vx_1 \\ \vx_2\end{matrix}}\;. 
    \end{align*}
    Defining the matrix inside the braces as $\mJ'$ completes the proof. 
    }
\end{proof}

\section{Proof of \Cref{lm:vanilla-parameterized-eigs}} \label{sec:proof-of-vanilla-parameterized-eigs}

    We'll first calculate $\E[\mk{k}{s, t}]$,  where $\mk{k}{s,t}$ is the number of edges containing a node in cluster $s$ and a node in cluster $t$, counting multiplicities, discounting label order.  
    For example, an edge with group labels $(s, s, t, t, r)$ counts four times towards $\mk{5}{s,t}$ and twice each towards $\mk{5}{s, r}$ and $\mk{5}{t, r}$. 
    Another useful way to think of $\mk{k}{s,t}$ is as the number of pairwise edges joining nodes in cluster $s$ to nodes in cluster $t$ in the clique-projected graph, counted in both directions.  

    We'll now compute $\E[\mk{k}{s,s}]$. 
    There are a total of $\frac{nc_k}{k}$ $k$-edges in expectation, of which fraction $p_k$ are within-cluster and fraction $1-p_k$ are between-cluster. 
    The within-cluster edges contribute $2\binom{k}{2}$ within-cluster pairwise connections, with the factor of 2 reflecting the fact that each such connection must be counted once in each of two directions. 
    Since a given within-cluster edge is equally likely to lie within either of the two clusters, the total contribution to $\E[\mk{k}{s,s}]$ by within-cluster edges is $\frac{n c_kp_k}{k}\binom{k}{2}$. 
    There is also a contribution to $\E[\mk{k}{s,s}]$ from between-cluster edges. 
    There are in expectation $\frac{nc_k}{k}(1-p_k)$ such edges. 
    In a given such edge, if $H$ nodes are elements of cluster $s$, then there is a contribution of $2\binom{H}{2}$ to $\mk{k}{s, s}$. 
    Here, $0 \leq H \leq k-1$, since $H = k$ would yield a within-cluster edge. 
    We can therefore treat $H$ as a multinomial random variable with $k$ trials and uniform probability of each cluster label, conditioned on the event that the $k$ labels do not all agree. 
    Let $\tilde{H}$ be a binomial random variable with $k$ trials and success probability $\frac{1}{2}$. 
    Then, the expectation we want is 
    \begin{align*}
        \E\sqbracket*{\binom{H}{2}} = \frac{\E\sqbracket*{\binom{\tilde{H}}{2}} - 2^{-k}\binom{k}{2}}{1 -  2^{1-k}} = \frac{\E\sqbracket*{\binom{\tilde{H}}{2}} - 2^{-k}\binom{k}{2}}{1 - 2^{1-k}} = \frac{1}{4}\binom{k}{2}\frac{1 - 2^{2-k}}{1 - 2^{1-k}} \eqdef \frac{1}{2}\binom{k}{2}r_k\;,
    \end{align*}
    where $r_k = \frac{1 - 2^{2-k}}{2 - 2^{2-k}}$. 
    Combining this with our previous results, we have 
    \begin{align*}
        \E[\mk{k}{s,s}] = \frac{n(k-1)}{2}c_k\sqbracket*{p_k  + (1-p_k)r_k}\;.
    \end{align*}
    In turn, we have 
    \begin{align}
        \ckin \eqdef \ck{k}{s,s} = \frac{4\E[\mk{k}{s,s}]}{n} =  2(k-1)c_k\sqbracket*{p_k + (1-p_k)r_k}\;. \label{eq:ck-eq}
    \end{align}
    We can also now compute $\ckout$ via \cref{eq:degree-identity}: 
    \begin{align}
        \ckout \eqdef \ck{k}{s,t} = 2(k-1)c_k - \ckin\;. \label{eq:ck-neq}
    \end{align}
    \Cref{eq:ck-eq,eq:ck-neq} give us $\ckin$ and $\ckout$ as affine functions of $p_k$, which substantiates our claim that, under \Cref{conj:vanilla}, \cref{eq:detectability} defines a pair of hyperplanes in the coordinates $\{p_k\}$.

\section{Estimation of $\mG_k$} \label{sec:estimation-of-G}

    A natural candidate for a spectral algorithm would be to alternate between estimates of the community labels $\labelvec$ and the connectivity parameters contained in the matrix $\mG_k$. 
    Doing so requires the ability to estimate the entries of $\mG_k$ from the observed hypergraph and a label estimate $\hat{\labelvec}$. 
    We'll use $\hat{q}$ to refer to the estimate of the cluster population sizes using $\hat{\labelvec}$. 

    While there may be more subtle ways to do this, we proceed by identifying the expected average $k$-degree $c_k$ with the \emph{empirical} average $k$-degree $\frac{k}{n} m_k$, where $m_k = \abs{\HG_k}$ is the number of $k$-edges. 
    To estimate $\ck{k}{s,t}$, first let $\mk{k}{s,t}$ give the number of edges containing a node in cluster $s$ and a node in cluster $t$. 
    We'll compute $\E[\mk{k}{s,t}]$. 
    There are $\qhat{s}n$ nodes with label $s$ and $\qhat{t}$ nodes with label $t$. 
    Let us now select an additional $k-2$ nodes, with no distinction in their identities or labels. 
    There are approximately $n^{k-2}/(k-2)!$ ways to do so.
    The probability of a given node set $R \in \tuples_{k-2}$ yielding a specific label sequence $\labelvec$ is $\prod_{\ell \in R}\qhat{z_\ell}$, and in this case an edge is realized with probability $\lambda(\labelvec_R, s, t)$. 
    We therefore compute 
    \begin{align*}
        \E[\mk{k}{s,t}] &= \frac{\qhat{s}\qhat{t}n^{k-2}}{(k-2)!} \sum_{\labelvec \in \alphabet_{k-2}} \lambda(\labelvec, s, t) \prod_{\ell \in [k-2]} \qhat{z_\ell} \\ 
        &= \qhat{s}\qhat{t}n \sqbracket*{\frac{1}{(k-2)!}\sum_{\labelvec \in \alphabet_{k-2}}\omega(\labelvec, s, t) \prod_{\ell \in [k-2]} \qhat{z_\ell} } \\ 
        &= \qhat{s}\qhat{t}n \ck{k}{s,t}\;.
    \end{align*}
    So, to form an estimate $\hat{c}_k(s,t)$, we can first form an estimate of the population sizes $\hat{q}$ from an estimate of the cluster labels $\hat{z}$. 
    We then compute $\hat{m}_k^{s,t}$, the number of edges with a node in cluster $s$ and a node in cluster $t$, and then compute 
    \begin{align}
        \hat{c}_k^{(s,t)} = \frac{\hat{m}_k^{s,t}}{\qhat{s}\qhat{t} n}\;. \label{eq:estimate_C}
    \end{align}
    On a small technical note, $\hat{m}_k^{s,t}$ should be computed counting multiplicities; for example, a $5$-edge with labels $(a, a, b, b, c)$ would make four contributions to $\hat{m}_k(a, b)$ and two contributions to both $\hat{m}_k^{a,c}$ and $\hat{m}_k^{b, c}$.

\section{Additional Experiments} \label{sec:additional-experiments}

\revision{
    \Cref{fig:threshold-experiment} supports the use of sign-based thresholding in \Cref{alg:BP-spectral-clustering}, finding overall improved recovery as measured by the Adjusted Rand Index when using thresholding. 
    \Cref{fig:eigenvalue-locations} offers support of our conjectures for the locations of informative eigenvalues in binary detection experiments for both of the matrices $\mB$ and $\mJ$. 
    This figure also illustrates that the informative eigenvalue for the matrix $\mJ$ may be the largest real eigenvalue in magnitude, rather than the second-largest as is true for $\mB$. 
    \Cref{fig:large-hypergraph-1d} supports the accuracy of the thresholds predicted by \Cref{conj:jacobian-threshold} in a much larger synthetic hypergraph of $10,000$ nodes. 
    The parameter space explored corresponds to a vertical slice of \Cref{fig:heatmaps}(a-b) with $p_2 = 0.35$. 
} 

\begin{figure}
    \includegraphics[width=0.7\textwidth]{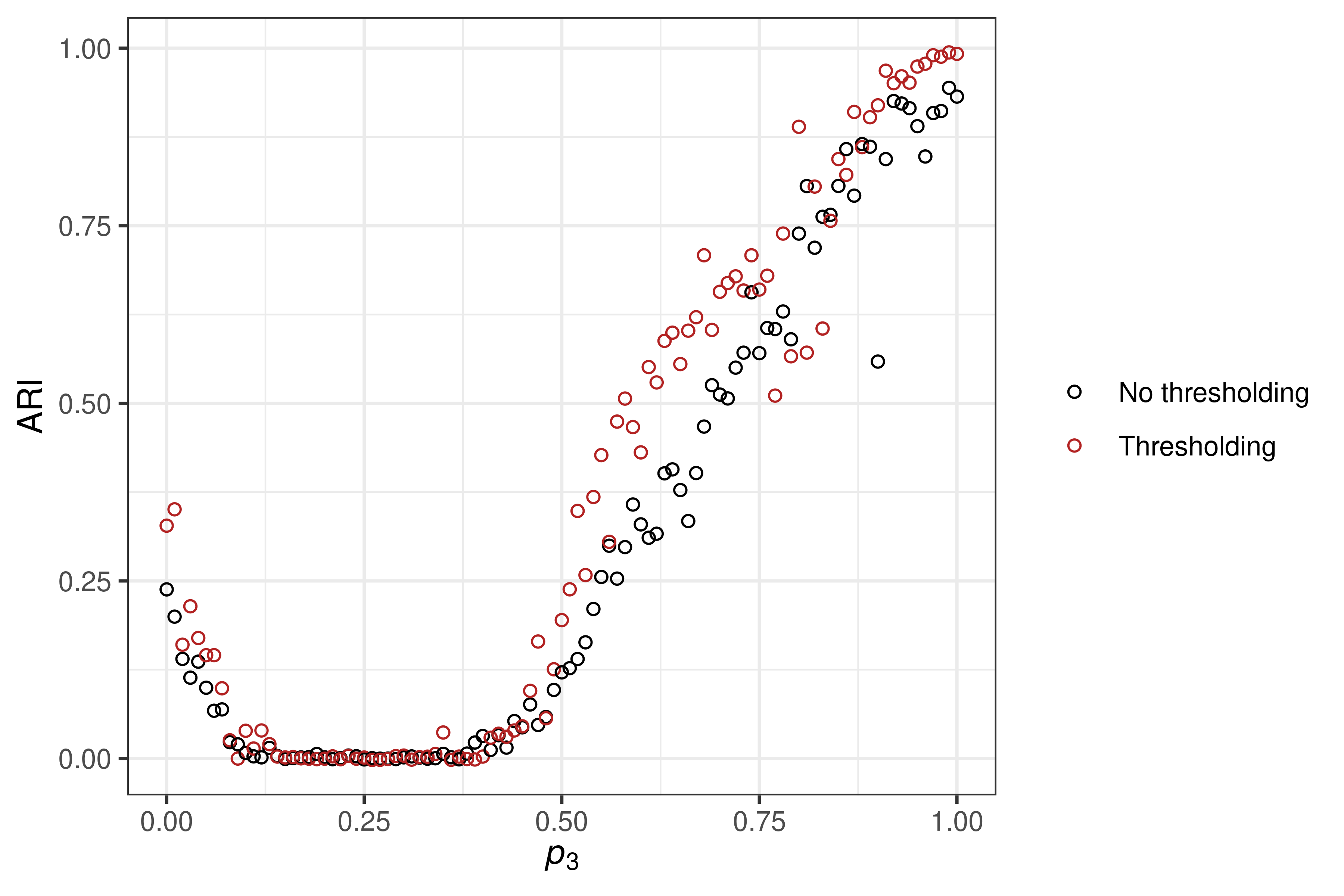}
    \caption{
        \centering
        \revision{
            Experimental test of the sign-based thresholding step in line 6 of \Cref{alg:BP-spectral-clustering}. 
            Binary detection experiment similar to those in \cref{fig:heatmaps}, with $p_2 = 0.35$ and varying $p_3$. 
            In this experiment, $c_2 = c_3 = 5$ and $c_4 = 0$, with $n = 400$ nodes. 
            The thresholding method stores only the sign of of the expression $\sum_{k \in K} x^{(s)}_{2;i,k}$, whereas the non-thresholding method stores the value of this expression. 
            The Adjusted Rand Index (ARI) averaged across 10 repetitions is shown. 
            } \label{fig:threshold-experiment}
    }
\end{figure}

\begin{figure}
    \centering 
    \includegraphics[width=0.7\textwidth]{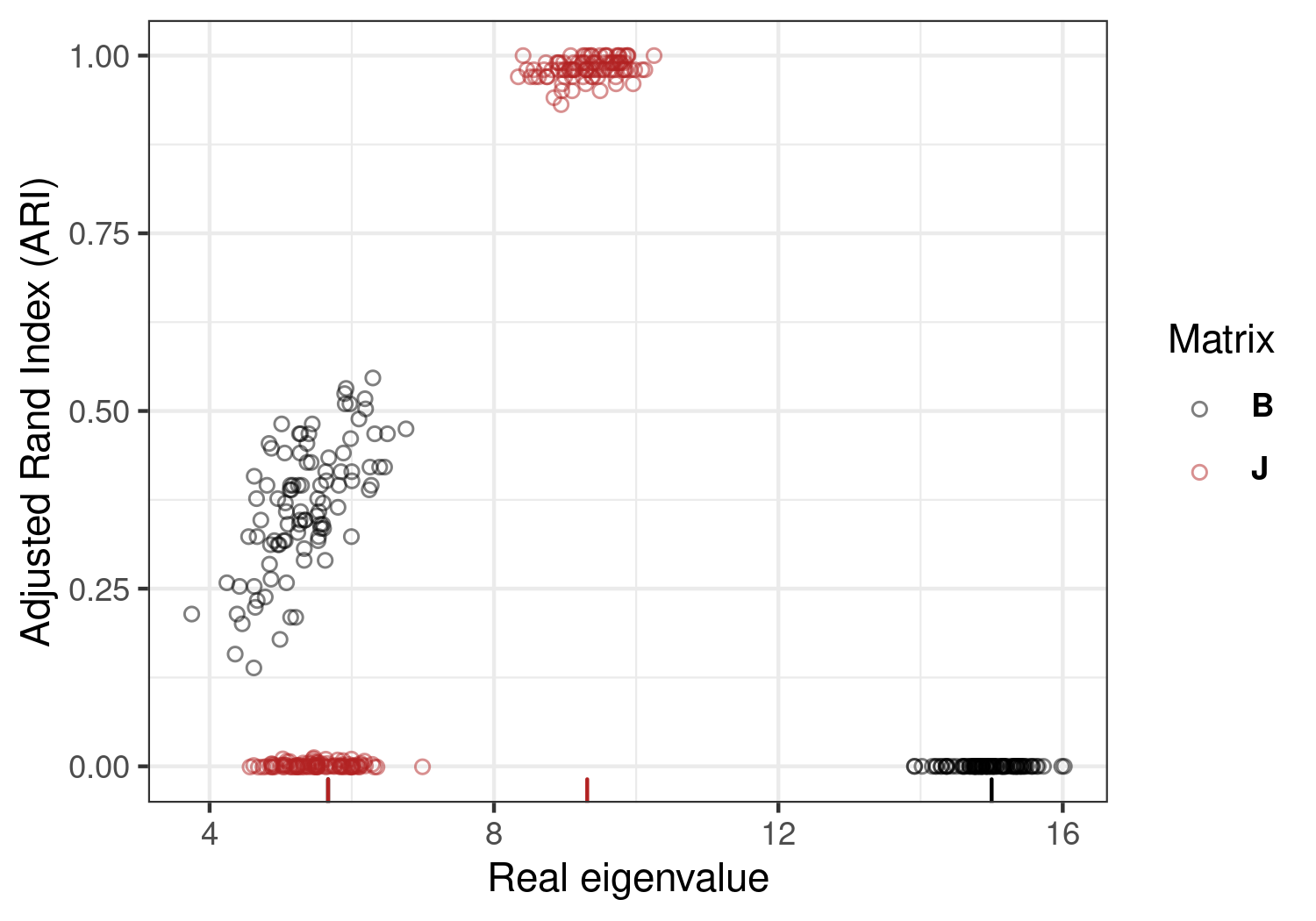}
    \caption{
        \revision{
            Locations of the two real eigenvalues with largest magnitude for the nonbacktracking matrix $\mB$ and the belief-propagation Jacobian matrix $\mJ$ (with true parameters) for 100 realizations of a random hypergraph as described in \cref{sec:parameterized-thresholds}. 
            The parameters are $p_2 = 0.2$, $p_3 = 0.9$, and $c_2 = c_3 = 5$, with $200$ nodes in each of the two groups.  
            The vertical axis gives the Adjusted Rand Index of the corresponding signs of the eigenvector against the planted community labels. 
            Ticks on the horizontal axis give the predictions according to \Cref{cor:eigen-expectation-B} and \Cref{cor:jacobian-eig}. 
            The predictions for the second eigenvalues of these two matrices overlap exactly. 
            The real eigenvalue of $\mB$ with largest magnitude is uncorrelated with community structure, while the eigenvalue of second-largest magnitude is correlated. 
            In contrast, the situation is reversed for $\mJ$. 
        }
    }\label{fig:eigenvalue-locations}
\end{figure}

\begin{figure}
    \centering
    \includegraphics[width=0.7\textwidth]{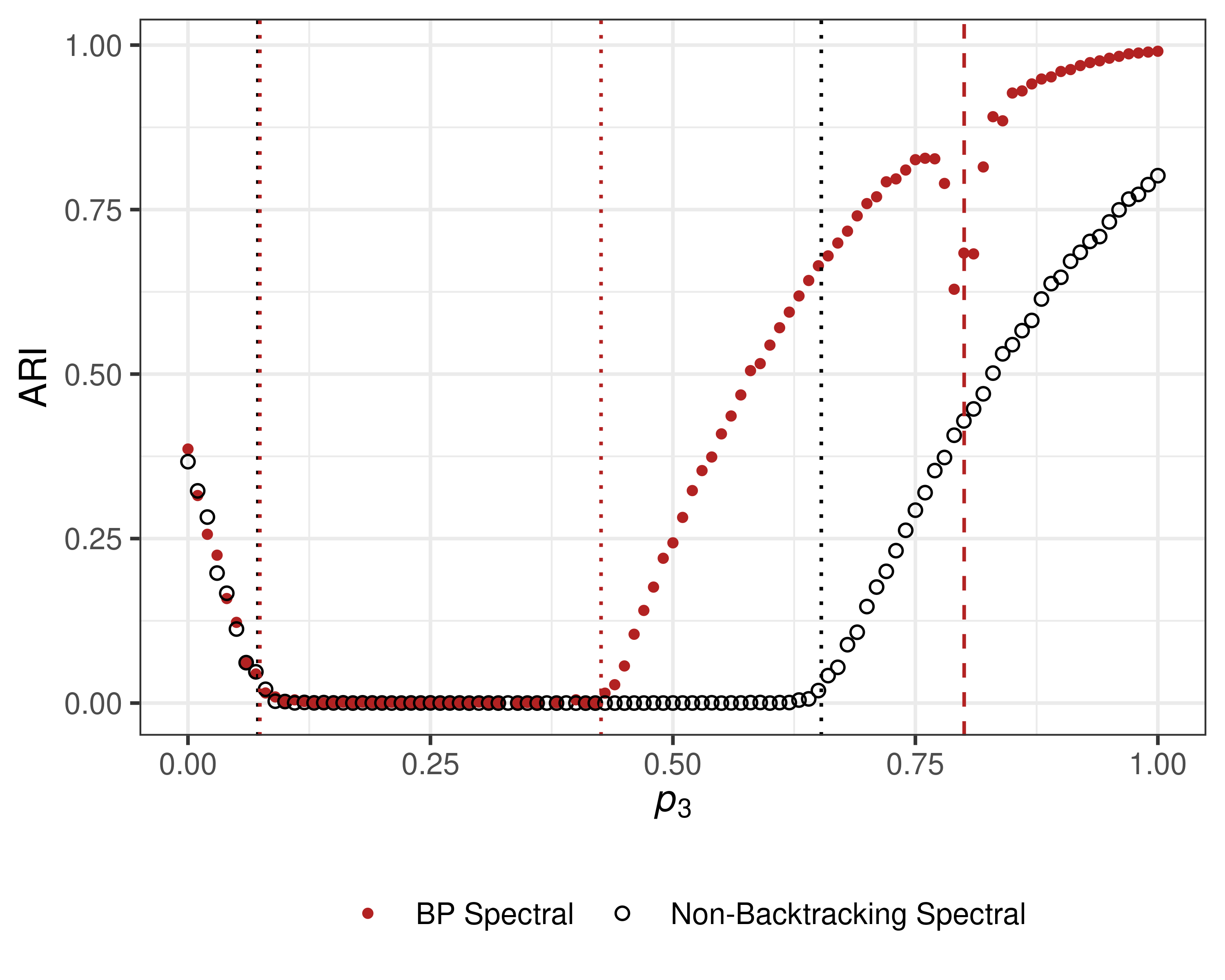}
    \caption{
        \revision{Binary detection experiment similar to those in \cref{fig:heatmaps}, with $p_2 = 0.35$ and varying $p_3$. 
        In this experiment, $c_2 = c_3 = 5$ and $c_4 = 0$, with $n = 10,000$ nodes. 
        We show performance as measured by ARI for both \ouralg{} and \oursimplealg{}. 
        Dotted lines give detectability thresholds for each algorithm (the two lower thresholds nearly overlap) from \Cref{conj:jacobian-threshold}. 
        The dashed line gives the estimated location collision of the two eigenvalues described by \Cref{cor:jacobian-eig}. 
        }}
        \label{fig:large-hypergraph-1d}
\end{figure}

\revision{
    We also include \Cref{fig:contact-high-school} below, which replicates \Cref{fig:contact-primary-school} for the \texttt{contact-high-school} data set. 
}

\begin{figure}
    \includegraphics[width=\textwidth]{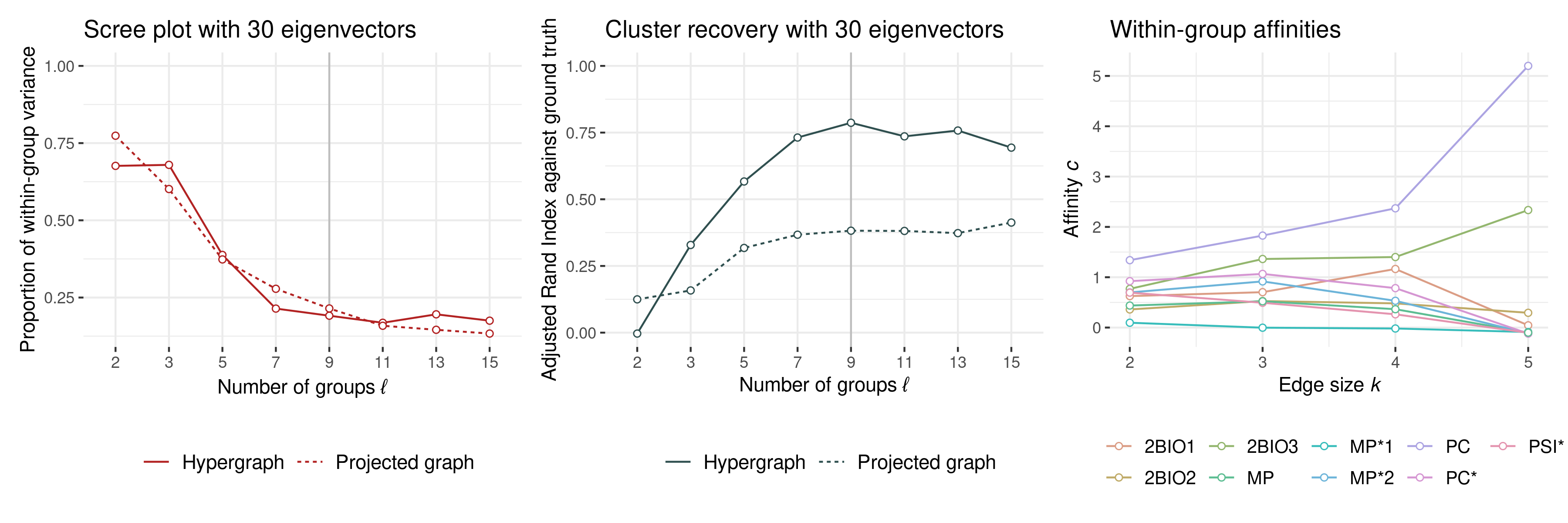}
    \caption{
        Experiment on the \datahighschool{} data set~\cite{mastrandreaContactPatternsHigh2015,bensonSimplicialClosureHigherorder2018}, analogous to the experiment shown in \Cref{fig:contact-primary-school}.
        There are $n = 327$ nodes and $m = 7,818$ hyperedges. 
        We ran \ouralg{} on the data for 10 rounds, using the 30 eigenvectors of the belief-propagation Jacobian with real eigenvalues of greatest magnitude and with a varying number of clusters to be estimated. 
        In each round, we update the estimate of the labels $\hat{\labelvec}$ by choosing the best of 20 runs of $k$-means according to the within-group sum-of-squares objective. 
        We repeat this experiment on the projected (clique-expansion) graph. 
        (Left): scree plot of the mean within-group sum-of-squares obtained by the $k$-means step as a function of the number of groups to be estimated. 
        The vertical grey line gives the true number of labels in the data. 
        (Center): Adjusted Rand Index of the clustering with lowest $k$-means objective against ground truth. 
        (Right): The diagonal entries of the matrix $\mC_k$ for varying edge size $k$. 
        }
        \label{fig:contact-high-school}
\end{figure}

\end{document}